\newcommand{\R}{\mathbb{R}}
\newcommand{\Z}{\mathbb{Z}}
\newcommand{\V}{\mathbb{V}}
\newcommand{\W}{\mathbb{W}}
\newcommand{\E}{\mathbb{E}}
\newcommand{\ri}{\mathcal{R}}
\newcommand{\rin}{\mathcal{R}^{\infty}}
\newcommand{\eps}{\varepsilon}
\newcommand{\Cech}{\v{C}ech\xspace}
\newcommand{\cech}{\mathcal{C}}
\newcommand{\ux}{\mathcal{X}}
\newcommand{\field}{\mathcal{F}}
\newcommand{\image}{\mathrm{im}}
\newcommand{\ch}{\mathcal{C}}
\newcommand{\simplicialmap}{\varphi}
\newcommand{\chainmap}{\phi}
\newcommand{\linearmap}{\lambda}
\newcommand{\ignore}[1]{}
\newcommand{\tg}{\tilde{g}}
\newcommand{\lin}{L_\infty}
\newcommand{\dmn}{diam_\infty}
\DeclarePairedDelimiter{\ceil}{\lceil}{\rceil}
\newtheorem{lemma}{Lemma}
\numberwithin{lemma}{section}
\newtheorem{definition}[lemma]{Definition}
\newtheorem{theorem}[lemma]{Theorem}
\begin{document}

\title{Improved Approximate Rips Filtrations with \\Shifted Integer Lattices
and Cubical Complexes
}

\author{
Aruni Choudhary\footnote{Institut f\"ur Informatik, Freie Universit\"at Berlin, Berlin, Germany \texttt{(aruni.choudhary@fu-berlin.de)}}
\and 
Michael Kerber\footnote{Graz University of Technology, 
Graz, Austria\texttt{(kerber@tugraz.at)}} 
\and
Sharath Raghvendra\footnote{Virginia Tech, 
Blacksburg, USA \texttt{(sharathr@vt.edu)}}
}

\maketitle

\begin{abstract}
Rips complexes are important structures for analyzing topological features of 
metric spaces. 
Unfortunately, generating these complexes is expensive 
because of a combinatorial explosion in the complex size.
For $n$ points in $\mathbb{R}^d$,
we present a scheme to construct a $2$-approximation of the
filtration of the Rips complex in the $L_\infty$-norm, 
which extends to a $2d^{0.25}$-approximation in the Euclidean case.
The $k$-skeleton of the resulting approximation 
has a total size of $n2^{O(d\log k +d)}$. 
The scheme is based on the integer lattice 
and simplicial complexes based on the barycentric subdivision of the $d$-cube.

We extend our result to use cubical complexes in place of simplicial complexes
by introducing \emph{cubical maps} between complexes.
We get the same approximation guarantee as the simplicial case,
while reducing the total size of the approximation 
to only $n2^{O(d)}$ (cubical) cells.

There are two novel techniques that we use in this paper.
The first is the use of \emph{acyclic carriers} for proving 
our approximation result.
In our application, these are maps which relate the Rips complex and the 
approximation in a relatively simple manner and greatly reduce the complexity 
of showing the approximation guarantee.
The second technique is what we refer to as \emph{scale balancing}, which is 
a simple trick to improve the approximation ratio under certain conditions.

\end{abstract}

%%%%
\section{Introduction}
\label{section:intro}

\paragraph{Context.}

\emph{Persistent homology}~\cite{carlsson-survey,eh-book,elz-topological} is a 
technique to analyze data sets using topological invariants. 
The idea is to build a multi-scale representation of data sets
and to track its homological changes across the scales.

A standard construction for the important case of point clouds in
Euclidean space is the \emph{Vietoris-Rips complex} 
(usually abbreviated as simply the \emph{Rips complex}): 
for a scale parameter $\alpha\ge 0$, it is 
the collection of all subsets of points with diameter at most $\alpha$.
When $\alpha$ increases from $0$ to $\infty$, the Rips complexes form
a \emph{filtration}, an increasing sequence of nested simplicial complexes
whose homological changes can be computed and represented in terms of a 
\emph{barcode}.

The computational drawback of Rips complexes is their sheer size:
the $k$-skeleton of a Rips complex (that is, where only subsets of size 
at most $k+1$
are considered) for $n$ points consists of $\Theta(n^{k+1})$ simplices
because every $(k+1)$-subset joins the complex for a sufficiently large
scale parameter. 
This size bound makes barcode computations for large point clouds
infeasible even for low-dimensional homological features\footnote{An exception
are point clouds in $\R^2$ and $\R^3$, for which \emph{alpha complexes}~\cite{eh-book}
are an efficient alternative.}.
This difficulty motivates the question of what we can say
about the barcode of the Rips filtration 
without explicitly constructing all of its simplices.

We address this question using approximation techniques. 
The space of barcodes forms a metric space: 
two barcodes are close if similiar 
homological features occur on roughly the same range of scales.
More precisely, the bottleneck distance is used as a distance
metric between barcodes.
The first approximation scheme by~\cite{sheehy-rips}
constructs a $(1+\eps)$-approximation of the $k$-skeleton of the 
Rips filtration
using only $n(\frac{1}{\eps})^{O(\lambda k)}$ simplices for arbitrary finite
metric spaces, 
where $\lambda$ is the doubling dimension of the metric.
Further approximation techniques for Rips complexes~\cite{dfw-gic} 
and the closely related 
\emph{\Cech complexes}~\cite{bs-approximating,cjs-cech,ks-wssd} have 
been derived subsequently, all with comparable size bounds.
More recently, we constructed an approximation scheme~\cite{ckr-digital} for the 
\Cech filtrations of $n$ points in $\R^d$
that had size $n\left(\frac{1}{\eps}\right)^{O(d)}2^{O(d\log d +dk)}$
for the $k$-skeleton, improving the size bound from previous work.

In~\cite{ckr-polynomial-dcg}, we constructed an approximation scheme for 
Rips filtration in Euclidean space that yields a worse approximation factor 
of only $O(d)$, but uses only $n2^{O(d\log k +d)}$ simplices.
In~\cite{ckr-polynomial-dcg}, we also show a lower bound result
on the size of approximations: for any $\eps < 1/\log^{1+c} n$ with 
some constant $c\in (0,1)$, any $\eps$-approximate filtration has size 
$n^{\Omega(\log \log n)}$.

There has also been work on using cubical complexes to compute persistent
homology, such as in~\cite{wcv-cubical}.
Cubical complexes are typically smaller than their simplicial counterparts,
simply because they avoid triangulations. 
However, to our knowledge, there has been no attempt to utilize them
in computing approximations of filtrations.
Also, while there are efficient methods to compute persistence for simplicial
complexes connected with simplicial maps~\cite{dfw-gic,ks-twr}, we are not
aware of such counterparts for cubical complexes.

\paragraph{Our contributions.}

For the Rips filtration of $n$ points in $\R^d$ with distances taken in 
the $L_\infty$-norm, we present a $2$-approximation 
whose $k$-skeleton has size at most 
\[
n6^{d-1}(2k+4)(k+3)! \left\{\begin{array}{c}d\\k+2\end{array}\right\}
=n2^{O(d\log k + d)}
\]
where
$ \left\{\begin{array}{c}a\\b\end{array}\right\}$ denotes Stirling
numbers of the second kind.
This translates to a $2d^{0.25}$-approximation of the 
Rips filtration in the Euclidean metric and hence improves the asymptotic
approximation quality of our previous approach~\cite{ckr-polynomial-dcg} 
with the same size bound.
Our scheme gives the best size guarantee over all previous
approaches. 

On a high level, our approach follows a straightforward approximation scheme:
given a scaled and appropriately shifted integer grid on $\R^d$, 
we identify those grid points that are close to the input points and 
build an approximation complex using these grid points.
The challenge lies in how to connect these grid points to a simplicial complex
such that close-by grid points are connected,
while avoiding too many connections to keep the size small. 
Our approach first selects a set of \emph{active faces}
in the cubical complex defined over the grid, and defines the approximation
complex using the barycentric subdivision of this cubical complex.

We also describe an output-sensitive algorithm to compute our approximation.
By randomizing the aforementioned shifts of the grids, we obtain a
worst-case running time of $n2^{O(d)}\log\Delta+2^{O(d)}M$ in expectation,
where $\Delta$ is the \emph{spread} of the point set (that is, the ratio
of the diameter to the closest distance of two points)
and $M$ is the size of the approximation.

Additionally, this paper makes the following technical contributions:

\begin{itemize}
\item 

We follow the standard approach of defining a sequence
of approximation complexes
and establishing an \emph{interleaving} between the Rips filtration 
and the approximation.
We realize our interleaving using \emph{chain maps} connecting 
a Rips complex at scale $\alpha$ to an approximation complex at scale $c\alpha$, 
and vice versa, with $c\geq 1$ being the approximation factor. 
Previous approaches~\cite{ckr-polynomial-dcg,dfw-gic,sheehy-rips} used
\emph{simplicial maps} for the interleaving, which induce an elementary
form of chain maps and are therefore more restrictive.

The explicit construction of such maps can be a non-trivial task. 
The novelty of our approach is that we avoid this construction 
by the usage of \emph{acyclic carriers}~\cite{munkres}.
In short, carriers are maps that assign
subcomplexes to subcomplexes under some mild extra conditions.
While they are more flexible, they still certify the existence of 
suitable chain maps, as we exemplify in Section~\ref{section:preliminaries}.
We believe that this technique is of general interest for the construction
of approximations of cell complexes.

\item We exploit a simple trick that we call \emph{scale balancing} to 
improve the quality of approximation schemes. 
In short, if the aforementioned interleaving maps from and to the 
Rips filtration do not increase the scale parameter by the same amount, 
one can simply multiply the scale parameter of the approximation by a constant. 
Concretely, given maps
\[
\phi_\alpha:\ri_\alpha\rightarrow \mathcal{X}_\alpha\qquad 
\psi_\alpha:\mathcal{X}_\alpha\rightarrow \ri_{c\alpha}
\]
interleaving the Rips complex $\ri_\alpha$ and the approximation 
complex $\mathcal{X}_\alpha$,
we can define $\mathcal{X}'_\alpha:=\mathcal{X}_{\alpha/\sqrt{c}}$ and obtain maps
\[\phi'_\alpha:\ri_\alpha\rightarrow \mathcal{X}'_{\sqrt{c}\alpha}\qquad 
\psi_\alpha:\mathcal{X}'_\alpha\rightarrow \ri_{\sqrt{c}\alpha}\]
which improves the interleaving from $c$ to $\sqrt{c}$.
While it has been observed that the same trick can be used for improving
the worst-case distance between Rips 
and \Cech filtrations\footnote{Ulrich Bauer, private communication}, 
our work seems to be the first to make use of it in
the context of approximations.

\item We extend our approximation scheme to use cubical complexes
instead of simplicial complexes, thereby achieving a marked reduction 
in size complexity.
To connect the cubical complexes at different scales, 
we introduce the notion of \emph{cubical maps}, which is
a simple extension of simplicial maps to the cubical case.
While we do not know of an algorithm that can compute persistence for the case
of cubical complexes with cubical maps, we believe that this is a first
step towards advocating the use of cubical complexes as approximating
structures.

\end{itemize}

Our technique can be combined with dimension reduction techniques 
in the same way as in~\cite{ckr-polynomial-dcg}
(see Theorems 19, 21, and 22 therein), with improved logarithmic factors.
We state the main results in the paper, while omitting the 
technical details.

\paragraph{Updates from the conference version.}

An earlier version of this paper appeared at the 25th European Symposium
on Algorithms~\cite{ckr-barycentric}.
In that version, we achieved a $3\sqrt{2}$-approximation of the $L_\infty$
Rips filtration and correspondingly, a $3\sqrt{2}d^{0.25}$-approximation
of the $L_2$ case.
In this version, we improve the weak interleaving of~\cite{ckr-barycentric}
to a strong interleaving to get improved approximation factors.
We expand upon the details of scale balancing, among other proofs
that were missing from the conference version.
We add the case of cubical complexes in this version.

There is a subtle yet important distinction between the approximation complexes
used in the conference version and the current result.
In the conference version, our simplicial complex was built 
using only \textit{active} faces, while the current version
uses both \textit{active} and \textit{secondary} faces 
(please see Section~\ref{section:simplicial_scheme} for definitions).
This makes it easier to relate the simplicial and the cubical
complexes in the current version.
On the other hand the complexes are different, hence the associated proofs
have been adapted accordingly.

\paragraph{Outline.}

We start by explaining the relevant topological concepts in 
Section~\ref{section:preliminaries}.
We give details of the integer grids that we use in 
Section~\ref{section:shifted_grids}.
In Section~\ref{section:simplicial_scheme} we present our approximation 
scheme that uses the barycentric subdivision, and present the computational
aspects in Section~\ref{section:bary_compute}.
The extension to cubical complexes is presented in 
Section~\ref{section:cubical_scheme}.
We discuss practical aspects of our scheme
and conclude in Section~\ref{section:conclusion}.
Some details of the strong interleaving from 
Section~\ref{section:simplicial_scheme} are deferred to 
Appendix~\ref{subsection:appendix-strong}.

\section{Preliminaries}
\label{section:preliminaries}
We briefly review the essential topological concepts needed.
More details are available in standard references
\cite{bss-metrics,ccggo-proximity,eh-book,hatcher,munkres}.

\paragraph{Simplicial complexes.}

A \emph{simplicial complex} $K$ on a finite set of elements $S$ 
is a collection of subsets $\{\sigma\subseteq S\}$ called \emph{simplices} 
such that each subset $\tau\subset\sigma$ is also in $K$.
The dimension of a simplex $\sigma\in K$ is $k:=|\sigma|-1$, 
in which case $\sigma$ is called a \emph{$k$-simplex}.
A simplex $\tau$ is a \emph{sub-simplex} of $\sigma$ if $\tau\subseteq\sigma$. 
We remark that, commonly a sub-simplex is called a ``face'' of a simplex,
but we reserve the word ``face'' for a different structure.
For the same reason, we do not introduce the common notation of
of ``vertices'' and ``edges'' of simplicial complexes, but rather refer
to $0$- and $1$-simplices throughout.
The \emph{$k$-skeleton} of $K$ consists of
all simplices of $K$ whose dimension is at most $k$.
For instance, the $1$-skeleton of $K$ is a graph 
defined by its $0$-simplices and $1$-simplices.

Given a point set $P\subset\R^d$ and a real number $\alpha\ge 0$,
the \emph{(Vietoris-)Rips} complex on $P$ at scale $\alpha$ consists of all
simplices $\sigma=(p_0,\ldots,p_k)\subseteq P$ such that $diam(\sigma)\le \alpha$,
where $diam$ denotes the diameter.
In this work, we write $\ri_\alpha$ for the Rips complex at scale $2\alpha$ 
with the Euclidean metric,
and $\rin_\alpha$ when using the metric of the $L_\infty$-norm.
In either way, a Rips complex is an example of a \emph{flag complex},
which means that whenever a set $\{p_0,\ldots,p_k\}\subseteq P$ has the property
that every $1$-simplex $\{p_i,p_j\}$ is in the complex, then the
$k$-simplex $\{p_0,\ldots,p_k\}$ is also in the complex.

A related complex is the \emph{\Cech complex} of $P$ at scale $\alpha$, which consists
of simplices of $P$ for which the radius of the minimum enclosing ball
is at most $\alpha$.
We do not study \Cech complexes in this paper, but we mention them briefly
while showing a connection with the Rips complex later in this section. 

A simplicial complex $K'$ is a \emph{subcomplex} of $K$ if $K'\subseteq K$.
For instance, $\ri_{\alpha}$ is a subcomplex of $\ri_{\alpha'}$ for 
$0\le\alpha\leq\alpha'$. 
Let $L$ be a simplicial complex.
Let $\hat{\simplicialmap}$ be a map which assigns a vertex of $L$
to each vertex of $K$.
A \emph{simplicial map} is a map $\simplicialmap:K\rightarrow L$  
induced by a vertex map $\hat{\simplicialmap}$, such that 
for every simplex $\{p_0,\ldots,p_k\}$ in $K$, the set 
$\{\hat{\simplicialmap}(p_0),\ldots,\hat{\simplicialmap}(p_k)\}$
is a simplex of $L$. 
For $K'$ a subcomplex of $K$, the inclusion map $inc:K'\rightarrow K$
is an example of a simplicial map. 
A simplicial map is completely determined
by its action on the $0$-simplices of $K$.

\paragraph{Chain complexes.}

A \emph{chain complex} $\ch_\ast=(\ch_p,\partial_p)$ with $p\in\Z$ is a collection 
of abelian groups $\ch_p$ and homomorphisms $\partial_p:\ch_p\rightarrow \ch_{p-1}$ 
such that $\partial_{p-1}\circ\partial_{p}=0$.
A simplicial complex $K$ gives rise to a chain complex $\ch_\ast(K)$ 
for a fixed base field $\mathcal{F}$:
define $\ch_p$ for $p\geq 0$ as the set of formal linear combinations of 
$p$-simplices in $K$ over $\mathcal{F}$, and $\ch_{-1}:=\mathcal{F}$.
The boundary of a $k$-simplex with $k\geq 1$ is the (signed) sum of its sub-simplices
of co-dimension one\footnote{To avoid thinking about orientations, it is often assumed
	that $\mathcal{F}=\Z_2$ is the field with two elements.}; 
the boundary of a $0$-simplex is simply set to $1$.
The homomorphisms $\partial_p$ are then defined as the linear extensions
of this boundary operator.
Note that $\ch_\ast(K)$ is sometimes called \emph{augmented chain complex}
of $K$, where the augmentation refers to the addition of the non-trivial group $\ch_{-1}$.

A \emph{chain map} $\chainmap:\ch_\ast\rightarrow D_\ast$ between
chain complexes $\ch_\ast=(\ch_p,\partial_p)$ and $D_\ast=(D_p,\partial'_p)$
is a collection of group homomorphisms $\chainmap_p:\ch_p\rightarrow D_p$
such that $\chainmap_{p-1}\circ\partial_{p}=\partial'_{p}\circ\chainmap_{p}$.
For simplicial complexes $K$ and $L$, we call a chain map
$\chainmap:\ch_\ast(K)\rightarrow\ch_\ast(L)$ \emph{augmentation-preserving}
if $\chainmap_{-1}$ is the identity.
A simplicial map $\simplicialmap:K\rightarrow L$ between simplicial complexes induces 
an augmentation-preserving chain map $\bar{\simplicialmap}:\ch_\ast(K)\rightarrow\ch_\ast(L)$ 
between the corresponding chain complexes.
This construction is \emph{functorial}, meaning that for 
$\simplicialmap$ the identity function on a simplicial complex $K$, 
$\bar{\simplicialmap}$ is the identity function on $\ch_\ast(K)$,
and for composable simplicial maps $\simplicialmap,\simplicialmap'$,
we have that 
$\overline{\simplicialmap\circ\simplicialmap'}=
\bar{\simplicialmap}\circ\bar{\simplicialmap'}$.

\paragraph{Homology.}

The \emph{$p$-th homology group} $H_p(\ch_\ast)$ of a
chain complex is defined 
as $\mathrm{ker}\,\partial_p/\mathrm{im}\,\partial_{p+1}$.
The $p$-th homology group of a simplicial complex $K$, $H_p(K)$, is the 
$p$-th homology group of its induced chain complex $\ch_\ast(K)$.
Note that this definition is commonly referred to as \emph{reduced} homology,
but we ignore this distinction and consider reduced homology throughout.
$H_p(\ch_\ast)$ is an $\field$-vector space because we have 
chosen our base ring $\mathcal{F}$ as a field.
Intuitively, when the chain complex is generated from a simplicial complex, 
the dimension of the $p$-th homology group counts the number of 
$p$-dimensional holes in the complex.
We write $H(\ch_\ast)$ for the direct sum of all $H_p(\ch_\ast)$ for $p\geq 0$.

A chain map $\chainmap:\ch_\ast\rightarrow D_\ast$ 
induces a linear map $\chainmap^\ast: H(\ch_\ast)\rightarrow H(D_\ast)$
between the homology groups. 
Again, this construction is functorial, meaning that it maps identity maps to 
identity maps, and it is compatible with compositions.

\paragraph{Acyclic carriers.}

We call a simplicial complex $K$ \emph{acyclic}, 
if $K$ is connected and all 
homology groups $H_p(K)$ are trivial.
For simplicial complexes $K$ and $L$, 
an \emph{acyclic carrier} $\Phi$ is a map that assigns to 
each simplex $\sigma$ in $K$, a non-empty acyclic 
subcomplex $\Phi(\sigma)\subseteq L$, 
and whenever $\tau$ is a sub-simplex 
of $\sigma$, then $\Phi(\tau)\subseteq\Phi(\sigma)$.
We say that a chain $c\in\ch_p(K)$ is \emph{carried} by a subcomplex $K'$, 
if $c$ takes value $0$ except for $p$-simplices in $K'$.
A chain map $\chainmap:\ch_\ast(K)\rightarrow \ch_\ast(L)$ is 
\emph{carried by $\Phi$}, if for each simplex $\sigma\in K$,
$\chainmap(\sigma)$ is carried by $\Phi(\sigma)$.
We state the \emph{acyclic carrier theorem}~\cite[Thm 13.3]{munkres},
adapted to our notation:
\begin{theorem}
\label{theorem:acyclic_carrier}
Let $\Phi:K\rightarrow L$ be an acyclic carrier.
Then,
\begin{itemize}
\item There exists an augmentation-preserving chain map 
$\chainmap:\ch_\ast(K)\rightarrow \ch_\ast(L)$ 
carried by $\Phi$.

\item If two augmentation-preserving chain maps 
$\chainmap_1,\chainmap_2:\ch_\ast(K)\rightarrow \ch_\ast(L)$
are both carried by $\Phi$, then 
$\chainmap_1^\ast=\chainmap_2^\ast$.\footnote{In the language of~\cite{munkres}, this result is stated
as the existence of a \emph{chain homotopy} between $\phi_1$ and $\phi_2$.
As evident from~\cite[Theorem\,~12.4]{munkres}, this implies that
the induced linear maps are the same.
}
\end{itemize}
\end{theorem}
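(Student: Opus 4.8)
The plan is to prove both parts by induction on the dimension of simplices, which is the classical method of acyclic models~\cite{munkres}. For the existence statement I would construct the chain map $\chainmap$ one skeleton at a time. The augmentation-preserving requirement forces $\chainmap_{-1}=\id$. On a $0$-simplex $v$ of $K$, pick any vertex $w\in\Phi(v)$ — possible since $\Phi(v)$ is non-empty — and set $\chainmap_0(v):=w$; then $\chainmap_0(v)$ is carried by $\Phi(v)$ and $\partial\chainmap_0(v)=1=\chainmap_{-1}(\partial v)$. For the inductive step, suppose $\chainmap_q$ has been defined for all $q<p$ with $p\ge1$ so that the chain-map identity holds in those degrees and each $\chainmap_q$ is carried by $\Phi$. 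Given a $p$-simplex $\sigma$, consider $\chainmap_{p-1}(\partial\sigma)\in\ch_{p-1}(L)$. Since $\partial\sigma$ is supported on sub-simplices $\tau\subset\sigma$ and $\Phi(\tau)\subseteq\Phi(\sigma)$ by the carrier axiom, this chain lies in $\ch_{p-1}(\Phi(\sigma))$; moreover it is a cycle, because $\partial\chainmap_{p-1}(\partial\sigma)=\chainmap_{p-2}(\partial\partial\sigma)=0$ by the chain-map identity in degree $p-1$. As $\Phi(\sigma)$ is acyclic, this cycle bounds inside $\Phi(\sigma)$, so there is $d\in\ch_p(\Phi(\sigma))$ with $\partial d=\chainmap_{p-1}(\partial\sigma)$; setting $\chainmap_p(\sigma):=d$ and extending linearly gives a map carried by $\Phi$ that obeys the chain-map identity in degree $p$, completing the induction.

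For the uniqueness-on-homology statement I would show that two augmentation-preserving chain maps $\chainmap_1,\chainmap_2$ carried by $\Phi$ are chain homotopic; since chain-homotopic maps induce the same map on homology, this yields $\chainmap_1^\ast=\chainmap_2^\ast$. Concretely I would build $D_p:\ch_p(K)\to\ch_{p+1}(L)$, carried by $\Phi$, with $\partial D_p+D_{p-1}\partial=(\chainmap_1)_p-(\chainmap_2)_p$, by induction on $p\ge0$ with $D_q:=0$ for $q<0$ (consistent because both maps are $\id$ in degree $-1$). In the inductive step, for a $p$-simplex $\sigma$ set $c:=(\chainmap_1)_p(\sigma)-(\chainmap_2)_p(\sigma)-D_{p-1}(\partial\sigma)$; exactly as before $c$ lies in $\ch_p(\Phi(\sigma))$, and a short computation using the chain-map identities for $\chainmap_1,\chainmap_2$ and the homotopy relation in degree $p-1$ shows $\partial c=0$. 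Acyclicity of $\Phi(\sigma)$ then gives $d\in\ch_{p+1}(\Phi(\sigma))$ with $\partial d=c$, and one sets $D_p(\sigma):=d$. Finally, for any cycle $z$ one has $\chainmap_1(z)-\chainmap_2(z)=\partial D(z)+D(\partial z)=\partial D(z)$, a boundary, so the induced maps agree.

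The step needing genuine care is degree $0$, where the augmentation group $\ch_{-1}$ interacts with the acyclicity hypothesis: here ``acyclic'' must mean that the \emph{reduced} homology vanishes — equivalently, $\Phi(\sigma)$ is non-empty and connected and has trivial $H_p$ for all $p\ge1$ — and it is exactly connectedness that makes the relevant $0$-cycles (those killed by the augmentation) bound inside $\Phi(\sigma)$, which is what drives the lowest-degree cases of both inductions. I would also note that constructing $\chainmap$ and $D$ involves, in general, infinitely many independent choices of bounding chains, so the argument tacitly uses the axiom of choice; for the finite complexes relevant here this is immaterial. Everything else is bookkeeping: checking that the chains produced at each stage stay inside the prescribed carriers and that the defining identities propagate from each degree to the next. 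Orientations play no role, and one may take $\field=\Z_2$ if convenient.
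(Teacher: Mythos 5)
The paper does not prove this theorem: it is quoted directly from Munkres~\cite[Thm.~13.3]{munkres}, with a footnote pointing to~\cite[Thm.~12.4]{munkres} for the passage from chain homotopy to equality of induced maps. So there is no in-paper proof to compare against.

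Your argument is nonetheless a correct and complete proof, and it is precisely the argument one finds in Munkres: an inductive construction of $\chainmap$ skeleton by skeleton, choosing a bounding chain inside $\Phi(\sigma)$ at each step because $\chainmap_{p-1}(\partial\sigma)$ is a cycle carried by the acyclic subcomplex $\Phi(\sigma)$; then an analogous inductive construction of a chain homotopy $D$ between two carried, augmentation-preserving chain maps, again using acyclicity of the carriers at each stage. The details check out: the verification that $\partial c=0$ in the homotopy step uses $\partial\partial=0$ and the degree-$(p-1)$ homotopy identity exactly as you say, and the base case in degree $0$ is where the augmentation and the vanishing of reduced $H_0$ (equivalently, connectedness of each $\Phi(\sigma)$) do the work. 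You also correctly flag why ``augmentation-preserving'' is essential, which the paper itself remarks on after the statement. In short, right theorem, right proof, same route as the cited source.
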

We remark that ``augmentation-preserving'' is crucial in the statement:
without it, the trivial chain map (that maps everything to $0$)
turns the first statement trivial and easily leads to a counter-example
for the second claim.

\paragraph{Filtrations and towers.}

Let $I\subseteq\R$ be a set of real values which we refer to as \emph{scales}.
A \emph{filtration} is a collection of simplicial complexes
$(K_\alpha)_{\alpha\in I}$ such that $K_\alpha\subseteq K_\alpha'$ 
for all $\alpha\leq\alpha'\in I$. 
For instance, $(\ri_\alpha)_{\alpha\geq 0}$ is a filtration 
which we call the \emph{Rips filtration}.
A \emph{(simplicial) tower} is a sequence $(K_\alpha)_{\alpha\in J}$ 
of simplicial complexes with $J$ being a discrete set 
(for instance $J=\{2^k\mid k\in\Z\}$), together with simplicial maps 
$\simplicialmap_\alpha:K_\alpha\rightarrow K_{\alpha'}$ between
complexes at consecutive scales.
For instance, the Rips filtration can be turned into a tower by restricting 
to a discrete range of scales,
and using the inclusion maps as $\simplicialmap$.
The approximation constructed in this paper will be another example of a tower.

We say that a simplex $\sigma$ is \emph{included} in the tower at scale $\alpha'$,
if $\sigma$ is not in the image of the map
$\simplicialmap_{\alpha}:K_\alpha\rightarrow K_{\alpha'}$,
where $\alpha$ is the scale preceding $\alpha'$ in the tower.
The \emph{size} of a tower is the number of simplices included over all scales.
If a tower arises from a filtration, its size is simply the size of the 
largest complex in the filtration (or infinite, if no such complex exists).
However, this is not true in general for
simplicial towers, because simplices can collapse in the tower and the size of the
complex at a given scale may not take into account the collapsed simplices
which were included at earlier scales in the tower.

\paragraph{Barcodes and Interleavings.}

A collection of vector spaces $(V_\alpha)_{\alpha\in I}$ connected with linear maps
$\linearmap_{\alpha_1,\alpha_2}:V_{\alpha_1}\rightarrow V_{\alpha_2}$ 
is called a \emph{persistence module}, if $\linearmap_{\alpha,\alpha}$ is the 
identity on $V_\alpha$ and 
$\linearmap_{\alpha_2,\alpha_3}\circ\linearmap_{\alpha_1,\alpha_2}
=\linearmap_{\alpha_1,\alpha_3}$ 
for all $\alpha_1\le\alpha_2\le\alpha_3\in I$ for the index set $I$.

We generate persistence modules using the previous concepts. 
Given a simplicial tower $(K_\alpha)_{\alpha\in I}$,
we generate a sequence of chain complexes $(\ch_\ast(K_\alpha))_{\alpha\in I}$.
By functoriality, the simplicial maps $\simplicialmap$ of the tower 
give rise to chain maps $\overline{\simplicialmap}$ 
between these chain complexes.
Using functoriality of homology, we obtain a sequence 
$(H(K_\alpha))_{\alpha\in I}$ of vector spaces with linear maps 
$\overline{\simplicialmap}^\ast$, forming a persistence module. 
The same construction applies to filtrations as a special case.

Persistence modules admit a decomposition into a collection of 
intervals of the form $[\alpha,\beta]$
(with $\alpha,\beta\in I$), called the \emph{barcode}, subject to certain
tameness conditions.
The barcode of a persistence module characterizes the module uniquely up to 
isomorphism.
If the persistence module is generated by a simplicial complex,
an interval $[\alpha,\beta]$ in the barcode corresponds 
to a homological feature (a ``hole'')
that comes into existence at complex $K_\alpha$ and persists until
it disappears at $K_\beta$. 

Two persistence modules $(V_\alpha)_{\alpha\in I}$ and 
$(W_\alpha)_{\alpha\in I}$ 
with linear maps $\phi_{\cdot,\cdot}$ and $\psi_{\cdot,\cdot}$ are said to be  
\emph{weakly (multiplicatively) $c$-interleaved}
with $c\geq 1$, if 
there exist linear maps $\gamma_\alpha:V_\alpha\rightarrow 
W_{c\alpha}$ and $\delta_\alpha:W_\alpha\rightarrow V_{c\alpha}$,
called \emph{interleaving maps},
such that the diagram
\begin{equation}
\label{diagram:weak_diag}
\xymatrix{
& \cdots\ar[r] & V_{\alpha c} \ar[rd]^{\gamma}\ar[rr]^{\phi} & & 
V_{\alpha c^3} \ar[r] &\cdots
\\
\cdots\ar[r] & W_{\alpha} \ar[rr]^{\psi}\ar[ru]^{\delta} & & 
W_{\alpha c^2}\ar[r] \ar[ru]^{\delta} &\cdots
\\
}
\end{equation}
commutes, that is, 
$\psi=\gamma \circ \delta $ and $\phi= \delta\circ \gamma $ 
for all $\{\dots,\alpha/c^2,\alpha/c,\alpha,c\alpha,\dots \}\in I$
(we have skipped the subscripts of the maps for readability). 
In such a case, the barcodes of the two modules are $3c$-approximations 
of each other in the sense of~\cite{ccggo-proximity}. 
We say that two towers are \emph{$c$-approximations} of each other
if their persistence modules are $c$-approximations. 

Under the more stringent conditions of \emph{strong interleaving}, 
the approximation ratio can be improved. 
Two persistence modules $(V_\alpha)_{\alpha\ge 0}$ and $(W_\alpha)_{\alpha\ge 0}$
with respective linear maps $\phi_{\cdot,\cdot}$ and $\psi_{\cdot,\cdot}$
are said to be \emph{(multiplicatively) strongly $c$-interleaved} 
if there exist a pair of families of 
linear maps $\gamma_\alpha:V_{\alpha}\rightarrow W_{c\alpha}$ and 
$\delta_\alpha:W_{\alpha}\rightarrow V_{c\alpha}$ for $c>0$,
such that Diagram~\eqref{diag:strong_diag} 
commutes for all $0\le \alpha\le \alpha'$ (the subscripts of the maps
are excluded for readability).
In such a case, the persistence barcodes of the two modules are said 
to be $c$-approximations of each other in the sense of~\cite{ccggo-proximity}.

\begin{equation}
\label{diag:strong_diag}
\xymatrix{
	V_{\frac{\alpha}{c}} \ar[rrr]^{\phi} \ar[rd]^{\gamma} & & & V_{c\alpha'}    &  & V_{c\alpha} \ar[r]^{\phi} & V_{c\alpha'} 
	\\
	& W_\alpha \ar[r]^{\psi} & W_{\alpha'} \ar[ru]^{\delta} &                 & W_\alpha \ar[r]^{\psi} \ar[ru]^{\delta} & W_{\alpha'} \ar[ru]^{\delta}
	\\ 
	& V_\alpha \ar[r]^{\phi} & V_{\alpha'} \ar[rd]^{\gamma} &                 & V_\alpha \ar[r]^{\phi} \ar[rd]^{\gamma} & V_{\alpha'} \ar[rd]^{\gamma} 
	\\ 
	W_{\frac{\alpha}{c}} \ar[rrr]^{\psi} \ar[ru]^{\delta} & & & 
	W_{c\alpha'}    &  & W_{c\alpha} \ar[r]^{\psi} & W_{c\alpha'}\\
}
\end{equation}

Finally, we mention a special case that relates
equivalent persistence modules~\cite{cz-computing,handbook}.
Two persistence modules 
$\V=(V_\alpha)_{\alpha\in I}$ and $\W=(W_\alpha)_{\alpha\in I}$
that are connected through linear maps $\phi,\psi$ respectively are 
isomorphic if there exists an isomorphism 
$f_\alpha:V_\alpha\rightarrow W_\alpha$ for
each $\alpha\in I$ for which the following diagram
commutes for all $\alpha\le \beta \in I$:
\begin{eqnarray}
\label{equation:persisistence_equiv}
\xymatrix{
	\dots \ar[r] & V_\alpha \ar[r]^{\phi} \ar[d]^{f_\alpha} 
	& V_\beta \ar[r]  \ar[d]^{f_\beta} & \dots 
	\\
	\dots \ar[r] & W_\alpha \ar[r]^{\psi} & W_\beta \ar[r] & \dots 
}
\end{eqnarray}
Isomorphic persistence modules have identical persistence barcodes.

\paragraph{Scale balancing.}

Let $\V=(V_\alpha)_{\alpha\in I}$ and $\W=(W_\alpha)_{\alpha\in I}$ be two
persistence modules with linear maps $f_v,f_w$, respectively.
Let there be linear maps $\phi:V_{\alpha/\eps_1}\rightarrow W_{\alpha}$
and $\psi:W_{\alpha}\rightarrow V_{\alpha\eps_2}$ for 
$1\le \eps_1,\eps_2$ such that all $\alpha,\alpha/\eps_1,\alpha\eps_2\in I$.
Suppose that the following diagram commutes, for all $\alpha\in I$.
\begin{equation}
\label{diag:balance_orig}
\xymatrix{
& \ldots\ar[r] & W_{\alpha } \ar[rd]^{\psi}\ar[rr]^{f_w} & & 
W_{\alpha \eps_1 \eps_2} \ar[r] &\ldots
\\
\ldots\ar[r] & V_{\alpha/\eps_1} \ar[rr]^{f_v}\ar[ru]^{\phi} & & 
V_{\alpha \eps_2}\ar[r] \ar[ru]^{\psi} &\ldots
\\
}
\end{equation} 
Let $\eps:=max(\eps_1,\eps_2)$.
Then, by replacing $\eps_1,\eps_2$ by $\eps$ in Diagram~\eqref{diag:balance_orig},
the diagram still commutes, so $\V$ is a $3\eps$-approximation of $\W$.

We define a new vector space $V'_{c \alpha}:=V_\alpha$, 
where $c=\sqrt{\frac{\eps_1}{\eps_2}}$ and $c\alpha\in I$.
This gives rise to a new persistence module, $\V'=(V_{c\alpha})_{\alpha\in I}$.
The maps $\phi$ and $\psi$ can then be interpreted as 
$\phi:V'_{\alpha/\sqrt{\eps_1\eps_2}}\rightarrow W_{\alpha}$, or
$\phi:V'_{\alpha}\rightarrow W_{\alpha\sqrt{\eps_1\eps_2}}$
and $\psi:W_{\alpha}\rightarrow V'_{\alpha\sqrt{\eps_1\eps_2}}$.
Then, Diagram~\eqref{diag:balance_orig} can be re-interpreted as 
\begin{equation}
\label{diag:balance_scaled}
\xymatrix{
& \ldots\ar[r] & W_{\alpha \sqrt{\eps_1\eps_2}} \ar[rd]^{\psi}\ar[rr]^{f_w} & & 
W_{\alpha (\eps_1 \eps_2)^{3/2}} \ar[r] &\ldots
\\
\ldots\ar[r] & V_{\alpha'} \ar[rr]^{f_v}\ar[ru]^{\phi} & & 
V'_{\alpha \eps_1\eps_2}\ar[r] \ar[ru]^{\psi} &\ldots
\\
}
\end{equation} 
which still commutes.
Therefore, $\V'$ is a $3\sqrt{\eps_1\eps_2}$-approximation of $\W$, which is an 
improvement over $\V$, since $\sqrt{\eps_1\eps_2}\le max(\eps_1,\eps_2)$.
$\V$ and $\V'$ have the same barcode up to a scaling factor.

This scaling trick also works when $\V$ and $\W$ are strongly interleaved.
If we have the following commutative diagrams: (where we have skipped the maps 
for readability):
\begin{equation}
\label{diag:balance_strong_orig}
\xymatrix@C-1.0pc{
	W_{\alpha} \ar[rrr] \ar[rd] & & & W_{\alpha'\eps_1\eps_2}    &  & W_{\alpha\eps_1} \ar[r] & W_{\alpha'\eps_1} 
	\\
	& V_{\alpha\eps_2} \ar[r] & V_{\alpha'\eps_2} \ar[ru] &  & V_{\alpha} \ar[r]\ar[ru] & V_{\alpha'} \ar[ru]
	\\ 
	& W_{\alpha\eps_1} \ar[r] & W_{\alpha'\eps_1} \ar[rd] & & W_\alpha \ar[r] \ar[rd] & W_{\alpha'} \ar[rd] 
	\\ 
	V_{\alpha} \ar[rrr] \ar[ru] & & & 
	V_{\alpha'\eps_1\eps_2}    &  & V_{\alpha\eps_2} \ar[r] & V_{\alpha'\eps_2}
	\\
}
\end{equation}
then $\V$ and $\W$ are 
$max(\eps_1,\eps_2)$-approximations of each other.
By defining $\V'$ as before, the following diagrams 
\begin{equation}
\label{diag:balance_strong_scaled}
\xymatrix@C-0.0pc@R-0.0pc{
	W_{\alpha} \ar[rrr] \ar[rd] & & & W_{\alpha d^2}    
	&  & W_{\alpha d} \ar[r] & W_{\alpha' d} 
	\\
	& V'_{\alpha d} \ar[r] & V'_{\alpha' d} \ar[ru] &  & 
	V'_{\alpha} \ar[r]\ar[ru] & V'_{\alpha'} \ar[ru]
	\\ 
	& W_{\alpha d} \ar[r] & W_{\alpha' d} \ar[rd] & & 
	W_\alpha \ar[r] \ar[rd] & W_{\alpha'} \ar[rd] 
	\\ 
	V'_{\alpha} \ar[rrr] \ar[ru] & & & 	V'_{\alpha' d^2}    &  & 
	V'_{\alpha d} \ar[r] & V'_{\alpha' d}
	\\
}
\end{equation}
commute for $d=c\eps_2=\sqrt{\eps_1\eps_2}$, so we can improve a 
$\max(\eps_1,\eps_2)$-approximation to an $\sqrt{\eps_1\eps_2}$-approximation.

We end the section by discussing a basic but important relation 
between \Cech and Rips filtrations.
It is well-known that for any $\alpha\ge 0$, 
$\cech_\alpha\subseteq \ri_{\alpha}\subseteq 
\cech_{\sqrt{2}\alpha}$~\cite{eh-book}.
This gives a strong interleaving between the towers 
$(\cech_\alpha)_{\alpha\ge 0}$ and $(\ri_\alpha)_{\alpha\ge 0}$
with $\eps_1=1$ and $\eps_2=\sqrt{2}$.
Applying the scale balancing technique, we get that
\begin{lemma}
\label{lemma:cech-rips-filt}
The scaled \Cech persistence module 
$(H(\cech_{\sqrt[4]{2}\alpha}))_{\alpha\ge 0}$ and 
the Rips persistence module 
$(H(\ri_{\alpha}))_{\alpha\ge 0}$ 
are $\sqrt[4]{2}$-approximations of each other.
\end{lemma}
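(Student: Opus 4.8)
The plan is to apply the scale balancing machinery developed immediately above to the strong interleaving between \Cech and Rips filtrations that comes from the classical nesting $\cech_\alpha\subseteq\ri_\alpha\subseteq\cech_{\sqrt{2}\alpha}$. Concretely, I would first record that the inclusions $\cech_\alpha\hookrightarrow\ri_\alpha$ and $\ri_\alpha\hookrightarrow\cech_{\sqrt 2\alpha}$ are simplicial maps, hence induce augmentation-preserving chain maps and therefore linear maps on homology; these, together with the internal filtration maps of $(\cech_\alpha)$ and $(\ri_\alpha)$, furnish a strong interleaving in the sense of Diagram~\eqref{diag:strong_diag}. Matching this against the setup of the scale-balancing paragraph with $\V=(H(\cech_\alpha))$, $\W=(H(\ri_\alpha))$, $\phi$ the map induced by $\cech_\alpha\hookrightarrow\ri_\alpha$ and $\psi$ the map induced by $\ri_\alpha\hookrightarrow\cech_{\sqrt 2\alpha}$, I read off $\eps_1=1$ (the \Cech-into-Rips inclusion does not increase scale) and $\eps_2=\sqrt 2$ (the Rips-into-\Cech inclusion costs a factor $\sqrt 2$).

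Next I would invoke the strong-interleaving version of scale balancing verbatim: set $c=\sqrt{\eps_1/\eps_2}=2^{-1/4}$ and define the reparametrized module $\V'$ by $V'_{c\alpha}:=V_\alpha$, i.e. $H(\cech_{\sqrt[4]2\,\alpha})$ in scale $\alpha$. The displayed commuting diagrams~\eqref{diag:balance_strong_scaled} then show that $\V'$ and $\W$ are strongly $\sqrt{\eps_1\eps_2}$-interleaved with $\sqrt{\eps_1\eps_2}=\sqrt[4]2$, and by the convention recalled in the preliminaries this means their barcodes are $\sqrt[4]2$-approximations of each other in the sense of~\cite{ccggo-proximity}. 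That is exactly the claimed statement, so the proof is essentially a one-line specialization once the bookkeeping is set up.

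The only genuine content to check is that the maps really fit the strong-interleaving template — i.e. that all the relevant triangles and squares (filtration maps composed with inclusions versus inclusions composed with filtration maps) commute. But this is automatic: all maps in sight are induced by inclusions of simplicial complexes, all the inclusions compose consistently, and functoriality of the chain and homology constructions (stated in the preliminaries) transports commutativity of the evident diagram of simplicial complexes to commutativity on homology. So I do not expect a real obstacle here; the "hard part" is purely notational, namely being careful that the $\eps_1$ accompanying the \Cech-to-Rips direction is the one equal to $1$ and not accidentally swapping the roles of $\eps_1$ and $\eps_2$, which would give $2^{1/4}$ in the wrong place. I would double-check the direction by tracing scales through one instance of Diagram~\eqref{diag:balance_strong_orig}.

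A minor alternative, if one prefers not to quote the scale-balancing paragraph as a black box, is to argue directly: the nesting gives, after the substitution $\alpha\mapsto\sqrt[4]2\,\alpha$ in the \Cech module, inclusions $\cech_{\sqrt[4]2\,\alpha}\subseteq\ri_{\sqrt[4]2\,\alpha}\subseteq\cech_{\sqrt[4]2\,\sqrt2\,\alpha}=\cech_{2^{3/4}\alpha}$ and $\ri_\alpha\subseteq\cech_{\sqrt[4]2\cdot\sqrt[4]2\,\alpha}$ (using $\sqrt2=\sqrt[4]2\cdot\sqrt[4]2$) pre- and post-composed so that every round trip lands a factor $2^{1/4}$ later on each side; reading these as interleaving maps $\gamma,\delta$ between $\V'$ and $\W$ and checking the two families of squares in~\eqref{diag:strong_diag} completes the proof without reference to $\eps_1,\eps_2$ at all. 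Either route is short; I would present the first for brevity and because the scale-balancing paragraph has already done the diagram-chase once and for all.
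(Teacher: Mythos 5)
Your argument is exactly the paper's: invoke the classical nesting $\cech_\alpha\subseteq\ri_\alpha\subseteq\cech_{\sqrt 2\alpha}$, read off a strong interleaving with $\eps_1=1$ and $\eps_2=\sqrt 2$, and apply the scale-balancing paragraph to obtain a $\sqrt{\eps_1\eps_2}=\sqrt[4]{2}$-approximation after the reparametrization $V'_\alpha=H(\cech_{\sqrt[4]{2}\alpha})$. You have correctly identified the roles of $\eps_1$ and $\eps_2$ and the direction of the rescaling, so the proposal is correct and matches the paper's proof.
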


\section{Shifted Integer Lattices}
\label{section:shifted_grids}

In this section, we take a look at simple modifications of the integer lattice.

We denote by $I:=\{\alpha_s:=\lambda 2^s\mid s\in\Z\}$ with $\lambda>0$, 
a discrete set of scales. 
For each scale in $I$, we define grids which are scaled and translated (shifted) 
versions of the integer lattice.

\begin{definition}[scaled and shifted grids]
\label{def:shifted_grids}
For each scale $\alpha_s\in I$, we define the \emph{scaled and shifted grid}
$G_{\alpha_s}$ inductively as:
\begin{itemize}
\item For $s=0$, $G_{\alpha_{s}}$ is simply the scaled integer grid $\lambda\Z^d$, where
each basis vector has been scaled by $\lambda$.

\item For $s\geq 0$, we choose an arbitrary point
$O_{\alpha_{s}}\in G_{\alpha_{s}}$ and define
\begin{eqnarray}
\label{equation:grids_pos}
G_{\alpha_{s+1}} = 2(G_{\alpha_{s}}-O_{\alpha_{s}})+O_{\alpha_{s}}+
\frac{\alpha_s}{2}(\pm 1,\ldots,\pm 1),
\end{eqnarray}
where the signs of the components of the last vector are chosen 
independently and uniformly at random 
(and the choice is independent for each $s$). 

\item For $s\leq 0$,
we define
\begin{eqnarray}
\label{equation:grids_neg}
G_{\alpha_{s-1}} =
\frac{1}{2}(G_{\alpha_{s}}-O_{\alpha_{s}})+O_{\alpha_{s}}+
\frac{\alpha_{s-1}}{2}(\pm 1,\ldots,\pm 1),
\end{eqnarray}
where the last vector is chosen as in the case of $s\ge 0 $.
\end{itemize}	
\end{definition}

Equation~$\eqref{equation:grids_pos}$ 
and Equation~\eqref{equation:grids_neg} are consistent at $s=0$.
A simple example of the above construction is the sequence of grids
with $G_{\alpha_{s}}:=\alpha_s\Z^d$ for even $s$, and 
$G_{\alpha_{s}}:=\alpha_s\Z^d + \frac{\alpha_{s-1}}{2}(1,\ldots,1)$
for odd $s$. 

Next, we motivate the shifting of the grids. 
Let $\mathrm{Vor}_{G_s}(x)$ denote the Voronoi cell of any point $x\in G_s$
with respect to the point set $G_s$.
It is clear that the Voronoi cell 
is a cube of side length $\alpha_s$ centered at $x$. 
The shifting of the grids ensures that
each $x\in G_{\alpha_{s}}$ lies in the Voronoi region of a unique 
$y\in G_{\alpha_{s+1}}$.
Using an elementary calculation, we show a stronger statement:

\begin{lemma}
\label{lemma:vorcontain}
Let $x\in G_{\alpha_{s}}, y\in G_{\alpha_{s+1}}$ be such that $x\in \mathrm{Vor}_{G_{\alpha_{s+1}}}(y)$. 
Then, 
\[
\mathrm{Vor}_{G_{\alpha_{s}}}(x)\subset \mathrm{Vor}_{G_{\alpha_{s+1}}}(y).
\]
\end{lemma}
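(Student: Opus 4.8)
The plan is to reduce the claim to a one-dimensional computation, exploiting the product structure of all the objects involved. Since $G_{\alpha_s}$ is a shifted copy of $\alpha_s \Z^d$, the Voronoi cell $\mathrm{Vor}_{G_{\alpha_s}}(x)$ is the axis-aligned cube $x + [-\alpha_s/2, \alpha_s/2]^d$, and similarly $\mathrm{Vor}_{G_{\alpha_{s+1}}}(y) = y + [-\alpha_{s+1}/2, \alpha_{s+1}/2]^d = y + [-\alpha_s, \alpha_s]^d$ since $\alpha_{s+1} = 2\alpha_s$. Both cubes are products of intervals over the $d$ coordinates, so the containment $\mathrm{Vor}_{G_{\alpha_s}}(x) \subseteq \mathrm{Vor}_{G_{\alpha_{s+1}}}(y)$ holds if and only if it holds coordinate-wise, i.e. $[x_i - \alpha_s/2, x_i + \alpha_s/2] \subseteq [y_i - \alpha_s, y_i + \alpha_s]$ for each $i$, which is equivalent to $|x_i - y_i| \le \alpha_s/2$ for each $i$.

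So the key step is to show that $x \in \mathrm{Vor}_{G_{\alpha_{s+1}}}(y)$ already forces $\|x - y\|_\infty \le \alpha_s/2$, which is strictly smaller than the bound $\|x-y\|_\infty \le \alpha_s$ that membership in the Voronoi cell gives for free. This is where the specific form of the shift in Equation~\eqref{equation:grids_pos} enters. Writing $G_{\alpha_{s+1}} = 2(G_{\alpha_s} - O_{\alpha_s}) + O_{\alpha_s} + \frac{\alpha_s}{2}(\pm 1,\ldots,\pm 1)$, every point of $G_{\alpha_{s+1}}$ has each coordinate lying in $O_{\alpha_s,i} + \alpha_s \Z \pm \frac{\alpha_s}{2}$, i.e. of the form $O_{\alpha_s,i} + \frac{\alpha_s}{2} + \alpha_s \cdot (\text{integer})$, while every point of $G_{\alpha_s}$ has each coordinate of the form $O_{\alpha_s,i} + \frac{\alpha_s}{2}\cdot(\text{integer})$ — wait, more precisely $O_{\alpha_s,i} + \alpha_s \cdot m$ for $m \in \Z$ (shifting by a lattice point of $\alpha_s\Z^d$ keeps it in $\alpha_s\Z^d$). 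I would work out that, fixing a coordinate $i$, the value $x_i - O_{\alpha_s,i}$ lies in $\alpha_s\Z$ while $y_i - O_{\alpha_s,i}$ lies in $\frac{\alpha_s}{2} + \alpha_s\Z$; hence $x_i - y_i \in \frac{\alpha_s}{2} + \alpha_s\Z$, so $|x_i - y_i|$ is an odd multiple of $\alpha_s/2$. Combined with the Voronoi bound $|x_i - y_i| \le \alpha_s$, the only possibility is $|x_i - y_i| = \alpha_s/2$, which gives exactly what is needed. (The ``$\le$'' here is actually forced to be ``$=$'', and one should note the tie-breaking convention for Voronoi cells on the boundary is what makes $y$ unique, which the paragraph preceding the lemma already asserts.)

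Assembling: for each coordinate $i$ we get $|x_i - y_i| = \alpha_s/2$, hence $[x_i - \alpha_s/2, x_i + \alpha_s/2] \subseteq [y_i - \alpha_s, y_i + \alpha_s]$ with the inclusion being an endpoint-touching one; taking the product over $i$ yields $\mathrm{Vor}_{G_{\alpha_s}}(x) \subseteq \mathrm{Vor}_{G_{\alpha_{s+1}}}(y)$. To get the strict containment $\subset$ claimed in the statement, I note the two cubes have different side lengths ($\alpha_s$ versus $2\alpha_s$), so the smaller cannot equal the larger; alternatively the interiors are nested properly in at least the sense that the big cube is strictly bigger in volume. I would double-check whether the authors intend $\subset$ to mean proper containment or just $\subseteq$; either reading is immediate once the coordinate-wise bound is in hand.

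The main obstacle, such as it is, is purely bookkeeping: carefully tracking the coset structure of the coordinates of $G_{\alpha_s}$ and $G_{\alpha_{s+1}}$ relative to the base point $O_{\alpha_s}$, so that the parity argument ``$|x_i - y_i|$ is an odd multiple of $\alpha_s/2$ and at most $\alpha_s$, hence equals $\alpha_s/2$'' goes through cleanly. There is no real topological or geometric difficulty — the randomness of the sign choices is irrelevant to this lemma (any fixed choice of signs works), and the inductive definition of the grids only matters insofar as it pins down that consecutive grids differ by the scaling-by-2-plus-half-integer-shift pattern. One should also handle the $s < 0$ case, but by symmetry of Equation~\eqref{equation:grids_neg} (it is the same relation with the roles of $s$ and $s+1$ swapped) the identical argument applies, or one simply observes that the relation between $G_{\alpha_{s-1}}$ and $G_{\alpha_s}$ in Equation~\eqref{equation:grids_neg} is exactly Equation~\eqref{equation:grids_pos} with indices shifted.
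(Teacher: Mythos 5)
Your proof is correct and follows essentially the same reasoning as the paper's: reduce to a coordinate-wise containment of intervals and use the half-lattice offset between consecutive grids to force $|x_i - y_i| = \alpha_s/2$ exactly. The only difference is presentational --- the paper normalizes to $\alpha_s=2$, $x$ the origin, and a fixed sign vector $(1,\ldots,1)$, then reads off both Voronoi cubes as $[-1,1]^d\subset[-1,3]^d$, whereas you keep things abstract via the coset argument and thereby handle all sign choices uniformly rather than by the paper's ``analogous for any other translation vector'' remark.
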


\begin{proof}
Without loss of generality, we can assume that $\alpha_s=2$ and $x$ is the 
origin, using an appropriate translation and scaling. 
Also, we assume for the sake of simplicity that 
$G_{\alpha_{s+1}}=2G_{\alpha_{s}} + (1,\ldots,1)$;
the proof is analogous for any other translation vector.
In that case, it is clear that $y=(1,\ldots,1)$. 
Since $G_{\alpha_{s}}=2\Z^d$, the Voronoi region of $x$ is the set $[-1,1]^d$.
Since $G_{\alpha_{s+1}}$ is a translated version of $4\Z^d$,
the Voronoi region of $y$ is the cube $[-1,3]^d$, which covers $[-1,1]^d$.
The claim follows.
For an example look to Figure~\ref{figure:nested_grids}.
\end{proof}

\begin{figure}[h]
\centering
\includegraphics[width=0.6\textwidth]{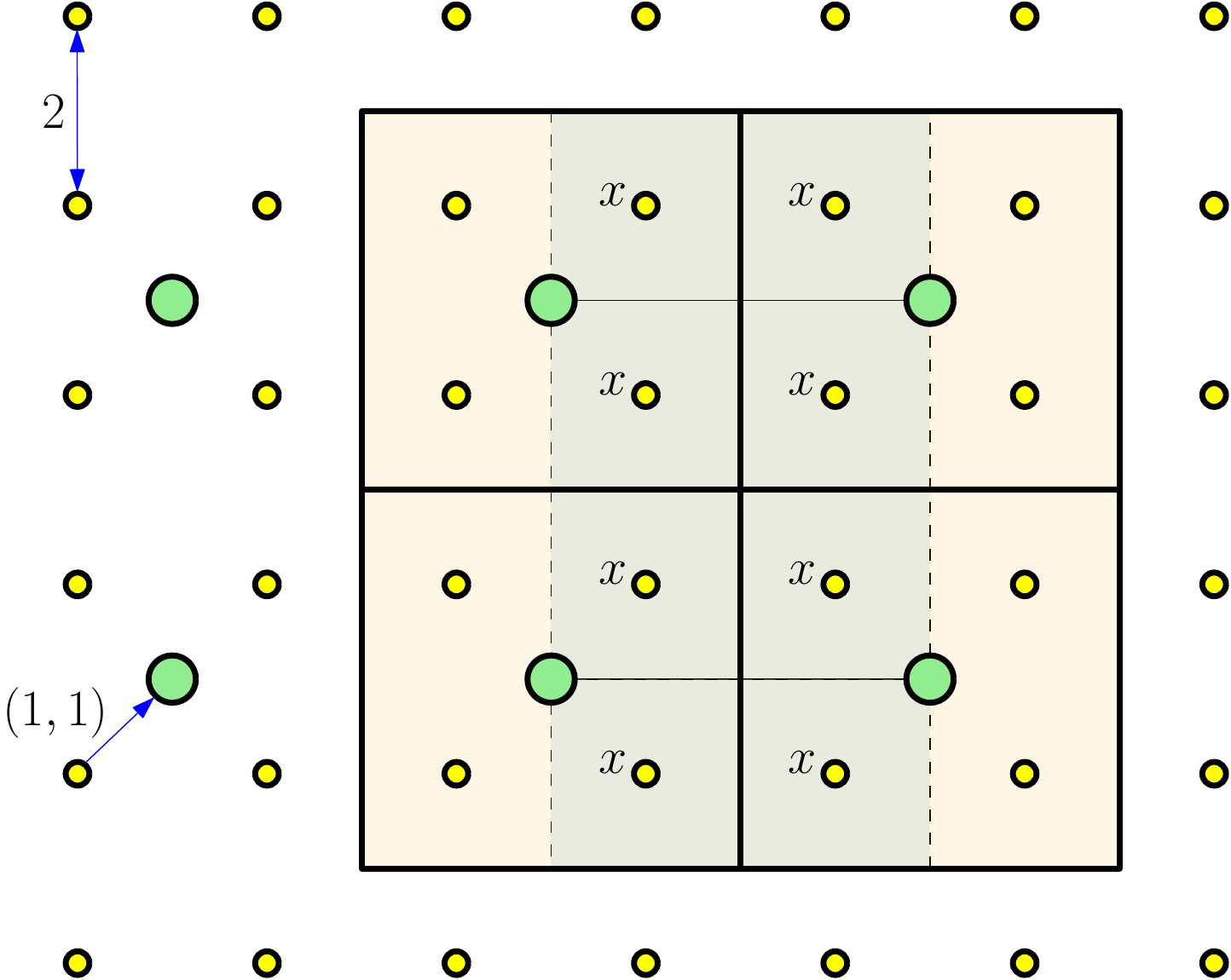}
\caption{$G_{\alpha_{s}}$ is represented by small disks (yellow), 
while $G_{\alpha_{s+1}}$ is represented by larger disks (green).
Possible locations of $x$ are indicated with their Voronoi regions. 
The Voronoi regions of the larged grid contain those of $x$.}
\label{figure:nested_grids}
\end{figure}

\subsection{Cubical complex of $\Z^d$}
\label{subsection:cubical-basic}

The integer grid $\Z^d$ naturally defines a cubical complex,
where each element is an axis-aligned, $k$-dimensional cube with $0\leq k\leq d$.
To define it formally,
let $\square$ denote the set of all integer translates
of faces of the unit cube $[0,1]^d$, considered as a convex polytope in $\R^d$.
We call the elements of $\square$ \emph{faces of $\Z^d$}. 

Each face has a dimension $k$; the $0$-faces,
or \emph{vertices} are exactly the points in $\Z^d$. 
The \emph{facets} of a $k$-face $E$ are the $(k-1)$-faces contained in $E$.
We call a pair of facets of $E$ \emph{opposite facets},
if they are disjoint.
Naturally, these concepts carry over to scaled and shifted
versions of $\Z^d$, so we define $\square_{\alpha_{s}}$ as the cubical complex
defined by $G_{\alpha_{s}}$.

We define a map 
$g_{\alpha_{s}}: \square_{\alpha_{s}}\rightarrow \square_{\alpha_{s+1}}$ 
as follows:
for vertices of $\square_{\alpha_{s}}$, we assign to $x\in G_{\alpha_{s}}$ 
the (unique) vertex $y\in G_{\alpha_{s+1}}$ such that
$x\in \mathrm{Vor}_{G_{\alpha_{s+1}}}(y)$ (see Lemma~\ref{lemma:vorcontain}).
For a $k$-face $f$ of $\square_{\alpha_{s}}$ with vertices 
$(p_1,\ldots,p_{2^k})$ in $G_{\alpha_{s}}$,
we set $g_{\alpha_{s}}(f)$ to be the convex hull of 
$\{g_{\alpha_{s}}(p_1),\ldots,g_{\alpha_{s}}(p_{2^k})\}$;
the next lemma shows that this is a well-defined map.
In this paper, we sometimes call $g_{\alpha_{s}}$ a \emph{cubical map}, 
since it is a counterpart of simplicial maps for cubical complexes.

\begin{lemma}
\label{lemma:gcell}
Let $f$ be $k$-face of $\square_{\alpha_{s}}$ with vertices 
$\{ p_1,\ldots,p_{2^k} \}\subset G_{\alpha_{s}}$.
Then 
\begin{itemize}
\item the set of vertices
$\{g_{\alpha_{s}}(p_1),\ldots,g_{\alpha_{s}}(p_{2^k})\}$ 
form a face $e$ of $\square_{\alpha_{s+1}}$.

\item for every face $e_1 \subset e$, there is a face $f_1 \subset f$
such that $g_{\alpha_s}(f_1)=e_1$. 

\item if $e_1,e_2$ are any two opposite facets of $e$, then there exists
a pair of opposite facets $f_1,f_2$ of $f$ such that $g_{\alpha_{s}}(f_1)=e_1$ 
and $g_{\alpha_{s}}(f_2)=e_2$.
\end{itemize} 
\end{lemma}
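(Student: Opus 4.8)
The plan is to reduce everything to the normalized situation from the proof of Lemma~\ref{lemma:vorcontain}: by translating and scaling we may assume $\alpha_s = 2$, and since the translation vector of $G_{\alpha_{s+1}}$ relative to $2G_{\alpha_s}$ only shifts the picture, we may as well take $G_{\alpha_s} = 2\Z^d$ and $G_{\alpha_{s+1}} = 4\Z^d + (1,\ldots,1)$. Then $g := g_{\alpha_s}$ sends a vertex $x \in 2\Z^d$ to the unique vertex of $4\Z^d+(1,\ldots,1)$ whose Voronoi cube contains $x$; concretely, in each coordinate $g$ rounds a value in $2\Z$ to the nearest value in $4\Z+1$, which amounts to the map $2m \mapsto 4\lfloor (2m+1)/4 \rfloor + 1$. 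The key combinatorial observation is that this coordinate-wise map is \emph{monotone and at most $2$-to-$1$}: the two preimages of a given target value are consecutive even integers, and they are exactly the two endpoints of a unit edge of $\square_{\alpha_s}$ in that coordinate direction. First I would record this explicitly as a per-coordinate statement.

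Next I would handle the first bullet. A $k$-face $f$ of $\square_{\alpha_s}$ is determined by fixing the coordinates in some set $A$ of size $d-k$ to constants and letting the coordinates in the complementary set $B$ (with $|B|=k$) range over a length-$\alpha_s$ interval, i.e.\ over the two endpoints $\{2a_i, 2a_i+2\}$ in each $i \in B$. Applying $g$ coordinate-wise: each coordinate in $A$ goes to a single value; each coordinate $i \in B$ goes to a value $g(\cdot)$ which is either a single value (if $2a_i$ and $2a_i+2$ round to the same target, i.e.\ the unit edge does not straddle a Voronoi wall of $G_{\alpha_{s+1}}$) or to two consecutive values of $4\Z+1$ (if it does straddle a wall). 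Hence $\{g(p_1),\ldots,g(p_{2^k})\}$ is precisely the vertex set of the face $e$ of $\square_{\alpha_{s+1}}$ obtained by fixing the coordinates where $g(f)$ is constant and letting the others span one unit edge — a face of dimension $k' \le k$. This also shows $g(f) = \mathrm{conv}(\{g(p_1),\ldots,g(p_{2^k})\})$ is well-defined, and that $g$ collapses exactly those edge-directions of $f$ that sit strictly inside a Voronoi cell.

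For the second bullet: a sub-face $e_1 \subseteq e$ is obtained by, for some subset of the ``free'' directions of $e$, fixing the coordinate to one of its two endpoint-values, and keeping the rest free. Using the coordinate-wise description, each such endpoint-value of $e$ in a direction $i$ is the image under $g$ of one of the two endpoint-values of $f$ in direction $i$ (here we use that $g$ restricted to the two endpoint values of an edge of $f$ is \emph{onto} the corresponding endpoint values of $e$ — this is the surjectivity half of the $2$-to-$1$ observation). Making these choices in $f$ produces a sub-face $f_1 \subseteq f$ with $g(f_1) = e_1$. For the third bullet: a pair of opposite facets $e_1, e_2$ of $e$ corresponds to fixing one free direction $i_0$ of $e$ to its low endpoint (for $e_1$) and to its high endpoint (for $e_2$). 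Since $g$ on direction $i_0$ is monotone, the low endpoint-value of $e$ in direction $i_0$ is the image of the low endpoint-value of $f$, and likewise for high; so taking $f_1$ to be the facet of $f$ with coordinate $i_0$ fixed low and $f_2$ the facet with coordinate $i_0$ fixed high gives opposite facets of $f$ with $g(f_1) = e_1$, $g(f_2) = e_2$.

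The main obstacle is purely bookkeeping: setting up clean notation for faces of a cubical complex as ``fix some coordinates, free the rest'' and verifying the per-coordinate $2$-to-$1$, monotone, surjective-on-edges behaviour of $g$ — once that lemma-within-the-proof is in place, all three bullets follow by tracking coordinate sets. One genuine subtlety to get right is that $g(f)$ may have \emph{strictly smaller} dimension than $f$ (directions collapse), so ``$k$-face maps to a face'' must be read as ``maps to a face of dimension $\le k$'', and the opposite-facets claim in the third bullet only concerns a free direction of $e$, which is automatically a non-collapsed direction of $f$; I would state this carefully to avoid a spurious difficulty with collapsed directions.
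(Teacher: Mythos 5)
Your proposal is correct, and it takes a genuinely different route from the paper. The paper proves all three claims abstractly, mostly by induction: claim one by induction on $\dim f$ (splitting $f$ into opposite facets and applying the induction hypothesis to each), claim two by induction on $\dim e_1$, and claim three by a maximality argument that finds a facet $h\subset f$ with $g(h)=e_1$ and then uses the structure from the first two claims. Your argument instead normalizes to $G_{\alpha_s}=2\Z^d$, $G_{\alpha_{s+1}}=4\Z^d+(1,\ldots,1)$ and reduces everything to the observation that $g$ acts coordinate-wise, with each per-coordinate map being monotone and at most $2$-to-$1$, the two preimages of a target being the endpoints of an edge. Faces are then products of fixed coordinates and free edges, and all three bullets follow by tracking which directions collapse. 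This coordinate-wise approach is more concrete and, in my view, more transparent about why the statement is true; the paper's inductive approach stays coordinate-free and emphasizes the ``opposite facets'' structure that is reused directly later (e.g., in Lemma~\ref{lemma:bary_activeimage}), which is presumably why the authors chose it. Your handling of the dimension-collapse subtlety is exactly right, as is the per-coordinate monotonicity argument for the third bullet; the one place you should be slightly careful in a clean write-up is the normalization step, where the shift vector $(\pm1,\ldots,\pm1)$ has independent signs, so you should say explicitly that a per-coordinate reflection reduces to the all-$+1$ case (your argument already treats coordinates independently, so this is harmless).
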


\begin{proof}
\textbf{First claim:}
We prove the first claim by induction on the dimension of faces of 
$G_{\alpha_{s}}$.
Base case: for vertices, the claim is trivial using 
Lemma~\ref{lemma:vorcontain}. 
Induction case: let the claim hold true for all $(k-1)$-faces of 
$G_{\alpha_{s}}$.
We show that the claim holds true for all $k$-faces of $G_{\alpha_{s}}$.

Let $f$ be a $k$-face of $G_{\alpha_{s}}$. 
Let $f_1$ and $f_2$ be opposite facets of $f$, along the $m$-th coordinate.
Let us denote the vertices of $f_1$ by $(p_1,\ldots,p_{2^{k-1}})$
and those of $f_2$ by $(p_{2^{k-1}+1},\ldots,p_{2^{k}})$ taken in the same order, that is,
$p_j$ and $p_{2^{k-1}+j}$ differ in only the $m$-th coordinate for all $1\le j\le 2^{k-1}$. 
By definition, all vertices of $f_1$ share the $m$-th coordinate, and we denote
coordinate of these vertices by $z$.
Then, the $m$-th coordinate of all vertices of $f_2$ equals $z+\alpha_s$.
Then $g_{\alpha_{s}}(p_j)$ and $g_{\alpha_{s}}(p_{2^{k-1}+j})$ have 
the same coordinates, except possibly the $m$-th coordinate. 
By induction hypothesis, $e_1=g_{\alpha_{s}}(f_1)$ and 
$e_2=g_{\alpha_{s}}(f_2)$ are two faces of $G_{s+1}$. 
This implies that $e_2$ is a translate of $e_1$ along the $m$-th
coordinate.

There are two cases: if $e_1$ and $e_2$ share the $m$-th coordinate, then
$e_1=e_2$ and therefore $g_{\alpha_{s}}(f)=e_1=e_2=e$, so the claim follows.
On the other hand, if $e_1$ and $e_2$ 
do not share the $m$-th coordinate, 
then they are two faces of $\square_{\alpha_{s+1}}$ 
which differ in only one coordinate by $\alpha_{s+1}$.
So they are opposite facets of a co-dimension one face $e$ of $G_{\alpha_{s+1}}$.
Using induction, the claim follows.

\textbf{Second claim:}
We prove the claim by induction over the dimension of $e_1$.
Base case: $e_1$ is a vertex.
The vertices of $f$ in Voronoi region of $e_1$ form $f_1$.
Since $f$ is an axis parallel face and the Voronoi region is also 
axis-parallel, it is immediate that $f_1$ is a face of $f$.
Assume that the claim is true up to dimension $i$.
For $e_1$ a face of dimension $i+1$, consider opposite facets $e_a$
and $e_b$ of $e$.
By the induction claim, there exist faces $f_a,f_b\subset f$
that satisfy $g_{\alpha_s}(f_a)=e_a, g_{\alpha_s}(f_b)=e_b$.
$f_a$ and $f_b$ are disjoint since otherwise $g_{\alpha_s}(f_a\cap f_b)$
would be common to both $e_a$ and $e_b$, a contradiction.
If $e_a$ is a translate of $e_b$ along the $m$-th coordinate,
then $f_a$ is also a translate of $f_b$ along the same coordinate.
Therefore $f_a$ and $f_b$ are opposite faces of a face $f_1$
and $g_{\alpha_s}(f_1)=e_1$.

\textbf{Third claim:} 
Without loss of generality, assume that $x_1$  is the direction
in which $e_2$ is a translate of $e_1$.
Using the second claim, 
let $h$ denote the maximal face of $f$ such that $g_{\alpha_{s}}(h)=e_1$.
Clearly, $h\neq f$, since that would imply $g_{\alpha_{s}}(f)=e_1=e$, 
which is a contradiction. 

Suppose $h$ has dimension less than $k-1$.
Let $h'$ be the facet of $f$ that contains $h$ and 
has the same $x_1$ coordinates for all vertices.
Then $g_{\alpha_s}(h')=e_1$, which contradicts the maximality
of $h$.

Therefore, the only possibility is that $h$ is a facet $f_1$ of $f$ such 
that $g_{\alpha_{s}}(f_1)=e_1$. 
Let $f_2$ be the opposite facet of $f_1$. 
From the proof of the first claim, it is easy to see that 
$g_{\alpha_{s}}(f_2)=e_2$. 
The claim follows.
\end{proof}

\subsection{Barycentric subdivision}

We discuss a special triangulation of $\square_{\alpha_{s}}$.
A \emph{flag} in $\square_{\alpha_{s}}$ is a set of faces $\{f_0,\ldots,f_k\}$ 
of $\square_{\alpha_{s}}$ such that 
\[
f_0\subseteq \ldots\subseteq f_k.
\]
The \emph{barycentric subdivision} of $\square_{\alpha_{s}}$, 
denoted by $sd_{\alpha_{s}}$, is the (infinite) simplicial complex whose 
simplices are the flags of $\square_{\alpha_{s}}$~\cite{munkres}.

In particular, the $0$-simplices of $sd_{\alpha_{s}}$ are the faces of 
$\square_{\alpha_{s}}$. 
An equivalent geometric description of $sd_{\alpha_{s}}$ can be obtained 
by defining the $0$-simplices as the barycenters of the faces in 
$sd_{\alpha_{s}}$, and introducing a $k$-simplex between $(k+1)$ barycenters 
if the corresponding faces form a flag.
For a simple example, see Figure~\ref{figure:barycentric_1} and Figure~\ref{figure:barycentric_2}.
It is easy to see that $sd_{\alpha_{s}}$ is a flag complex.
Given a face $f$ in $\square_{\alpha_{s}}$, we write $sd(f)$ for the 
subcomplex of $sd_{\alpha_{s}}$
consisting of all flags that are formed only by faces contained in $f$.
\begin{figure}[ht]
\centering
\includegraphics[width=0.4\columnwidth]{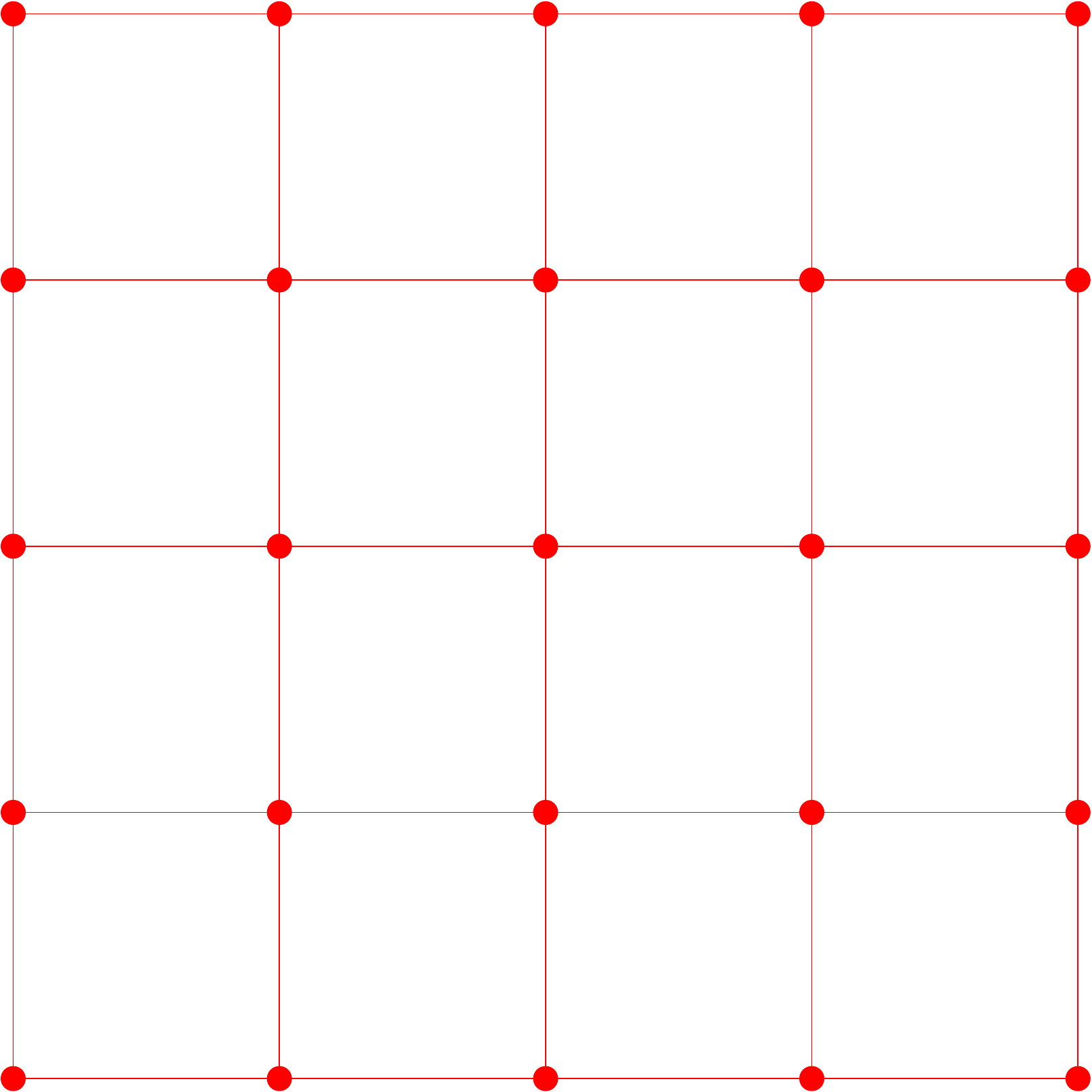}
\caption{A portion of the grid in two dimensions. 
The dots are the grid points which form the $0$-faces of the cubical complex.}
\label{figure:barycentric_1}
\end{figure}

\begin{figure}[ht]
\centering
\includegraphics[width=0.4\columnwidth]{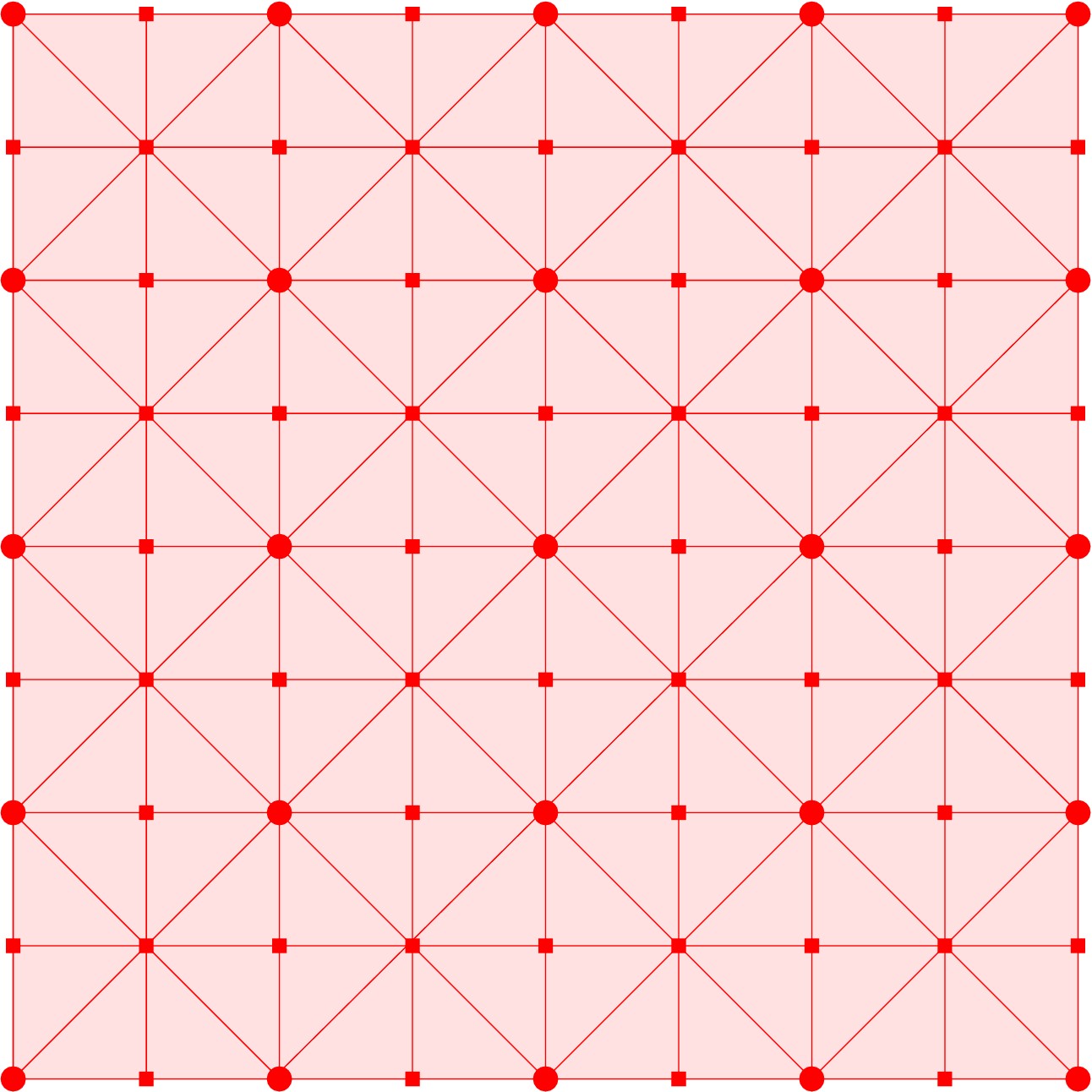}
\caption{The barycentric subdivision of the grid.
The tiny squares are barycenters of the $1$-faces and $2$-faces of the 
cubical complex. }
\label{figure:barycentric_2}
\end{figure}

\section{Approximation scheme with simplicial complexes}
\label{section:simplicial_scheme}

We define our approximation complex for a finite set of points in $\R^d$.
Recall from Definition~\ref{def:shifted_grids} that we can define a collection
of scaled and shifted integer grids $G_{\alpha_s}$ over a collection of scales
$I:=\{\alpha_s= 2^s \mid s\in \Z \}$ in $\R^d$.
To make the exposition simple, we define our complex in a slightly
generalized form.

\subsection{Barycentric spans}
\label{subsection:bary_span}

Fix some $s\in \Z$ and let $V$ denote any non-empty subset of $G_{\alpha_{s}}$.

\paragraph{Vertex span.}
\label{def:bary_vertex_span}
We say that a face $f\in\square_{\alpha_{s}}$ is \emph{spanned} by $V$,
if the set of vertices $V(f):=f\cap V$ 
\begin{itemize}
\item is non-empty, and

\item  not contained in any facet of $f$.
\end{itemize}

Trivially, the vertices of $\square_{\alpha_{s}}$ which are spanned by $V$ are precisely 
the points in $V$.
Any face of $\square_{\alpha_{s}}$ which is not a vertex must contain at least 
two vertices of $V$ in order to be spanned.
We point out that the set of spanned faces of $\square_{\alpha_{s}}$ 
is \emph{not} closed under taking sub-faces. 
For instance, if $V$ consists of two antipodal
points of a $d$-cube, the only faces spanned by $V$ are the $d$-cube
and the two vertices; all other faces of the $d$-cube contain at most one
vertex and hence are not spanned.

It is simple to test whether any given $k$-face $f\in \square_{\alpha_{s}}$ 
is spanned by the set of points $V(f)$. 
Let $T\subseteq [1,\ldots,d]$ be the set of common coordinates of the 
points in $V(f)$.
$V(f)$ spans $f$ if and only if the standard basis vectors of $\R^d$ 
corresponding to $T$ span $f$.
$T$ can be computed in $|V(f)|O(d)=O(2^{k}d)$ time by a linear scan
of the coordinates.
The coordinate directions spanned by $f$ can also be found and compared
with $T$ within the same time bound.

\paragraph{Barycentric span.}

The \emph{barycentric span} of $V$ is the subcomplex of $sd_{\alpha_{s}}$
obtained by taking the union of the complete barycentric subdivisions of the maximal 
faces of $\square_{\alpha_{s}}$ that are spanned by $V$.
The barycentric span of $V$ is indeed a simplicial complex by definition. 
Moreover, the barycentric span is a flag complex.
Then for any face $f\in \square_{\alpha_s}$, 
the barycentric span of $V(f)$ is either empty or acyclic.

Furthermore, for any non-empty subset $W\subseteq V$, 
the faces of $\square_{\alpha_{s}}$ that are spanned by $W$ are 
also spanned by $V$. 
Consequently, the barycentric span of $W$ is a subcomplex 
of the barycentric span of $V$.

\subsection{Approximation complex}
\label{subsection:bary_appcpx}

We denote by $P\subset \R^d$ a finite set of points.
We define two maps:
\begin{itemize}
\item $a_{\alpha_{s}}:P\rightarrow G_{\alpha_{s}}$: for each point $p\in P$, 
we let $a_{\alpha_{s}}(p)$ denote the grid point in $G_{\alpha_{s}}$ that is closest to $p$,
that is, $p\in \mathrm{Vor}_{G_{\alpha_{s}}}(a_{\alpha_{s}}(p))$. 
We assume for simplicity that this closest point is unique, which 
can be ensured using well-known methods~\cite{em-sos}. 
We define the \emph{active vertices of $G_{\alpha_s}$} as
\[
V_{\alpha_{s}}:=\image(a_{\alpha_{s}})=a_{\alpha_{s}}(P)\subset G_{\alpha_{s}},
\] 
that is, the set of grid points that have at least one point of $P$ 
in their Voronoi cells.

\item $b_{\alpha_{s}}:V_{\alpha_{s}}\rightarrow P$: the map $b_{\alpha_{s}}$ 
takes an active vertex of $G_{\alpha_{s}}$ to its closest point in $P$.
By taking an arbitrary total order on $P$ to resolve multiple assignments, 
we ensure that this assignment is unique.
\end{itemize}
Naturally, $b_{\alpha_{s}}(v)$ is a point inside $\mathrm{Vor}_{G_{\alpha_{s}}}(v)$ 
for any $v\in V_{\alpha_{s}}$.
It follows that the map $b_{\alpha_{s}}$ is a section 
of $a_{\alpha_{s}}$, 
that is, $a_{\alpha_{s}}\circ b_{\alpha_{s}}:V_{\alpha_{s}} \rightarrow V_{\alpha_{s}}$ is the identity on $V_{\alpha_s}$.
However, this is not true for $b_{\alpha_{s}}\circ a_{\alpha_{s}}$ in general.

Recall that the map 
$g_{\alpha_{s}}:\square_{\alpha_{s}}\rightarrow\square_{\alpha_{s+1}}$ 
takes grid points of $G_{\alpha_{s}}$ to grid points of $G_{\alpha_{s+1}}$. 
Using Lemma~\ref{lemma:vorcontain}, it follows at once that:
\begin{lemma}
\label{lemma:bary_gcompose}
For all $\alpha_{s}\in I$ and each $x\in V_{\alpha_{s}}$,
$g_{\alpha_{s}}(x)=(a_{\alpha_{s+1}}\circ b_{\alpha_{s}})(x)$.
\end{lemma}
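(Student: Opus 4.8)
The statement to prove is Lemma~\ref{lemma:bary_gcompose}: for every scale $\alpha_s \in I$ and every active vertex $x \in V_{\alpha_s}$, we have $g_{\alpha_s}(x) = (a_{\alpha_{s+1}} \circ b_{\alpha_s})(x)$. The plan is to simply unwind the definitions of all three maps involved and invoke Lemma~\ref{lemma:vorcontain}. Concretely, I would first recall that $b_{\alpha_s}(x)$ is by definition a point of $P$ lying inside $\mathrm{Vor}_{G_{\alpha_s}}(x)$ (this was observed right after the definition of $b_{\alpha_s}$). Call this point $p := b_{\alpha_s}(x) \in P$, so $p \in \mathrm{Vor}_{G_{\alpha_s}}(x)$.

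Next I would compute $a_{\alpha_{s+1}}(p)$. By definition, $a_{\alpha_{s+1}}(p)$ is the unique grid point $y \in G_{\alpha_{s+1}}$ with $p \in \mathrm{Vor}_{G_{\alpha_{s+1}}}(y)$. On the other hand, $g_{\alpha_s}(x)$ is, by the definition of the cubical map on vertices, the unique $y' \in G_{\alpha_{s+1}}$ with $x \in \mathrm{Vor}_{G_{\alpha_{s+1}}}(y')$. So the task reduces to showing that these two grid points of $G_{\alpha_{s+1}}$ coincide, i.e. that $p$ and $x$ lie in the same Voronoi cell of $G_{\alpha_{s+1}}$. This is exactly where Lemma~\ref{lemma:vorcontain} enters: taking $y' = g_{\alpha_s}(x)$ so that $x \in \mathrm{Vor}_{G_{\alpha_{s+1}}}(y')$, the lemma gives $\mathrm{Vor}_{G_{\alpha_s}}(x) \subset \mathrm{Vor}_{G_{\alpha_{s+1}}}(y')$. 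Since $p \in \mathrm{Vor}_{G_{\alpha_s}}(x)$, we conclude $p \in \mathrm{Vor}_{G_{\alpha_{s+1}}}(y')$, and hence $a_{\alpha_{s+1}}(p) = y' = g_{\alpha_s}(x)$ by uniqueness of the Voronoi-cell assignment. Unwinding, $(a_{\alpha_{s+1}} \circ b_{\alpha_s})(x) = a_{\alpha_{s+1}}(p) = g_{\alpha_s}(x)$, which is the claim.

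There is essentially no obstacle here — the lemma is a one-line consequence of Lemma~\ref{lemma:vorcontain} once the maps are unwound, which is presumably why the excerpt says the statement "follows at once." The only mild subtlety worth stating carefully is the uniqueness of the closest grid point (needed so that $a_{\alpha_{s+1}}$ is well-defined as a function), and the uniqueness of the containing Voronoi cell at scale $\alpha_{s+1}$ (needed so that $g_{\alpha_s}$ on vertices is well-defined); both were already assumed or established earlier in the text, so I would just cite them. I would keep the written proof to two or three sentences matching the structure above.
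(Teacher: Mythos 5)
Your proof is correct and follows exactly the argument the paper implicitly invokes: unwind the definitions of $b_{\alpha_s}$, $a_{\alpha_{s+1}}$, and $g_{\alpha_s}$, then apply Lemma~\ref{lemma:vorcontain} to conclude that $b_{\alpha_s}(x)$ and $x$ lie in the same Voronoi cell of $G_{\alpha_{s+1}}$. The paper gives no explicit proof, stating only that the claim ``follows at once'' from Lemma~\ref{lemma:vorcontain}; your write-up supplies precisely that one-step argument.
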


Recall that $\rin_\alpha$ denotes the 
Rips complex at scale $\alpha$ for the $\lin$-norm.
The next statement is a direct application of the the triangle inequality; 
let $\dmn()$ denote the diameter in the $\lin$-norm. 
\begin{lemma}
\label{lemma:bary_iripscell}
Let $Q\subseteq P$ be a non-empty subset such that $\dmn(Q)\le \alpha_s$. 
Then, the set of grid points $a_{\alpha_{s}}(Q)$ is contained in a face 
of $\square_{\alpha_{s}}$. 

Equivalently, for any simplex 
$\sigma=(p_0,\ldots,p_k)\in\rin_{\alpha_s/2}$ on $P$, 
the set of active vertices $\{a_{\alpha_{s}}(p_0),\ldots,a_{\alpha_{s}}(p_k)\}$ 
is contained in a face of $\square_{\alpha_{s}}$.
\end{lemma}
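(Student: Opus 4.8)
The plan is to prove the first statement by a coordinate-wise triangle inequality and then read off the ``equivalently'' part from the paper's scale convention. Both claims are local to the single grid $G_{\alpha_s}$, so I would first rescale and translate so that $G_{\alpha_s}=\Z^d$ and $\alpha_s=1$; this is an affine isomorphism carrying $\square_{\alpha_s}$ onto the unit cubical complex of $\Z^d$ and scaling $\lin$-diameters by $1/\alpha_s$, so no generality is lost. In these coordinates $a_{\alpha_s}$ acts coordinate-wise — $a_{\alpha_s}(p)^i$ is the integer nearest to $p^i$ — and, crucially, the general-position (uniqueness) assumption puts $p$ in the \emph{interior} of the Voronoi cube of $a_{\alpha_s}(p)$, so $|p^i-a_{\alpha_s}(p)^i|<1/2$ strictly for every coordinate $i$.

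Next I would establish the per-coordinate bound: for any $p,q\in Q$ and any coordinate $i$,
\[
|a_{\alpha_s}(p)^i-a_{\alpha_s}(q)^i|\le |a_{\alpha_s}(p)^i-p^i|+|p^i-q^i|+|q^i-a_{\alpha_s}(q)^i|<\tfrac12+1+\tfrac12=2,
\]
using $\dmn(Q)\le 1$ for the middle term and the strict interior bound for the outer two. Since the two outer values are integers, their difference is an integer of absolute value strictly less than $2$, hence at most $1$. Therefore, for each $i$ the integer set $\{a_{\alpha_s}(p)^i:p\in Q\}$ has diameter at most $1$, so it is contained in $\{t_i,t_i+1\}$ with $t_i:=\min_{p\in Q}a_{\alpha_s}(p)^i$.

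It then remains to exhibit a face of $\square_{\alpha_s}$ containing $a_{\alpha_s}(Q)$. Take $F:=\prod_{i=1}^d F_i$, where $F_i:=\{t_i\}$ if $a_{\alpha_s}(p)^i=t_i$ for all $p\in Q$, and $F_i:=[t_i,t_i+1]$ otherwise; this is an axis-aligned, integer-translated face of the unit cubical complex, and every $a_{\alpha_s}(p)$ with $p\in Q$ has each coordinate equal to an endpoint of the corresponding $F_i$, hence is a vertex of $F$. Thus $a_{\alpha_s}(Q)\subseteq F$, and transporting back through the affine map proves the first statement. For the equivalent formulation, a simplex $\sigma=(p_0,\dots,p_k)\in\rin_{\alpha_s/2}$ is, by the convention that $\rin_\alpha$ is the $\lin$-Rips complex at scale $2\alpha$, a subset of $P$ with $\dmn(\sigma)\le\alpha_s$; applying the first statement to $Q:=\{p_0,\dots,p_k\}$ finishes the proof.

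I expect no serious obstacle; the one point needing care is the \emph{strictness} of $|p^i-a_{\alpha_s}(p)^i|<1/2$. With the non-strict inequality the triangle bound would only give $|a_{\alpha_s}(p)^i-a_{\alpha_s}(q)^i|\le 2$, which permits two grid values two apart and hence two faces that share no common edge; so one genuinely needs the uniqueness assumption — equivalently, that no point of $P$ lies on a Voronoi boundary (has a half-integer coordinate in the rescaled grid) — and this is the only place where it enters.
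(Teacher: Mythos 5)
Your proof is correct and rests on exactly the same core ingredients as the paper's: the per-coordinate triangle inequality combined with the strict bound $\|p-a_{\alpha_s}(p)\|_\infty<\alpha_s/2$ coming from the uniqueness (general-position) assumption on nearest grid points. The paper phrases it slightly differently — it argues by contradiction via pairs, noting that two grid points not sharing a face must be $\geq 2\alpha_s$ apart in $L_\infty$, and then applies the triangle inequality directly in $L_\infty$ without rescaling — whereas you rescale to $\Z^d$, work coordinate-by-coordinate, and construct the containing face explicitly; these are presentational variants of the same argument. Your explicit remark about where strictness is needed (and that it fails without the uniqueness assumption) is a point the paper uses implicitly but does not call out, and is a worthwhile clarification.
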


\begin{proof}
We prove the claim by contradiction. 
Suppose that the set of active vertices $a_{\alpha_{s}}(Q)$ is not 
contained in a face of $\square_{\alpha_{s}}$. 
Then, there exists at least one pair of points $\{x,y\}\in Q$ such that
$a_{\alpha_{s}}(x)$, $a_{\alpha_{s}}(y)$ are not in a common face of $\square_{\alpha_{s}}$. 
By the definition of the grid $G_{\alpha_{s}}$, the grid points $a_{\alpha_{s}}(x)$, $a_{\alpha_{s}}(y)$ 
therefore have $L_\infty$-distance at least $2\alpha_s$.
Moreover, $x$ has $L_\infty$-distance less than $\alpha_s/2$ from $a_{\alpha_{s}}(x)$, 
and the same is true for $y$ and $a_{\alpha_{s}}(y)$.
By the triangle inequality, the $L_\infty$-distance of $x$ and $y$ 
is more than $\alpha_{s}$, which is a contradiction to the fact that 
$\dmn(Q)\le \alpha_{s}$.
\end{proof}

We now define our approximation tower.
For any scale $\alpha_{s}$, we define $\ux_{\alpha_{s}}$ as 
the barycentric span of the active vertices 
$V_{\alpha_{s}}\subset G_{\alpha_{s}}$.
See Figure~\ref{figure:bary_cpx_1}, Figure~\ref{figure:bary_cpx_2} and Figure~\ref{figure:bary_cpx_3} for a simple illustration. 
\begin{figure}[h]
\centering
\includegraphics[width=0.5\textwidth,page=2]{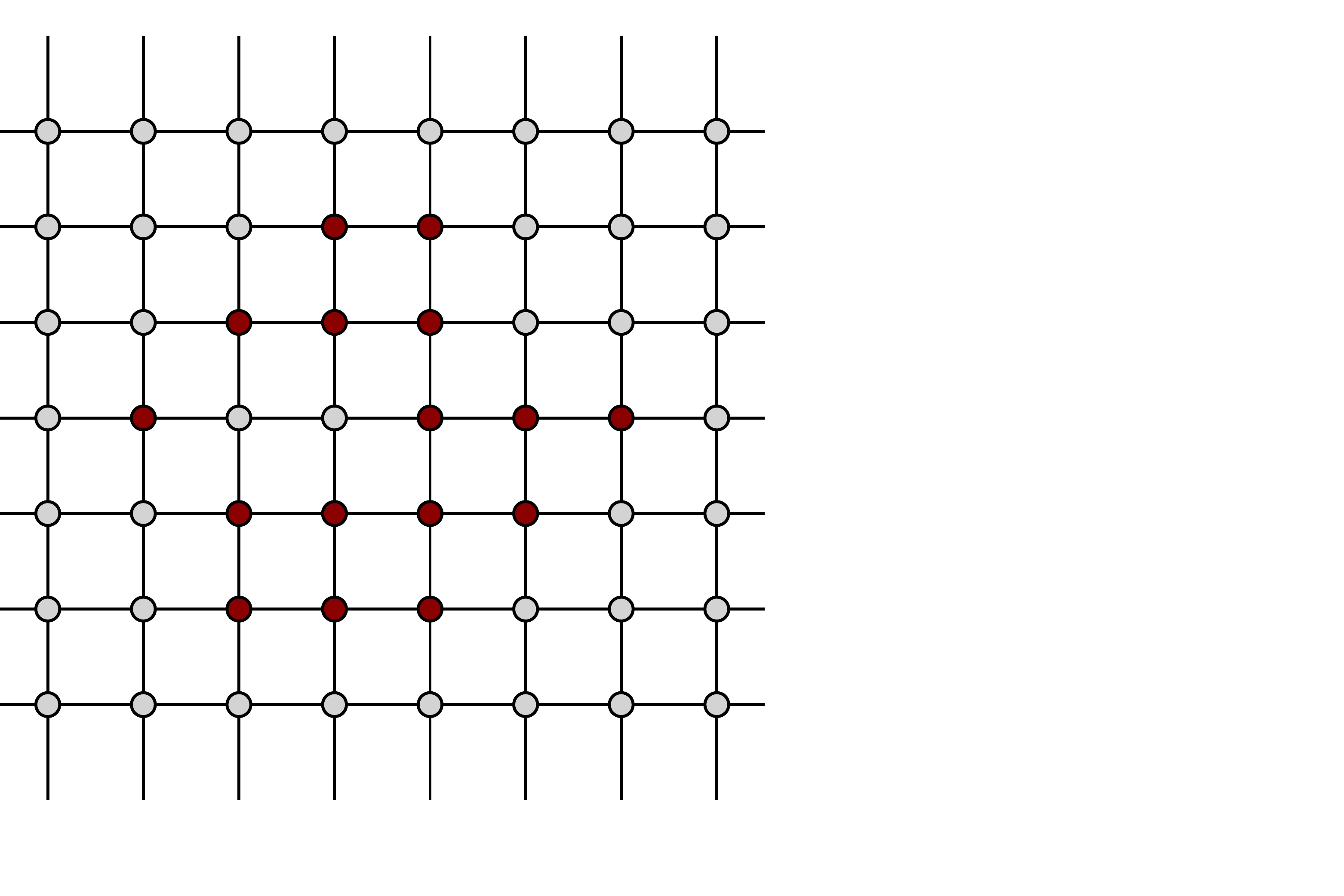}
\caption{A two-dimensional grid, shown along with its cubical complex.
The green points (small dots) denote the
points in $P$ and the red vertices (encircled) are the active vertices.}
\label{figure:bary_cpx_1}
\end{figure}
\begin{figure}[!h]
\centering
\includegraphics[width=0.5\textwidth,page=3]{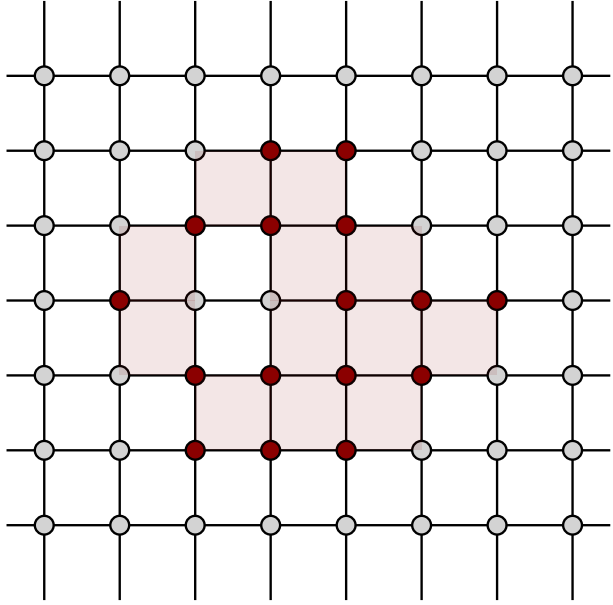}
\caption{The active faces are shaded. The closure of the active faces forms the cubical complex.}
\label{figure:bary_cpx_2}
\end{figure}
\begin{figure}
\centering
\includegraphics[width=0.5\textwidth,page=5]{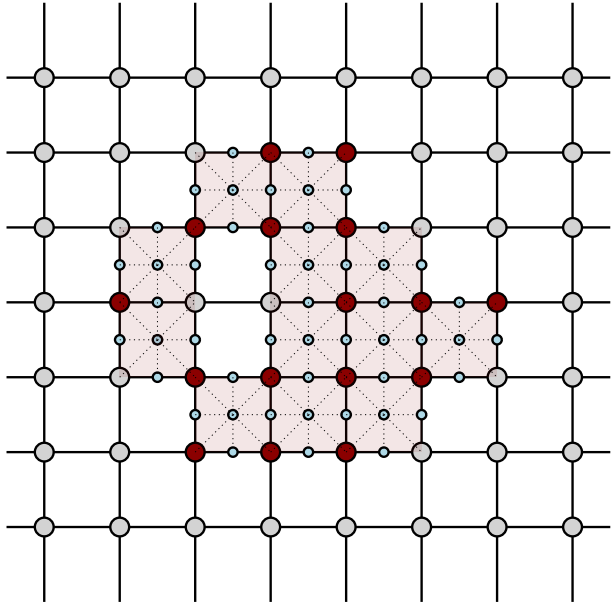}
\caption{The generated approximation complex, 
whose vertices consist of those of the cubical complex and
the blue vertices (small dots), which 
are the barycenters of active and secondary faces.}
\label{figure:bary_cpx_3}
\end{figure}

To simplify notation, we denote 
\begin{itemize}
\item the faces of $\square_{\alpha_{s}}$
spanned by $V_{\alpha_{s}}$ as \emph{active faces}, and

\item the faces of active faces that are not
spanned by $V_{\alpha_{s}}$ as \emph{secondary faces}.
\end{itemize}
To complete the description of the approximation tower, 
we need to define simplicial maps of the form
$\tg_{\alpha_{s}}:\ux_{\alpha_{s}}\rightarrow\ux_{\alpha_{s+1}}$, which connect 
the simplicial complexes at consecutive scales.
We show that such maps are induced by $g_{\alpha_{s}}$. 

\begin{lemma}
\label{lemma:bary_activeimage}
Let $f$ be any active face of $\square_{\alpha_{s}}$. 
Then, $g_{\alpha_{s}}(f)$ is an active face of $\square_{\alpha_{s+1}}$.
\end{lemma}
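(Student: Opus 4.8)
The claim is that the cubical map $g_{\alpha_s}$ sends active faces to active faces. Recall that a face $f$ of $\square_{\alpha_s}$ is active iff $V_{\alpha_s}(f)=f\cap V_{\alpha_s}$ is non-empty and is not contained in any facet of $f$. So given an active face $f$, I want to show that $e:=g_{\alpha_s}(f)$ is a genuine face of $\square_{\alpha_{s+1}}$ (which is immediate from Lemma~\ref{lemma:gcell}, first claim), that $e\cap V_{\alpha_{s+1}}$ is non-empty, and that $e\cap V_{\alpha_{s+1}}$ is not contained in any facet of $e$.

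The plan is to push the spanning witnesses forward. First, non-emptiness: pick any vertex $p\in V_{\alpha_s}(f)$; then $g_{\alpha_s}(p)$ is a vertex of $e$, and by Lemma~\ref{lemma:bary_gcompose} we have $g_{\alpha_s}(p)=(a_{\alpha_{s+1}}\circ b_{\alpha_s})(p)\in V_{\alpha_{s+1}}$ since $b_{\alpha_s}(p)\in P$ and $a_{\alpha_{s+1}}$ maps $P$ into $V_{\alpha_{s+1}}$. So $e\cap V_{\alpha_{s+1}}\ne\emptyset$. Second, the spanning condition: I need that the vertices of $e$ lying in $V_{\alpha_{s+1}}$ are not all contained in a single facet of $e$. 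Equivalently, letting $T_f$ be the set of coordinate directions spanned by $f$ and $T_e$ those spanned by $e$, I want the images $g_{\alpha_s}(V_{\alpha_s}(f))$ to realize every coordinate direction of $T_e$ — i.e., for each direction $m\in T_e$, there are two points of $V_{\alpha_s}(f)$ whose images differ in coordinate $m$. Here I use the structure of $g_{\alpha_s}$: by Lemma~\ref{lemma:gcell} (third claim), for any pair of opposite facets $e_1,e_2$ of $e$ (say differing in direction $m$), there are opposite facets $f_1,f_2$ of $f$ with $g_{\alpha_s}(f_1)=e_1$, $g_{\alpha_s}(f_2)=e_2$; so $f_1,f_2$ are opposite facets of $f$ differing in some direction $m'$. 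Since $f$ is active, $V_{\alpha_s}(f)$ is not contained in the facet $f_1$, so there is a point $q\in V_{\alpha_s}(f)\setminus f_1$; this $q$ lies in $f_2$ (or at least on the $f_2$-side), and combined with a witness in $f_1$ this gives two active vertices of $f$ whose $g$-images land on opposite sides of $e$ in direction $m$, hence differ in coordinate $m$. Ranging over all $m\in T_e$ shows the images are not confined to any facet of $e$, so $e$ is active.

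The main obstacle is the bookkeeping in the second step: one must be careful that the "witness" point $q\in V_{\alpha_s}(f)$ that escapes a facet $f_1$ actually lies on the opposite facet $f_2$ and not on some intermediate position — but this is fine because facets of a cube partition its vertices, so any vertex of $f$ not on $f_1$ lies on $f_2$. One also needs that distinct coordinate directions of $T_e$ are genuinely independent witnesses, which follows because the opposite-facet pairs of $e$ in distinct directions are distinct, and Lemma~\ref{lemma:gcell} handles each independently. A cleaner alternative to state the argument: show directly that the coordinate directions spanned by $e$ are exactly the $g_{\alpha_s}$-images of the coordinate directions spanned by $f$ that "survive" (are not collapsed), and that the surviving directions are precisely those witnessed by differing images of $V_{\alpha_s}(f)$; then activeness of $f$ (its span-vertices realize all of $T_f$) transfers to activeness of $e$. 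I expect the authors' proof to use Lemma~\ref{lemma:gcell} exactly in this fashion, likely via the second and third claims.
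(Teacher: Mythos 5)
Your proposal is correct and takes essentially the same route as the paper: both rely on the third claim of Lemma~\ref{lemma:gcell} to pull opposite facets of $e$ back to opposite facets of $f$, and on the observation (via Lemma~\ref{lemma:bary_gcompose}) that $g_{\alpha_s}$ sends active vertices to active vertices. The only superficial difference is that the paper argues by contradiction, assuming a facet $e_1$ of $e$ contains all active vertices, while you verify the spanning condition directly for every facet of $e$; the underlying mechanism is identical.
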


\begin{proof}
Using Lemma~\ref{lemma:gcell}, $e:=g_{\alpha_{s}}(f)$ is a face 
of $\square_{\alpha_{s}}$. 
If $e$ is a vertex, then it is active, because $f$ contains
at least one active vertex $v$, and $g_{\alpha_{s}}(v)=e$ in this case.
If $e$ is not a vertex, we assume for a contradiction that it is not active.
Then, it contains a facet $e_1$ that contains all active vertices in $e$.
Let $e_2$ denote the opposite facet of $e_1$ in $e$. 
By Lemma~\ref{lemma:gcell}, $f$ contains
opposite facets $f_1$, $f_2$ such that $g_{\alpha_{s}}(f_1)=e_1$ 
and $g_{\alpha_{s}}(f_2)=e_2$.
Since $f$ is active, both $f_1$ and $f_2$ contain active vertices;
in particular, $f_2$ contains an active vertex $v$. 
But then the active vertex $g_{\alpha_{s}}(v)$ must lie in $e_2$, 
contradicting the fact that $e_1$ contains all active vertices of $e$.
\end{proof}

As a result, $g$ is well defined for each face $e\in\square_{\alpha_{s}}$,
since there exists some active face $e'\in\square_{\alpha_{s}}$ with
$e\subseteq e'$, and $g(e)\subseteq g(e')$.
By definition, a simplex $\sigma\in \ux_{\alpha_s}$ is a flag 
$(f_0\subseteq \ldots \subseteq f_k)$ of faces in $\square_{\alpha_{s}}$.
We set 
\[
\tg_{\alpha_{s}}(\sigma):=(g_{\alpha_{s}}(f_0),\ldots ,g_{\alpha_{s}}(f_k)),
\]
where $(g_{\alpha_{s}}(f_0)\subseteq \ldots \subseteq g_{\alpha_{s}}(f_k))$ 
is a flag 
of faces in $\square_{\alpha_{s+1}}$ by Lemma~\ref{lemma:bary_activeimage}, 
and hence is a simplex in $\ux_{\alpha_{s+1}}$.
It follows that $\tg_s:\ux_{\alpha_{s}}\rightarrow\ux_{\alpha_{s+1}}$ 
is a simplicial map.
This completes the description of the simplicial tower
\[
(\ux_{\alpha_{s}})_{s\in\Z}.
\]

\subsection{Interleaving with the Rips module}
\label{subsection:bary_interleaving}

First, we show that our tower is a constant-factor approximation 
of the the $\lin$-Rips filtration of $P$.
We then show the relation between our approximation tower and
the Euclidean Rips filtration of $P$.

We start by defining two acyclic carriers. 
First, we set $\lambda=1$ and 
abbreviate $\alpha:=\alpha_s=2^s$ to simplify notation.

\begin{itemize}
\item $C_1^\alpha:\rin_{\alpha/2} \rightarrow \ux_{\alpha}$: 
for any simplex $\sigma=(p_0,\ldots,p_k)$ in $\rin_{\alpha/2}$,
we set $C_1^\alpha(\sigma)$ as the barycentric span of 
$U:=\{a_s(p_0),\ldots,a_s(p_k)\}$, which is a subcomplex of $\ux_{\alpha}$.
Using Lemma~\ref{lemma:bary_iripscell}, 
$U$ lies in a maximal active face $f$ of 
$\square_\alpha$, so that $C_1^\alpha(\sigma)$ is acyclic.
The barycentric span of any subset of $U$ is a subcomplex of the barycentric
span of $U$, so $C_1^\alpha$ is a carrier.
Therefore, $C_1^\alpha$ is an acyclic carrier.

\item $C_2^\alpha:\ux_{\alpha}\rightarrow \rin_{\alpha}$: 
let $\sigma$ be any flag of $\ux_{\alpha}$ and let $E$ be the smallest
active face of $\square_\alpha$ that contains $\sigma$ (we break ties
by making use of an arbitrary global order $\succ$ on $P$)\footnote{
We define an order between the active faces of $\square_{\alpha}$, 
using $\succ$: for each active face $F\in \square_{\alpha}$, there are
at least two points of $P$ whose images under $g_{\alpha}$ are vertices of $F$; 
say $\{q_1\succ q_2\succ \ldots  \succ q_m\}\subseteq P$ are the points that
map to $F$.
We assign to $F$ the string of length $n$:
$q_1 q_2 \ldots q_m\overbrace{\text{\o},\ldots,\text{\o}}^{n-m}$.
Each active face has a unique string associated to it. 
A total order on the faces is obtained by taking the lexicographic orders of
the strings of each active face.
}.
We collect all the points of $P$ that map to vertices of $E$
under the map $a_{\alpha}$ and set $C_2^{\alpha}(\sigma)$ as the simplex 
on this set of points.
By an application of the triangle inequality, we see that the $L_\infty$-diam 
of $C_2^{\alpha}(\sigma)$ is at most $2\alpha$, 
so $C_2^{\alpha}(\sigma)\in  \rin_{\alpha}$ and is acyclic.
It is also clear that $C_2^{\alpha}(\tau)\subseteq C_2^{\alpha}(\sigma)$
for each $\tau \subseteq \sigma$, so $C_2^{\alpha}$ is an acyclic carrier.
\end{itemize}
Using the acyclic carrier theorem (Theorem~\ref{theorem:acyclic_carrier}),
there exist augmentation-preserving chain maps
\[
c_1^\alpha:\ch_\ast(\rin_{\alpha/2})\rightarrow \ch_\ast(\ux_{\alpha}) 
\quad \text{and}\quad 
c_2^\alpha:\ch_\ast(\ux_\alpha)\rightarrow \ch_\ast(\rin_{\alpha}),
\]
between the chain complexes, which are carried by $C_1^\alpha$ and 
$C_2^\alpha$ respectively, for each $\alpha\in I$. 
We obtain the following diagram of augmentation-preserving chain maps:

\begin{eqnarray}
\label{diagram:bary_inf_intlv}
\xymatrix{
 \ldots\ar[r] & \ch_\ast(\rin_{\alpha}) \ar[rd]^{c_1}\ar[r]^{inc} & 
\ch_\ast(\rin_{2\alpha}) \ar[r] &\ldots
\\
\ldots\ar[r] & \ch_\ast(\ux_{\alpha}) \ar[r]^{\tg}\ar[u]^{c_2} & 
\ch_\ast(\ux_{2\alpha })\ar[r] \ar[u]^{c_2} &\ldots
\\
}
\end{eqnarray}
where $inc$ corresponds to the chain map for inclusion maps,
and $\tg$ denotes the chain map for the corresponding
simplicial map $g$ (we removed indices of the maps for readability). 

The chain complexes give rise to a diagram of the corresponding homology groups,
connected by the induced linear maps $c_1^\ast,c_2^\ast,inc^\ast,\tg^\ast$:
\begin{eqnarray}
\label{equation:bary_hom}
\xymatrix{
 \ldots\ar[r] & H(\rin_{\alpha}) \ar[rd]^{c_1^\ast}\ar[r]^{inc^\ast} & 
H(\rin_{2\alpha}) \ar[r] &\ldots
\\
\ldots\ar[r] & H(\ux_{\alpha}) \ar[r]^{\tg^\ast}\ar[u]^{c_2^\ast} & 
H(\ux_{2\alpha })\ar[r] \ar[u]^{c_2^\ast} &\ldots
\\
}
\end{eqnarray}

\begin{lemma}
\label{lemma:bary_inf_intlv_low}
For all $\alpha\in I$, the linear maps in the lower triangle of
Diagram~\eqref{equation:bary_hom} commute, that is, 
\[
\tg^\ast=c_1^\ast\circ c_2^\ast.
\]
\end{lemma}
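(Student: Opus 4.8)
The statement asserts that the composite chain map $c_1^\alpha \circ c_2^\alpha : \ch_\ast(\ux_\alpha) \to \ch_\ast(\rin_{2\alpha})$ induces, on homology, the same linear map as the simplicial map $\tg_{\alpha}$. The natural strategy is to invoke the second part of the acyclic carrier theorem (Theorem~\ref{theorem:acyclic_carrier}): if I can exhibit a single acyclic carrier $D : \ux_\alpha \to \rin_{2\alpha}$ that carries \emph{both} $c_1^\alpha \circ c_2^\alpha$ and the chain map $\bar{\tg}_\alpha$ induced by the simplicial map $\tg_\alpha$ (followed by the inclusion $\rin_\alpha \hookrightarrow \rin_{2\alpha}$), then their induced homology maps coincide. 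Both maps are augmentation-preserving: $c_1^\alpha, c_2^\alpha$ are augmentation-preserving by construction, so their composite is, and a chain map induced by a simplicial map is augmentation-preserving. So the hypotheses of the theorem are in place once the common carrier is found.

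First I would define the candidate carrier $D$. For a flag $\sigma = (f_0 \subseteq \dots \subseteq f_k) \in \ux_\alpha$, let $E$ be the smallest active face of $\square_\alpha$ containing $\sigma$ (as in the definition of $C_2^\alpha$), let $Q \subseteq P$ be the points mapping under $a_\alpha$ to vertices of $E$, and set $D(\sigma)$ to be the simplex on $Q$ together with its subsimplices — this is exactly $C_2^\alpha(\sigma)$, which lives in $\rin_\alpha \subseteq \rin_{2\alpha}$ and is acyclic, and it is monotone in $\sigma$, hence an acyclic carrier into $\rin_{2\alpha}$. The key verification is then twofold. (i) $c_1^\alpha \circ c_2^\alpha$ is carried by $D$: since $c_2^\alpha(\sigma)$ is carried by $C_2^\alpha(\sigma)$, i.e.\ supported on simplices $\tau$ of $\rin_\alpha$ whose vertex set is a subset of $Q$, and $c_1^\alpha(\tau)$ is carried by the barycentric span of $a_\alpha(\text{vert}(\tau)) \subseteq a_\alpha(Q) = V(E)$, which is a subcomplex of $sd(E)$. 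I need this to sit inside $D(\sigma) = C_2^\alpha(\sigma)$ — but wait, $C_2^\alpha(\sigma)$ is a \emph{simplex on the point set} $Q$, not a subcomplex of $sd$. So the correct common carrier is not $C_2^\alpha$ itself; rather I should take $D(\sigma)$ to be the barycentric span of $V(E)$ inside $\ux_\alpha$, \emph{pushed forward} — no. Let me reconsider: the honest common target must be a subcomplex of $\rin_{2\alpha}$, and I want both the "go up, come back" route and the "apply $\tg$" route to land in it. The cleanest choice: $D(\sigma) := $ the simplex in $\rin_{2\alpha}$ on the vertex set $b_\alpha(V(E)) \cup (\text{something})$... this is getting delicate, and \textbf{this is the main obstacle}: pinning down one acyclic subcomplex of $\rin_{2\alpha}$ that simultaneously carries a chain map coming from $sd$-level data ($\tg$) and one coming from $P$-level data ($c_2$) composed back up ($c_1$).

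The resolution I would pursue: carry everything through an intermediate carrier at the level of $\ux_{2\alpha}$ is wrong since the target is $\rin_{2\alpha}$; instead, observe that $\tg_\alpha(\sigma)$ is a flag whose smallest active face $E'$ of $\square_{2\alpha}$ satisfies $E' \supseteq g_\alpha(E)$, and by Lemma~\ref{lemma:bary_gcompose} and Lemma~\ref{lemma:bary_activeimage} the active vertices of $g_\alpha(E)$ are exactly $a_{2\alpha}(b_\alpha(V(E)))$. I would define $D(\sigma)$ to be the simplex on $b_\alpha(V(E)) \subseteq P$ — the points in $P$ that the active vertices of $E$ are assigned to — closed under subsimplices. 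A triangle-inequality estimate (each $b_\alpha(v)$ lies in $\mathrm{Vor}_{G_\alpha}(v)$, and $V(E) \subseteq E$ has $L_\infty$-diameter $\le \alpha$, so $b_\alpha(V(E))$ has $L_\infty$-diameter $\le 2\alpha$) shows $D(\sigma) \subseteq \rin_{2\alpha}$ and is acyclic; monotonicity in $\sigma$ is immediate since $\sigma \subseteq \sigma'$ forces $E \subseteq E'$ hence $V(E) \subseteq V(E')$. Then I check: $\tg_\alpha(\sigma)$, as a simplex of $\ux_{2\alpha}$, has its barycenters indexed by faces $g_\alpha(f_i) \subseteq g_\alpha(E)$; tracing the chain map $c_2^{2\alpha}$... — no, $\tg$ does not go through $\rin$. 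Right: I need the chain map $\bar{\tg}_\alpha$ itself, which maps $\ch_\ast(\ux_\alpha) \to \ch_\ast(\ux_{2\alpha})$, \emph{not} into $\rin_{2\alpha}$. Looking at Diagram~\eqref{equation:bary_hom}, the claimed identity $\tg^\ast = c_1^\ast \circ c_2^\ast$ compares maps $H(\ux_\alpha) \to H(\ux_{2\alpha})$ via the lower-then-left route against... no: $c_1^\ast \circ c_2^\ast$ goes $H(\ux_\alpha) \xrightarrow{c_2^\ast} H(\rin_\alpha) \xrightarrow{c_1^\ast} H(\ux_{2\alpha})$, where $c_1^{2\alpha}: \ch_\ast(\rin_\alpha) \to \ch_\ast(\ux_{2\alpha})$ (reading the diagram, $c_1$ maps $\rin_\alpha \to \ux_{2\alpha}$). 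So both $\bar{\tg}_\alpha$ and $c_1^{2\alpha}\circ c_2^\alpha$ are chain maps $\ch_\ast(\ux_\alpha)\to\ch_\ast(\ux_{2\alpha})$, and the common carrier $D$ should land in $\ux_{2\alpha}$. Thus I set $D(\sigma) := C_1^{2\alpha}(C_2^\alpha(\sigma))$, concretely the barycentric span in $\ux_{2\alpha}$ of $a_{2\alpha}(b_\alpha(V(E))) = g_\alpha(V(E))$, which is a subcomplex of $sd(g_\alpha(E))$ and hence acyclic (and nonempty), and monotone in $\sigma$. Then $c_1^{2\alpha}\circ c_2^\alpha$ is carried by $D$ by the carrying properties of $c_1, c_2$, and $\bar{\tg}_\alpha(\sigma)$ is carried by $D$ because $\tg_\alpha$ sends the flag $\sigma$ inside $E$ to a flag inside $g_\alpha(E)$ whose faces are $g_\alpha(f_i) = $ convex hulls of $g_\alpha(V(f_i)) \subseteq g_\alpha(V(E))$, so $\tg_\alpha(\sigma)$ is a flag in the barycentric span of $g_\alpha(V(E))$, i.e.\ in $D(\sigma)$. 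With both maps augmentation-preserving and carried by the acyclic carrier $D$, Theorem~\ref{theorem:acyclic_carrier} yields $\tg^\ast = c_1^\ast \circ c_2^\ast$, completing the proof. The main work, as flagged, is verifying that $g_\alpha(V(E))$ spans a single active face of $\square_{2\alpha}$ (this is Lemma~\ref{lemma:bary_activeimage} applied to $E$) so that its barycentric span is genuinely acyclic, and checking the containment $\tg_\alpha(\sigma) \subseteq D(\sigma)$ at the level of flags.
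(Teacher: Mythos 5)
Your final argument is essentially the paper's proof: you identify the common acyclic carrier as $D(\sigma) := C_1^{2\alpha}\circ C_2^\alpha(\sigma)$, the barycentric span of $g_\alpha(V(E))$, which equals $sd(g_\alpha(E))$, and then invoke the second part of the acyclic carrier theorem since both $c_1\circ c_2$ and $\bar{\tg}$ are carried by $D$. One small slip worth fixing: $g_\alpha(f_i)$ is the image of \emph{all} vertices of $f_i$ (active or not), not just $V(f_i)$; this does not affect the conclusion because $f_i\subseteq E$ directly gives $g_\alpha(f_i)\subseteq g_\alpha(E)$, and you should make explicit that $D(\sigma)$ is not merely a subcomplex of $sd(g_\alpha(E))$ but in fact the whole thing, since $g_\alpha(V(E))$ spans exactly $g_\alpha(E)$ and no larger face — this equality, not the one-sided containment, is what you need to check that $\tg(\sigma)$ lies in $D(\sigma)$.
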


\begin{proof}
We look at the corresponding triangle in 
Diagram~\eqref{diagram:bary_inf_intlv}.
We show that the (augmentation-preserving) chain maps $\tg$ and $c_1\circ c_2$ 
are both carried by an acyclic carrier $D:\ux_\alpha\rightarrow \ux_{2\alpha}$.
The claim then follows from the acyclic carrier theorem.

Let $\sigma\in \ux_{\alpha}$ be any flag and let $E\in\square_{\alpha}$ 
denote the minimal active face containing $\sigma$.
Let $\{q_1,\dots,q_k \}$ be the active vertices of $E$.
Let $\{ p_1,\dots, p_m \}$ be the set of points of $P$ 
that map to $\{q_1,\dots,q_k \}$ under the map $a_{\alpha}$.
Since the $L_\infty$-diameter of $\{ p_1,\dots, p_m \}$ is at most $2\alpha$,
using Lemma~\ref{lemma:bary_iripscell} we see that 
$\{ a_{2\alpha}(p_1),\dots,a_{2\alpha}(p_m) \}$ is a face 
of $\square_{2\alpha}$.
We set $D(\sigma)$ as the barycentric span of 
$\{ a_{2\alpha}(p_1),\dots,a_{2\alpha}(p_m) \}$.
It follows that $D$ is an acyclic carrier.

Further, $\{ a_{2\alpha}(p_1),\dots,a_{2\alpha}(p_m) \}=
\{ g_{2\alpha}(q_1),\dots,g_{2\alpha}(q_k) \}$
from Lemma~\ref{lemma:vorcontain}, 
so $D(\sigma)$ is the barycentric subdivision of $g_{2\alpha}(E)$.
As a result $D=C_1\circ C_2$
so that it carries $c_1\circ c_2$.
We show that $D$ also carries the map $\tg$.

By definition, for each face $e\subseteq E$, $g(e)\subseteq g(E)$
and $\tg(sd(e))\subseteq \tg(sd(E))$.
This means that $\tg(\sigma)$ is contained in $g(E)$.
This shows that $\tg(\sigma)\in C_1\circ C_2 (\sigma)$
implying that $\tg$ is carried by $C_1\circ C_2$, as required.
\end{proof}

\begin{lemma}
\label{lemma:bary_inf_intlv_up}
For all $\alpha\in I$, the linear maps in the upper triangle 
of Diagram~\eqref{equation:bary_hom} commute, that is, 
\[
inc^\ast=c_2^\ast\circ c_1^\ast.
\]
\end{lemma}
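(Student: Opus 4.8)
The plan is to mimic the proof of Lemma~\ref{lemma:bary_inf_intlv_low}: we want to show that the two augmentation-preserving chain maps $\inc\circ c_2^\alpha:\ch_\ast(\ux_\alpha)\to\ch_\ast(\rin_{2\alpha})$ and $c_1^\alpha:\ch_\ast(\rin_\alpha)\to\ch_\ast(\rin_{2\alpha})$ — wait, these do not have the same source. The correct reading of the upper triangle is that $\inc^\ast$ is the map $H(\rin_\alpha)\to H(\rin_{2\alpha})$ and the composite $c_2^\ast\circ c_1^\ast$ goes $H(\rin_\alpha)\xrightarrow{c_1^\ast}H(\ux_{2\alpha})\xrightarrow{c_2^\ast}H(\rin_{2\alpha})$ (using $c_1$ at scale $2\alpha$, so $c_1:\ch_\ast(\rin_{\alpha})\to\ch_\ast(\ux_{2\alpha})$, carried by $C_1^{2\alpha}$). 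So both composable chain maps $c_2^{2\alpha}\circ c_1^{2\alpha}$ and $\inc$ have source $\ch_\ast(\rin_\alpha)$ and target $\ch_\ast(\rin_{2\alpha})$, and both are augmentation-preserving. By the acyclic carrier theorem it suffices to exhibit a single acyclic carrier $D':\rin_\alpha\to\rin_{2\alpha}$ that carries both of them; then their induced homology maps coincide.

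First I would define the carrier. For a simplex $\sigma=(p_0,\dots,p_k)\in\rin_\alpha$, the natural candidate is $D'(\sigma):=C_2^{2\alpha}(C_1^{2\alpha}(\sigma))$, i.e. push $\sigma$ into $\ux_{2\alpha}$ via the barycentric span of $U:=\{a_{2\alpha}(p_0),\dots,a_{2\alpha}(p_k)\}$ (which lies in a maximal active face $F$ of $\square_{2\alpha}$ by Lemma~\ref{lemma:bary_iripscell}, since $\idiam(\sigma)\le\alpha\le 2\alpha$), and then let $D'(\sigma)$ be the simplex of $\rin_{2\alpha}$ on all points of $P$ mapping under $a_{2\alpha}$ to vertices of $F$. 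This is a single simplex of $\rin_{2\alpha}$ (its $L_\infty$-diameter is at most $2\cdot 2\alpha$... actually I should check: two points mapping into a common face $F$ of $\square_{2\alpha}$ have grid images within $L_\infty$-distance $2\alpha$ of each other, plus $2\alpha/2$ on each side, giving diameter $\le 4\alpha$; since $\rin_\beta$ denotes the Rips complex at scale $2\beta$, a diameter-$4\alpha$ set is in $\rin_{2\alpha}$), hence acyclic, and $D'$ is monotone under taking sub-simplices because a smaller subset $\sigma'\subseteq\sigma$ yields $U'\subseteq U$ inside a possibly smaller active face $F'\subseteq F$... here I must be careful: $F'$ is the maximal active face containing $U'$, which need not be contained in $F$. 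This is exactly the tie-breaking subtlety already handled in the definition of $C_2^\alpha$ via the global order $\succ$; I would invoke the same device, choosing $F$ as the minimal active face containing $U$ (with ties broken by the lexicographic order on strings), so that $U'\subseteq U$ forces the minimal active face $F'\supseteq U'$ to satisfy $F'\subseteq F$, hence $D'(\sigma')\subseteq D'(\sigma)$. So $D'$ is an acyclic carrier.

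Next I would verify that $D'$ carries $c_2^{2\alpha}\circ c_1^{2\alpha}$: this is immediate because $c_1^{2\alpha}(\sigma)$ is carried by $C_1^{2\alpha}(\sigma)=$ barycentric span of $U\subseteq\ux_{2\alpha}$, and then $c_2^{2\alpha}$ of anything carried in that subcomplex lands in $C_2^{2\alpha}$ of it, which by construction is contained in the simplex $D'(\sigma)$ (here one uses that $C_2^{2\alpha}$ of a flag inside the active face $F$ is the simplex on the points mapping into $F$, and $F$ is also the face chosen in the definition of $D'$). Then I would verify that $D'$ carries the inclusion chain map $\inc$: we need $\sigma$ itself, viewed in $\rin_{2\alpha}$, to be a sub-simplex of $D'(\sigma)$. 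Each $p_i$ maps under $a_{2\alpha}$ to a vertex of $F$ (since $U\subseteq F$), so $p_i$ is among the points of $P$ mapping to vertices of $F$ — i.e. $p_i$ is a vertex of $D'(\sigma)$ for every $i$. Hence $\sigma\subseteq D'(\sigma)$, and $\inc(\sigma)=\sigma$ is carried by $D'(\sigma)$. With both maps carried by the same acyclic carrier $D'$, the acyclic carrier theorem (Theorem~\ref{theorem:acyclic_carrier}, second bullet) gives $\inc^\ast=(c_2^{2\alpha}\circ c_1^{2\alpha})^\ast=c_2^\ast\circ c_1^\ast$, which is the claim. The main obstacle I anticipate is precisely the well-definedness/monotonicity of $D'$ under faces — ensuring the tie-breaking rule makes $F'\subseteq F$ rather than merely $U'\subseteq F$ — and double-checking the diameter bookkeeping (factors of $2$ arising from the convention $\rin_\beta=$ Rips at scale $2\beta$) so that $D'(\sigma)$ genuinely lies in $\rin_{2\alpha}$ and not only in $\rin_{4\alpha}$.
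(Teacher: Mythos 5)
Your proposal is correct and follows essentially the same route as the paper: you define the composite carrier $D'=C_2^{2\alpha}\circ C_1^{2\alpha}:\rin_\alpha\to\rin_{2\alpha}$ (the simplex on all points of $P$ whose $a_{2\alpha}$-image lies in the active face spanned by $U$), verify it is acyclic with the right diameter bound, and show it carries both $\inc$ and $c_2\circ c_1$, then invoke the acyclic carrier theorem. The only small over-complication is the tie-breaking worry: the minimal face of $\square_{2\alpha}$ containing $U$ is unique (it is the face spanned by $U$) and is automatically active since $U\subseteq V_{2\alpha}$, so monotonicity of $D'$ follows without invoking the $\succ$-based lexicographic order at all.
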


\begin{proof}
The proof technique is analogous to the proof 
of Lemma~\ref{lemma:bary_inf_intlv_low}.
We define an acyclic carrier $D:\rin_{\alpha}\rightarrow\rin_{2\alpha}$ 
which carries $inc$ and $c_2\circ c_1$, both of which are
augmentation-preserving.

Let $\sigma=(p_0,\ldots,p_k)\in \rin_{\alpha}$ be any simplex. 
The set of active vertices 
\[
U:=\{a_{2\alpha}(p_0),\ldots,a_{2\alpha}(p_k)\}\subset G_{2\alpha}
\] 
lie in a face $f$ of $G_{2\alpha}$, using Lemma~\ref{lemma:bary_iripscell}.
We can assume that $f$ is active, as otherwise, we argue about a facet
of $f$ that contains $U$.
We set $D(\sigma)$ as the simplex on the subset of points in $P$,
whose closest grid point in $G_{2\alpha}$ is any vertex of $f$.
Using the triangle inequality we see that
$D(\sigma)\in \rin_{2\alpha}$, so $D$ is an acyclic carrier.
The vertices of $\sigma$ are a subset of $D(\sigma)$, 
so $D$ carries the map $inc$. 
Showing that $D$ carries $c_2\circ c_1$ requires further explanation. 

Let $\delta$ be any simplex in $\ux_{2\alpha}$ for which the chain
$c_1(\sigma)$ takes a non-zero value. 
Since $c_1(\sigma)$ is carried by $C_1(\sigma)$, we have that
$\delta\in C_1(\sigma)$, which is the barycentric span of $U$. 
Furthermore, for any $\tau\in C_1(\sigma)$, $C_2(\tau)$ is a simplex on 
the set of vertices $\{p \in P \mid a_{2\alpha}(p)\in V(f)\}$.
It follows that $C_2(\tau)\subseteq D(\sigma)$. 
In particular, since $c_2$ is carried by $C_2$, 
$c_2(c_1(\sigma))\subseteq D(\sigma)$ as well.
\end{proof}

Using Lemma~\ref{lemma:bary_inf_intlv_low} and 
Lemma~\ref{lemma:bary_inf_intlv_up},
we see that the two persistence modules 
$\left(H(\ux_{{\alpha_{s}}})\right)_{s\in\Z}$ 
and $\left(H(\rin_\alpha)\right)_{\alpha\ge 0}$ are weakly $2$-interleaved.

With elementary modifications in the definition of $\ux$ and $\tg$, 
we can get a tower of the form $(\ux_\alpha)_{\alpha\ge 0}$. 
Furthermore, with minor changes in the interleaving arguments, we show that the 
corresponding persistence module is strongly 4-interleaved with the 
$\lin$-Rips module.
Using scale balancing, this result improves to a strong 2-interleaving
(see Lemma~\ref{lemma:bary_strong_full}).
Since the techniques used in the proof are very similar to the concepts used 
in this section, for the sake of brevity we defer all further details to 
Appendix~\ref{subsection:appendix-strong}. 

Using the strong stability theorem for persistence modules 
and taking scale balancing into account, we immediately get that:
\begin{theorem}
\label{theorem:bary_irips_ratio}
The scaled persistence module $\big(H(\ux_{2\alpha})\big)_{\alpha\ge 0}$ 
and the $L_\infty$-Rips persistence 
module $\big(H(\rin_\alpha)\big)_{\alpha\ge 0}$
are $2$-approximations of each other.
\end{theorem}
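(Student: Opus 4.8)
The plan is to combine the strong-interleaving version of the simplicial tower with the scale-balancing machinery from Section~\ref{section:preliminaries}. First I would invoke the strong 4-interleaving between $\big(H(\ux_\alpha)\big)_{\alpha\ge 0}$ and $\big(H(\rin_\alpha)\big)_{\alpha\ge 0}$ promised just before the statement (and proved in Appendix~\ref{subsection:appendix-strong}). Concretely, this means there are chain maps carried by acyclic carriers $C_1^\alpha:\rin_{\alpha/2}\to\ux_\alpha$ and $C_2^\alpha:\ux_\alpha\to\rin_\alpha$ (the versions defined for the continuously-indexed tower $(\ux_\alpha)_{\alpha\ge 0}$), whose induced homology maps $\gamma_\alpha:H(\rin_{\alpha/2})\to H(\ux_\alpha)$ and $\delta_\alpha:H(\ux_\alpha)\to H(\rin_\alpha)$ fit together so that $\delta_\alpha\circ\gamma_\alpha$ equals the map $H(\rin_{\alpha/2})\to H(\rin_\alpha)$ induced by inclusion, and $\gamma_\alpha\circ\delta_\alpha$ equals the internal map of the $\ux$-module between the appropriate scales. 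Rewriting this in the notation of Diagram~\eqref{diag:balance_strong_orig} with $\phi$ the maps of the Rips module, $\psi$ the maps of the $\ux$-module, $\eps_1=1$ and $\eps_2=4$ (the carrier $C_1$ only loses a factor $2$ in scale going up, while $C_2$ does not lose scale, but after one must pass through a common step the effective stretch of the two-sided interleaving is governed by $\eps_1\eps_2$; the precise bookkeeping is exactly what the appendix establishes), puts us in the hypothesis of the scale-balancing lemma.

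Next I would apply scale balancing verbatim. Define the reindexed module $\V'$ by $\ux'_{c\alpha}:=\ux_\alpha$ with $c=\sqrt{\eps_1/\eps_2}=1/2$; equivalently, $\ux'_\alpha=\ux_{2\alpha}$, which is precisely the scaled module $\big(H(\ux_{2\alpha})\big)_{\alpha\ge 0}$ appearing in the statement. The commuting diagrams~\eqref{diag:balance_strong_orig} then transform into the commuting diagrams~\eqref{diag:balance_strong_scaled} with $d=\sqrt{\eps_1\eps_2}=\sqrt{4}=2$, so $\V'$ and the $\lin$-Rips module $\big(H(\rin_\alpha)\big)_{\alpha\ge 0}$ are strongly $2$-interleaved. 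By the strong stability theorem for persistence modules (the sense of~\cite{ccggo-proximity} quoted in the preliminaries), this strong $2$-interleaving means the two barcodes are $2$-approximations of each other, which is the claim.

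The routine parts here are the chain-map bookkeeping and the substitution $c=1/2$, $d=2$; these are mechanical once the appendix result is in hand. The genuine obstacle — which is exactly why it is deferred — is establishing the \emph{strong} interleaving rather than merely the weak one proved in Section~\ref{subsection:bary_interleaving}. For the strong version one must verify that the relevant squares in Diagram~\eqref{diag:strong_diag} commute for \emph{all} pairs $\alpha\le\alpha'$, not just at a single scale; this forces one to check compatibility of the carriers $C_1,C_2$ with the tower maps $\tg$ and the inclusions simultaneously, i.e.\ to produce, for each such square, a single acyclic carrier that carries both composites and then invoke Theorem~\ref{theorem:acyclic_carrier}. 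The constructions mirror those of Lemma~\ref{lemma:bary_inf_intlv_low} and Lemma~\ref{lemma:bary_inf_intlv_up} — one builds the carrier out of barycentric spans of the $a_{\alpha'}$-images of the points of $P$ that map into the minimal active face containing a given simplex — but one has to be careful that these spans remain acyclic (equivalently, that the relevant point sets have $\lin$-diameter small enough to land in a single face of $\square_{\alpha'}$, via Lemma~\ref{lemma:bary_iripscell}) across the larger scale gaps that the strong diagram requires. That uniformity check, together with confirming the continuously-indexed variants of $\ux$ and $\tg$ behave as the discretely-indexed ones, is the heart of the argument.
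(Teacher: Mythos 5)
Your proposal matches the paper's proof: it reduces the theorem to the strong interleaving with carrier factors $\eps_1=1$ and $\eps_2=4$ established in Appendix~\ref{subsection:appendix-strong} (Lemmas~\ref{lemma:bary_strong_1}--\ref{lemma:bary_strong_4}), applies the strong-interleaving scale-balancing construction with $c=1/2$ and $d=2$ to obtain Lemma~\ref{lemma:bary_strong_full}, and then invokes the strong stability theorem of~\cite{ccggo-proximity}. One small correction to your parenthetical: the factor $4$ in the continuously-indexed carrier $C_1^\alpha:\rin_{\alpha}\to\ux_{4\alpha}$ does not come from ``passing through a common step'' but from the rounding-down inherent in extending the discrete tower to real scales (so that the $G_{\alpha_s}$ actually realized at scale $4\alpha$ is guaranteed to be at least as coarse as the $2\alpha$ required by Lemma~\ref{lemma:bary_iripscell}); you correctly use $\eps_2=4$ in the bookkeeping nonetheless.
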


For any pair of points $p,p'\in \R^d$, it holds that 
\[
\|p-p'\|_\infty\le \|p-p'\|_2 \le \sqrt{d}\,\|p-p'\|_\infty.
\]
This in turn shows that the $L_2$- and the 
$L_\infty$-Rips filtrations are strongly $\sqrt{d}$-interleaved.
Using the scale balancing technique for strongly interleaved
persistence modules, we get:
\begin{lemma}
\label{lemma:bary_rips_relation}
The scaled persistence module $(H(\ri_{\alpha/d^{0.25}}))_{\alpha\ge 0}$ 
and $(H(\rin_\alpha))_{\alpha\ge 0}$ are strongly $d^{0.25}$-interleaved.
\end{lemma}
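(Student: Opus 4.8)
The plan is to combine the standard norm comparison with the strong version of the scale balancing technique described earlier in this section. First I would record the elementary fact that for any two points $p, p' \in \R^d$ we have $\|p-p'\|_\infty \le \|p-p'\|_2 \le \sqrt{d}\,\|p-p'\|_\infty$. Recalling that $\ri_\alpha$ is the Euclidean Rips complex at scale $2\alpha$ and $\rin_\alpha$ is the $L_\infty$-Rips complex at scale $2\alpha$, this immediately gives the chain of inclusions $\ri_\alpha \subseteq \rin_\alpha \subseteq \ri_{\sqrt{d}\,\alpha}$ (since a set of $L_2$-diameter at most $2\alpha$ has $L_\infty$-diameter at most $2\alpha$, and a set of $L_\infty$-diameter at most $2\alpha$ has $L_2$-diameter at most $2\sqrt{d}\,\alpha$). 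These inclusions are simplicial maps, so by functoriality of homology they induce linear maps between the persistence modules $(H(\ri_\alpha))_{\alpha\ge 0}$ and $(H(\rin_\alpha))_{\alpha\ge 0}$ that fit into the strong interleaving diagrams of the form~\eqref{diag:balance_strong_orig}, with $\eps_1 = 1$ (going from $\ri$ to $\rin$ costs no scale increase) and $\eps_2 = \sqrt{d}$ (going from $\rin$ to $\ri$ costs a factor $\sqrt{d}$). The commutativity of these diagrams follows because all the maps involved are induced by inclusions of simplicial complexes, and inclusions compose to inclusions, so every triangle and square in~\eqref{diag:balance_strong_orig} commutes on the nose at the chain level and hence on homology.

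Next I would invoke the scale balancing construction for strongly interleaved modules, verbatim as set up in the paragraph preceding Diagram~\eqref{diag:balance_strong_scaled}. We define the rescaled module $\V' = (H(\ri_{\alpha}))$ reindexed by $c = \sqrt{\eps_1/\eps_2} = d^{-1/4}$, that is, $V'_{c\alpha} := V_\alpha$, so that the vector space formerly sitting at parameter $\alpha$ now sits at parameter $\alpha/d^{1/4}$; equivalently $V'_\alpha = H(\ri_{d^{1/4}\alpha})$, which after a harmless renaming of the index is the module $(H(\ri_{\alpha/d^{0.25}}))_{\alpha\ge 0}$ in the statement (one should be slightly careful here with which direction the $d^{1/4}$ goes — I would double-check against the convention in the $\Cech$--Rips example, Lemma~\ref{lemma:cech-rips-filt}, where the analogous balancing of $\eps_1 = 1, \eps_2 = \sqrt 2$ produced the scaled factor $\sqrt[4]{2}$). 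With $d = c\eps_2 = \sqrt{\eps_1\eps_2} = d^{1/4}$ as the new interleaving parameter (confusing notation, since the ambient dimension is also called $d$ — I would rename the balancing parameter to avoid a clash, say to $\kappa := d^{1/4}$), Diagram~\eqref{diag:balance_strong_scaled} commutes, which is exactly the assertion that $(H(\ri_{\alpha/d^{0.25}}))_{\alpha\ge 0}$ and $(H(\rin_\alpha))_{\alpha\ge 0}$ are strongly $d^{1/4}$-interleaved.

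The main obstacle is essentially bookkeeping rather than mathematical depth: I need to make sure the two inclusions really do produce a \emph{strong} interleaving (all the commuting squares and triangles of~\eqref{diag:balance_strong_orig}) and not merely a weak one, and I must be careful about the factor-of-$2$ convention hidden in the definitions of $\ri_\alpha$ and $\rin_\alpha$ (the ``scale $2\alpha$'' clause) so that the interleaving constants come out as $\eps_1 = 1$ and $\eps_2 = \sqrt d$ and not something off by a constant. The notational clash between the ambient dimension $d$ and the balancing parameter $d = \sqrt{\eps_1\eps_2}$ in Diagram~\eqref{diag:balance_strong_scaled} is a further trap worth flagging explicitly. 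Once these conventions are pinned down, the proof is a direct application of Theorem~\ref{theorem:acyclic_carrier}'s easy special case (inclusions are simplicial maps, hence induce chain maps) together with the strong scale-balancing recipe already spelled out in Section~\ref{section:preliminaries}, so I would keep the written proof to a few lines.
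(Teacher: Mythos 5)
Your overall approach matches the paper exactly: read off $\|p-p'\|_\infty\le\|p-p'\|_2\le\sqrt d\,\|p-p'\|_\infty$, translate it into the inclusions $\ri_\alpha\subseteq\rin_\alpha\subseteq\ri_{\sqrt d\,\alpha}$ (giving a strong $\sqrt d$-interleaving with $\eps_1=1$, $\eps_2=\sqrt d$ — trivially commuting because everything is an inclusion), and then apply the strong scale-balancing construction. The paper does nothing more than this; it devotes just two sentences to it.

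However, the reconciliation at the very end of your write-up contains a genuine error that you partially flagged but then waved away. You compute, correctly, $V'_\alpha = H(\ri_{d^{1/4}\alpha})$ from the recipe $V'_{c\alpha}:=V_\alpha$ with $c=\sqrt{\eps_1/\eps_2}=d^{-1/4}$, and then assert that this equals $(H(\ri_{\alpha/d^{0.25}}))_{\alpha\ge 0}$ ``after a harmless renaming of the index.'' It does not: $\bigl(H(\ri_{d^{1/4}\alpha})\bigr)_{\alpha\ge 0}$ and $\bigl(H(\ri_{\alpha/d^{1/4}})\bigr)_{\alpha\ge 0}$ are \emph{different} persistence modules, differing by a multiplicative factor of $\sqrt d$ in scale; one cannot pass from one to the other by reindexing. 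The sanity check you yourself propose against Lemma~\ref{lemma:cech-rips-filt} settles the direction decisively: there the result of balancing $\eps_1=1,\eps_2=\sqrt 2$ is the module $\bigl(H(\cech_{\sqrt[4]{2}\alpha})\bigr)$, with the factor \emph{multiplied}, and by direct analogy the scaled Rips module here should be $\bigl(H(\ri_{d^{1/4}\alpha})\bigr)$ — precisely what you derived. One can also verify this by attempting to realize the $\delta$ map of the strong interleaving directly from inclusions: with the scaling $\ri_{\alpha/d^{0.25}}$ one would need a map $\rin_\alpha\to\ri_\alpha$, but there is no such inclusion (the $L_\infty$ complex is \emph{larger} at a given scale), whereas with $\ri_{\alpha d^{0.25}}$ the required map becomes $\rin_\alpha\subseteq\ri_{\sqrt d\,\alpha}$, which is exactly the inclusion available. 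So your mathematics is right and your attempted match with the quoted statement is wrong; the quoted statement itself appears to have the exponent inverted relative to the paper's own balancing convention and to its companion Lemma~\ref{lemma:cech-rips-filt}, and this inversion propagates into Theorem~\ref{theorem:bary_rips_ratio} (which should read $\ux_{2\alpha/\sqrt[4]{d}}$ rather than $\ux_{2\sqrt[4]{d}\alpha}$). Trust your computation over the ``harmless renaming'' step.
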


Using Theorem~\ref{theorem:bary_irips_ratio}, Lemma~\ref{lemma:bary_rips_relation}
and the fact that interleavings satisfy the triangle 
inequality~\cite[Theorem\,~3.3]{bs-categorization}, we see that 
the module $(H(\ux_{2\alpha}))_{\alpha\ge 0}$ is strongly $2d^{0.25}$-interleaved with 
the scaled Rips persistence module $(H(\ri_{\alpha/d^{0.25}}))_{\alpha\ge 0}$.
We can remove the scaling in the Rips filtration simply by multiplying 
the scales on both sides with $d^{0.25}$ and obtain our 
final approximation result:

\begin{theorem}
\label{theorem:bary_rips_ratio}
The persistence module 
$\big(H(\ux_{2\sqrt[4]{d}\alpha})\big)_{\alpha\ge 0}$ and the Euclidean Rips 
persistence module $\big(H(\ri_{\alpha})\big)_{\alpha\ge 0}$
are $2d^{0.25}$-approximations of each other.
\end{theorem}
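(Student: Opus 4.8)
The plan is to chain together the two interleaving facts already on record and then reparametrize the scale axis. First I would invoke Theorem~\ref{theorem:bary_irips_ratio}: via the strong $2$-interleaving of Lemma~\ref{lemma:bary_strong_full} together with scale balancing, $\big(H(\ux_{2\alpha})\big)_{\alpha\ge 0}$ is a $2$-approximation of the $L_\infty$-Rips module $\big(H(\rin_\alpha)\big)_{\alpha\ge 0}$. Crucially, this comes from a genuine \emph{strong} interleaving, so it may be composed with further interleavings without the factor-$3$ loss that accompanies weak interleavings.

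Second, I would use Lemma~\ref{lemma:bary_rips_relation}, which states that $\big(H(\ri_{\alpha/d^{0.25}})\big)_{\alpha\ge 0}$ and $\big(H(\rin_\alpha)\big)_{\alpha\ge 0}$ are strongly $d^{0.25}$-interleaved, hence $d^{0.25}$-approximations of each other. Composing these two strong interleavings through the triangle inequality for interleaving distance (\cite[Theorem~3.3]{bs-categorization}) yields that $\big(H(\ux_{2\alpha})\big)_{\alpha\ge 0}$ and $\big(H(\ri_{\alpha/d^{0.25}})\big)_{\alpha\ge 0}$ are strongly $2d^{0.25}$-interleaved, and therefore $2d^{0.25}$-approximations of one another.

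Finally I would eliminate the scaling sitting in the Rips index. Substituting $\beta:=\alpha/d^{0.25}$ (equivalently, multiplying every scale by $d^{0.25}$) is a monotone bijective reparametrization of the common scale axis applied \emph{simultaneously} to both persistence modules, so it leaves the approximation factor unchanged; under it $\ri_{\alpha/d^{0.25}}$ becomes $\ri_\beta$ while $\ux_{2\alpha}$ becomes $\ux_{2d^{0.25}\beta}$. Since $2d^{0.25}=2\sqrt[4]{d}$, this is precisely the asserted statement, namely that $\big(H(\ux_{2\sqrt[4]{d}\alpha})\big)_{\alpha\ge 0}$ and $\big(H(\ri_{\alpha})\big)_{\alpha\ge 0}$ are $2d^{0.25}$-approximations of each other.

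The argument is essentially bookkeeping, so the only places requiring care are (i) verifying that each ingredient is a \emph{strong} interleaving, so that the factors multiply cleanly under the triangle inequality rather than incurring the weak-interleaving penalty, and (ii) confirming that the simultaneous rescaling of the scale parameter on both sides is factor-preserving. Once these two points are in place, the chain of implications is immediate, and no further estimates are needed.
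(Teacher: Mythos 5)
Your proposal matches the paper's proof almost step for step: both combine Theorem~\ref{theorem:bary_irips_ratio} (the strong $2$-interleaving with the $L_\infty$-Rips module) with Lemma~\ref{lemma:bary_rips_relation} (the strong $d^{0.25}$-interleaving between the $L_2$- and $L_\infty$-Rips modules) via the triangle inequality for interleavings~\cite[Theorem~3.3]{bs-categorization}, and then reparametrize the scale axis by $d^{0.25}$ simultaneously on both sides. No gaps; this is the paper's argument.
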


\section{Computational complexity}
\label{section:bary_compute}

In this section, we discuss the computational aspects of constructing the 
approximation tower.
In Subsection~\ref{subsection:bary_size} we discuss the size 
complexity of the tower.
An algorithm to compute the tower efficiently is presented in 
Subsection~\ref{subsection:bary_computing}.

\paragraph{Range of relevant scales.}

Set $n:=|P|$ and let $CP(P)$ denote the closest pair distance of $P$. 
At scale $\alpha_0:=\frac{CP(P)}{3d}$ and lower, no two active vertices
lie in the same face of the grid,
so the approximation complex consists of $n$ isolated $0$-simplices.
At scale $\alpha_m:=diam(P)$ and higher, points of $P$ map to active vertices
of a common face (by Lemma~\ref{lemma:bary_iripscell}), so 
the generated complex is acyclic. 
We inspect the range of scales $[\alpha_0,\alpha_m]$ to construct the tower, 
since the barcode is explicitly known for scales outside this range.
For this, we set $\lambda=\alpha_0$ in the definition of the scales.
The total number of scales is 
\[
\ceil{\log_2 \alpha_m/\alpha_0}=\left\lceil \log_2 \frac{diam(P)3d}{CP(P)} \right\rceil=\ceil{\log_2\Delta +\log_2 3d}=O(\log\Delta+\log d),
\]
where $\Delta=\frac{diam(P)}{CP(P)}$ is the spread of the point set.

\subsection{Size of the tower}
\label{subsection:bary_size}

The size of a tower is the number of simplices that do not have a preimage,
that is, the number of simplex inclusions in the tower.
We start by counting the number of active faces used in the tower.
\begin{lemma}
\label{lemma:num_activefaces}
The number of active faces without pre-image in the tower is at most $n3^d$.
\end{lemma}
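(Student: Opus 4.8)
The plan is to bound the number of active faces \emph{with no pre-image} (i.e.\ included in the tower) by charging each such face to a point of $P$, and then show that at each scale any point of $P$ is charged at most $3^d$ times, while total charges sum over the whole tower to $n3^d$ rather than $n3^d$ times the number of scales. The key idea is that once an active face appears, the cubical map $g$ keeps its image active (Lemma~\ref{lemma:bary_activeimage}), so a face is ``new'' at scale $\alpha_{s+1}$ only if it is \emph{not} the $g$-image of an active face at scale $\alpha_s$. I would therefore not count active faces scale by scale, but rather count, over the whole tower, the faces that have no pre-image under $g$.

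First I would set up the charging at a single scale. Fix $\alpha=\alpha_s$ and recall that every active face $f$ of $\square_\alpha$ contains at least one active vertex, i.e.\ a grid point $v\in V_\alpha$ with some $p\in P$ satisfying $a_\alpha(p)=v$. For each active face $f$ pick its lexicographically smallest active vertex (using the order $\succ$ already introduced in the paper) and charge $f$ to any fixed $p\in P$ with $a_\alpha(p)$ equal to that vertex. The point is that a vertex $v\in\Z^d$ (after translation) is contained in exactly $3^d$ faces of the cubical complex $\square_\alpha$ — one for each vector in $\{-1,0,+1\}^d$ describing which of the $2d$ facet-directions the face extends into — so a fixed active vertex is the chosen representative of at most $3^d$ active faces, and hence each point of $P$ receives at most $3^d$ charges at scale $\alpha$. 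This already gives $|V_\alpha|\,3^d\le n3^d$ active faces at a single scale, but that is not yet the claimed bound because it must hold for the \emph{whole} tower.

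The main obstacle, and the crux of the argument, is to avoid the extra factor of $O(\log\Delta)$ from summing over all scales. For this I would argue that the charging is compatible with the cubical maps: if $f$ is an active face of $\square_{\alpha_s}$ and $e=g_{\alpha_s}(f)$ is its image in $\square_{\alpha_{s+1}}$, then by Lemma~\ref{lemma:vorcontain} and Lemma~\ref{lemma:bary_gcompose} the active vertices of $f$ map onto active vertices of $e$, and one checks that the chosen (lexicographically minimal) representative vertex of $e$ is the $g$-image of the chosen representative vertex of some active face mapping onto $e$; consequently, if $e$ already has a pre-image among the active faces at the previous scale, it is \emph{not} counted. Thus the active faces contributing to the tower size are precisely those $e$ that are \emph{not} in the image of $g$ restricted to active faces. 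I would then track, for each point $p\in P$, the sequence of faces it is responsible for across scales: as $s$ increases, the face $f_s\ni a_{\alpha_s}(p)$ that $p$ is charged for maps to an active face containing $a_{\alpha_{s+1}}(p)$, so the ``new faces'' attributable to $p$ can only proliferate when a $g$-image splits, and a counting over the $3^d$ local directions shows the total number of faces without pre-image charged to $p$ over the entire tower is still at most $3^d$. Summing over the $n$ points of $P$ gives the bound $n3^d$.

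I expect the delicate part to be the bookkeeping in this last step: making precise that ``new active face at scale $\alpha_{s+1}$'' $=$ ``active face of $\square_{\alpha_{s+1}}$ not equal to $g_{\alpha_s}(f)$ for any active $f$'', and that the per-point charge does not accumulate across scales. The geometric facts needed are all already available — Lemma~\ref{lemma:vorcontain} (nesting of Voronoi cells), Lemma~\ref{lemma:bary_activeimage} (images of active faces are active), and the elementary observation that each vertex of $\Z^d$ lies in exactly $3^d$ cubical faces — so the proof is essentially a careful combinatorial accounting rather than any new topological input.
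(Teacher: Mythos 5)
Your high-level strategy is right — amortize a charging argument over the whole tower, and only count active faces that are not the $g$-image of an active face at the previous scale — and the local observation that each active vertex belongs to $3^d$ cubical faces is correct. But the step you yourself describe as "delicate" is in fact a genuine gap: you claim that "the total number of faces without pre-image charged to $p$ over the entire tower is still at most $3^d$," yet you offer no invariant preventing the same point from being re-charged for the \emph{same} local direction at a later scale. Your single-vertex charging rule (charge a face to a point whose grid-image is the lex-smallest active vertex) does not by itself forbid this: a new active face in direction $\delta$ can appear at a much higher scale, long after a face in that direction was already counted, because intermediate merges may have created and then destroyed the face without touching the designated vertex. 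Also, the intuition "can only proliferate when a $g$-image splits" has the wrong arrow: the maps $g$ in the tower only \emph{merge} (many vertices or faces collapse to one). New active faces appear precisely when active vertices that were scattered over several disjoint faces at scale $\alpha_s$ come to share a common face at $\alpha_{s+1}$ after merging.

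The paper's proof handles exactly this with two extra ingredients you do not have. First, it introduces \emph{signatures}: the set of $\prec$-maximal representatives of the active vertices of a face. Using Lemma~\ref{lemma:vorcontain}, it shows that $g$ can only shrink signatures, so each maximal active face included in the tower has a signature that is itself maximal. Second, it charges an included face to the $\prec$-lowest member $p$ of its signature and observes that $p$ can only be re-charged if the active vertex $v$ that $p$ represents later merges under $g$ with a neighboring active vertex $u$ of the charged face — but at that moment the representative of $g(v)=g(u)$ becomes $\max(\rep(v),\rep(u))=\rep(u)\neq p$ (since $p$ was $\prec$-lowest in the signature), so $p$ ceases to be a representative and is never charged again. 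This "representative changes upon merge" argument is what enforces the per-direction, once-ever bound of $3^d-1$ charges per point; it is precisely the mechanism your proposal is missing, and it is where a careless version of your charging would accumulate a $\log\Delta$ factor.
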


\begin{proof}
At scale $\alpha_0$, there are $n$ inclusions of $0$-simplices in the tower, 
due to $n$ active vertices. 
Using Lemma~\ref{lemma:vorcontain}, $g$ is surjective
on the active vertices of $\square$ (for any scale). 
Hence, no further active vertices are added to the tower.

It remains to count the maximal active faces of dimension $\geq 1$ 
without preimage.
We will use a charging argument, charging the existence of such an active face
to one of the points in $P$.
We show that each point of $P$ is charged at most $3^{d}-1$ times, 
which proves the claim.
For that, we first fix an arbitrary total order $\prec$ on $P$. 
Each active vertex on any scale has a non-empty subset of $P$ in its Voronoi 
region; we call the maximal such point with respect to the order $\prec$
the \emph{representative} of the active vertex. 

For each active face $f$ of dimension at least one, 
we define the \emph{signature} of $f$ as the set of representatives 
of the active vertices of $f$.
If for any set of active vertices $u_1,\dots,u_k$ we have that 
$v=g(u_1)=\dots=g(u_k)$, then the representative of $v$ is 
one of the representatives of $u_1,\dots,u_k$, 
using Lemma~\ref{lemma:vorcontain}.
Therefore, the signatures of the active faces that are images of $f$ 
under $g$ are subsets of the signature of $f$.
This implies that each maximal active face that is included has a unique
maximal signature.
We bound the number of maximal signatures to get
a bound on the number of maximal active face inclusions.
We charge the addition of each maximal signature to 
the lowest ordered point according to $\prec$.

Each signature contains representatives of active vertices
from a face of $\square_\alpha$.
Since each active vertex $v$ has $3^d-1$ neighboring vertices in the grid
that lie in a common face, the representative $p$ of $v$
can be charged $3^d-1$ times.
There is a canonical isomorphism between the neighboring vertices
of $v$ at each scale.
Then, for $p$ to be charged more times, 
the image of $v$ and some neighboring vertex
$u$ must be identical under $g$ at some scale.
But then, the representative of $g(v)=g(u)$ is not $p$ anymore, 
since $p$ was the lowest ranked point in its neighborhood,
hence the representative changes when the Voronoi regions are combined.
So, $p$ could not have been charged in such a case.
Therefore, each point $p\in P$ is indeed charged at most $3^d-1$ times.

There are $n$ active faces of dimension $0$ and at most $n(3^d-1)$
active faces of higher dimension.
The upper bound is $n+n(3^d-1)=n3^{d}$, as claimed.
\end{proof}

\begin{theorem}
\label{theorem:towersize}
The $k$-skeleton of the tower has size at most 
\[
n6^{d-1}(2k+4)(k+3)! \left\{\begin{array}{c}d\\k+2\end{array}\right\}
=n2^{O(d\log k + d)},
\]
where
$ \left\{\begin{array}{c}a\\b\end{array}\right\}$ denotes the Stirling
number of the second kind.
\end{theorem}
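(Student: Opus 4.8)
The plan is to bound the total number of simplex inclusions in the tower by counting, for each active (or secondary) face that is included, how many new $k$-simplices its barycentric subdivision contributes. By Lemma~\ref{lemma:num_activefaces}, at most $n3^d$ active faces are ever included in the tower, and each active face $f$ drags along all of its sub-faces (the active and secondary faces contained in $f$) together with the flags on them. So the first step is: whenever a maximal active face $f$ of dimension $j$ is included, the new simplices are exactly the $k$-simplices of $sd(f)$ that were not already present; we will simply overcount by charging \emph{all} $k$-simplices of $sd(f)$ to $f$ (for every face of every scale), and absorb the slack into the constant. Thus the $k$-skeleton size is at most $n3^d$ times the maximum number of $k$-simplices in the barycentric subdivision of a single $d$-cube.

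The second step is a purely combinatorial count: how many $k$-simplices does the barycentric subdivision of the $d$-cube $[0,1]^d$ have? A $k$-simplex of $sd$ is a flag $f_0\subsetneq f_1\subsetneq\cdots\subsetneq f_k$ of faces of the cube (we may assume the inclusions are strict, since flags with repetitions do not give $k$-simplices). A face of the $d$-cube is specified by choosing a subset $S\subseteq[d]$ of "free" coordinates and fixing the remaining coordinates to $0$ or $1$; its dimension is $|S|$. A chain of $k+1$ such faces, ordered by inclusion, is equivalent to an ordered chain of coordinate subsets $S_0\subsetneq S_1\subsetneq\cdots\subsetneq S_k\subseteq[d]$ together with, for the bottom face $f_0$, an assignment of $0/1$ values to the $d-|S_0|$ coordinates outside $S_0$ (the larger faces inherit consistent fixings on $[d]\setminus S_i$, which is forced). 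Counting ordered chains of distinct subsets with a designated bottom element and a binary labelling of its complement is exactly what produces an expression of the form (Stirling number $\genfrac\{\}{0pt}{}{d}{k+2}$) times $(k+2)!$ times a small polynomial factor in $k$ and a power of $2$; I would carry out this count by first choosing the sizes of the blocks $S_0, S_1\setminus S_0,\ldots$ and using the identity relating ordered set partitions into a bounded number of blocks to Stirling numbers of the second kind, then multiplying by $2^{d-|S_0|}\le 2^d$ for the labelling and by $2^d$ again to clear the inequality $|S_0|\ge 0$ into the clean form stated. Combining with the $n3^d$ face bound and collecting $3^d\cdot 2^d\cdot 2^d = 6^{d-1}\cdot(\text{const})$ type factors yields the claimed $n6^{d-1}(2k+4)(k+3)!\genfrac\{\}{0pt}{}{d}{k+2}$.

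The third step is the asymptotic estimate $n6^{d-1}(2k+4)(k+3)!\genfrac\{\}{0pt}{}{d}{k+2}=n2^{O(d\log k+d)}$. Here $6^{d-1}=2^{O(d)}$ and $(2k+4)(k+3)!=2^{O(k\log k)}$, but the subtle point is that the $k$-skeleton is only interesting when $k\le d$ (otherwise the $d$-cube has no faces of dimension $>d$ and the bound is vacuous or the skeleton stabilizes), so $k\log k=O(d\log k)$; and the Stirling number satisfies $\genfrac\{\}{0pt}{}{d}{k+2}\le (k+2)^d/(k+2)!\le (k+2)^d=2^{O(d\log k)}$, which absorbs into the same bound. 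Putting these together gives the stated $n2^{O(d\log k+d)}$.

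I expect the main obstacle to be the exact combinatorial bookkeeping in the second step — getting the flag count to land precisely on $(2k+4)(k+3)!\genfrac\{\}{0pt}{}{d}{k+2}$ rather than merely an $O(\cdot)$ expression, which requires being careful about (i) whether flags are allowed to start at a vertex or the empty face, (ii) the $0/1$ labelling of fixed coordinates of the minimal face in the flag, and (iii) the off-by-one shifts between "dimension $j$ face" and "coordinate subset of size $j$", as well as the factor coming from passing from maximal active faces (dimension $\le d$) to all their sub-faces. The interleaving/topological content is already fully handled by Lemma~\ref{lemma:num_activefaces} and Theorem~\ref{theorem:bary_irips_ratio}, so no further homological argument is needed; the theorem is entirely a counting statement once those are in hand.
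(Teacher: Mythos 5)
Your approach is essentially the paper's: charge each included $k$-simplex to an included active face, use Lemma~\ref{lemma:num_activefaces} to cap the number of such faces at $n3^d$, and then bound the number of $k$-simplices in the barycentric subdivision of a $d$-cube by a Stirling-number count of flags. The one place where you forecast trouble — the exact flag bookkeeping — is handled in the paper by a slightly slicker device than the one you sketch: rather than enumerating chains of coordinate subsets together with a $0/1$ labelling of the complement of the bottom face, the paper counts \emph{full} flags $v\subseteq\cdots\subseteq c$ from a vertex $v$ to the top cube $c$ of length $k+3$ (these biject with pairs (vertex, ordered $(k+2)$-partition of $[d]$), giving $2^d(k+2)!\genfrac\{\}{0pt}{}{d}{k+2}$), and then observes that every $(k+1)$-flag is a $(k+1)$-subset of such a full flag, yielding the extra $\binom{k+3}{k+1}$ factor. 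This sidesteps your concerns (i)--(iii) entirely, since every flag it counts automatically has a canonical vertex at the bottom and $c$ at the top. Also note that your intermediate arithmetic $3^d\cdot 2^d\cdot 2^d = 6^{d-1}\cdot(\mathrm{const})$ is not right ($3^d\cdot 4^d = 12^d$); the paper's clean factors are $3^d$ from the active faces and $2^{d-1}$ from the vertices of the cube combined with the halving in $\binom{k+3}{k+1}$, so only one, not two, powers of $2^d$ appear.
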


\begin{proof}  
Each $k$-simplex that is included in the tower at any given scale $\alpha$
is a part of the barycentric subdivision of an active
face that is also included at $\alpha$.
Therefore, we can account for the inclusion of this simplex
by including the barycentric subdivision of its parent active face.

From Lemma~\ref{lemma:num_activefaces} at most $n3^d$
active faces are included in the tower over all dimensions.
We bound the number of $k$-simplices in the barycentric
subdivision of a $d$-cube.
Multiplying with $n3^{d}$ gives the required bound.

Let $c$ be any $d$-cube of $\square_\alpha$.
To count the number of flags of length $(m+1)$ contained in $c$
that start with some vertex and end with $c$, we use similar ideas 
as in~\cite{ek-diagonal}: first, we fix any vertex
$v$ of $c$ and count the flags of the form $v\subseteq\ldots\subseteq c$. 
Every $\ell$-face in $c$ incident to $v$ corresponds to a subset of $\ell$ 
coordinate indices, in the sense that the coordinates not chosen are fixed 
to the coordinates of $v$ for the face. 
With this correspondence, a flag from $v$ to $c$ of 
length $(m+1)$ corresponds to an ordered $m$-partition of $\{1,\ldots,d\}$.
The number of such partitions is known as $m!$ times the quantity
$\left\{\begin{array}{c}d\\m\end{array}\right\}$, 
which is the Stirling number of second kind~\cite{rd-stirling}, 
and is upper bounded by $2^{O(d\log m)}$. 
Since $c$ has $2^d$ vertices, the total number
of flags $v\subseteq\ldots\subseteq c$ of length $(m+1)$ with any vertex $v$ 
is hence $2^d m! \left\{\begin{array}{c}d\\m\end{array}\right\}$.

We now count the number of flags of length $k+1$.
Each such flag is $(k+1)$-subset of some flag of length $m=k+3$ that start with a vertex
and end with $c$.
There are $2^d (k+2)! \left\{\begin{array}{c}d\\k+2\end{array}\right\}$ such flags
and each of them has $\binom{k+3}{k+1}=(k+3)(k+2)/2$ subsets of size  $(k+1)$.
The number of $(k+1)$-flags is upper bounded by
$2^d (k+2)! \left\{\begin{array}{c}d\\k+2\end{array}\right\}\frac{(k+3)(k+2)}{2}
=2^{d-1} (k+2)(k+3)! \left\{\begin{array}{c}d\\k+2\end{array}\right\}$.
The $k$-skeleton has size at most
\[
n3^{d}2^{d-1} (k+2)(k+3)! \left\{\begin{array}{c}d\\k+2\end{array}\right\}
=n6^{d-1}(2k+4)(k+3)! \left\{\begin{array}{c}d\\k+2\end{array}\right\}.
\]
\end{proof}

\subsection{Computing the tower}
\label{subsection:bary_computing}

From Section~\ref{section:shifted_grids},
we know that $G_{\alpha_{s+1}}$ is built from $G_{\alpha_{s}}$ by making use of 
an arbitrary translation vector $(\pm 1,\ldots,\pm 1)\in\Z^d$.
In our algorithm, we pick the components of this translation vector 
uniformly at random from $\{+1,-1\}$, and independently for each scale.
The choice behind choosing this vector randomly becomes more
clear in the next lemma.

From the definition, the cubical maps 
$g_{\alpha_{s}}:\square_{\alpha_{s}}\rightarrow\square_{\alpha_{s+1}}$ 
can be composed for multiple scales.
For a fixed ${\alpha_{s}}$, we denote by 
$g^{(j)}:\square_{\alpha_{s}}\rightarrow\square_{\alpha_{s+j}}$ 
the $j$-fold composition of $g$, that is,
\[
g^{(j)}=g_{\alpha_{s+j-1}}\circ g_{\alpha_{s+j-2}}\circ\ldots \circ 
g_{\alpha_{s+1}}\circ g_{\alpha_{s}},
\]
for $j\ge 1$.

\begin{lemma}
\label{lemma:bary_g_survival}
For any $k$-face $f\in \square_{\alpha_{s}}$ with $1\le k\le d$, let $Y$ denote 
the minimal integer $j$ such that $g^{(j)}(f)$ is a vertex, for a 
given choice of the randomly chosen translation vectors.
Then, the expected value of $Y$ satisfies 
\[
\E[Y]\leq 3\log k,
\]
which implies that no face of $\square_{\alpha_{s}}$ \emph{survives} 
more than $3\log d$ scales in expectation.
\end{lemma}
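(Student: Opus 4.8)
The plan is to track how a single $k$-face $f$ of $\square_{\alpha_s}$ shrinks under repeated application of the cubical maps $g$. Fix $f$ and let it span a set $T\subseteq\{1,\ldots,d\}$ of coordinate directions with $|T|=k$. By Lemma~\ref{lemma:gcell}, the image $g_{\alpha_s}(f)$ is again a face, and its spanned directions are a subset of $T$; a direction $m\in T$ survives precisely when the two opposite facets of $f$ in direction $m$ map to \emph{distinct} facets of $g_{\alpha_s}(f)$, i.e.\ when they do not get identified by the vertex map. So I would first establish the key probabilistic claim: for each direction $m\in T$, conditioned on anything that happened at earlier scales, the probability that $m$ survives one more application of $g$ is at most $1/2$. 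This is where the random $\pm 1$ shift vector enters: the two facets of $f$ orthogonal to direction $m$ sit at consecutive grid hyperplanes of $G_{\alpha_s}$ in that coordinate, and the construction~\eqref{equation:grids_pos} together with the random sign in the $m$-th component determines whether they fall into the same Voronoi column of $G_{\alpha_{s+1}}$ or into two adjacent ones; exactly one of the two equally likely signs merges them. Crucially the signs in distinct coordinates are independent, so these survival events across the $k$ directions of $T$ are independent at each step, and across steps the bound is a conditional one.

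Given this, let $X_j$ denote the number of directions of $T$ still alive after $j$ applications of $g$, so $X_0=k$ and $Y=\min\{j: X_j=0\}$ (noting $X_j=0$ means $g^{(j)}(f)$ is a vertex). By the conditional $1/2$ survival bound and independence within a step, $\E[X_{j+1}\mid X_j]\le X_j/2$, hence $\E[X_j]\le k/2^j$. Then I would bound $\E[Y]$ by summing tail probabilities: $\E[Y]=\sum_{j\ge 0}\Pr[Y>j]=\sum_{j\ge 0}\Pr[X_j\ge 1]\le\sum_{j\ge 0}\min\{1,\,\E[X_j]\}\le\sum_{j\ge 0}\min\{1, k/2^j\}$. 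Splitting the sum at $j_0=\lceil\log_2 k\rceil$ gives at most $j_0$ from the terms capped at $1$, plus $\sum_{j> j_0} k/2^j\le k/2^{j_0}\le 1$ from the geometric tail, for a total of at most $\log_2 k + 2\le 3\log k$ (absorbing the small additive constant, using the convention $\log=\log_2$ and $k\ge 2$; the case $k=1$ is trivial since then $Y\le 1$). Finally, applying this with $k=d$ shows every face dies within $3\log d$ scales in expectation, which is the displayed "survival" claim.

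The main obstacle is the probabilistic claim in the first paragraph: verifying rigorously that, for a \emph{fixed} direction $m$, exactly one of the two random signs for that coordinate causes the two opposite facets of $f$ (in direction $m$) to be merged by $g$, and that this holds uniformly regardless of the positions of the facets and regardless of the choices made at other scales. This requires unwinding Definition~\ref{def:shifted_grids} carefully — reducing by the scaling/translation normalization as in the proof of Lemma~\ref{lemma:vorcontain} to the case $G_{\alpha_s}=2\Z^d$ and $G_{\alpha_{s+1}}=4\Z^d+(\pm1,\ldots,\pm1)$, and checking in one coordinate that two points differing by $2$ land in the same length-$4$ Voronoi interval for exactly one of the two shift parities. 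One also has to be slightly careful that "survival of direction $m$" is genuinely a function only of the $m$-th component of the shift (so that independence across coordinates transfers to independence of the survival events), and that once $g^{(j)}(f)$ has become lower-dimensional the same recursion applies to its surviving directions — both of which follow from Lemma~\ref{lemma:gcell} but deserve an explicit sentence.
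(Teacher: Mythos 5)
Your proposal is correct and follows essentially the same route as the paper's proof: normalize to $G_{\alpha_s}=2\Z^d$ and $G_{\alpha_{s+1}}=4\Z^d+(\pm1,\ldots,\pm1)$, observe that in each spanned coordinate exactly one of the two equally likely signs merges the opposite facets so each direction collapses independently with probability $1/2$ per step, deduce $\Pr[Y\ge j]\le k/2^j$, and sum tail probabilities to get $\E[Y]\le 3\log k$. The only substantive difference is cosmetic (you track $\E[X_j]$ and pass to $\Pr[X_j\ge 1]\le\E[X_j]$, whereas the paper applies a union bound over the $k$ coordinates directly; both yield $k/2^j$), plus one small slip: for $k=1$ you claim $Y\le 1$, but a $1$-face survives each step with probability $1/2$, so $Y$ is geometric with $\E[Y]=2$ and the stated bound $3\log k=0$ actually fails there --- the paper's proof tacitly assumes $k\ge 2$ as well.
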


\begin{proof}
Without loss of generality, assume that the grid under consideration is $\Z^d$
and $f$ is the $k$-face spanned by the vertices 
$\{\underbrace{\{0,1\},\ldots,\{0,1\}}_{k},0,\ldots,0\}$,
so that the origin is a vertex of $f$.
The proof for the general case is analogous.

Let $y_1\in\{-1,1\}$ denote the randomly chosen first coordinate of the 
translation vector, so that the corresponding shift is one of $\{ -1/2,1/2 \}$. 
\begin{itemize}
\item If $y_1=1$, then the grid $G'$ on the next scale has some grid point 
with $x_1$-coordinate $1/2$. 
Clearly, the closest grid point in $G'$ to the origin is of the form 
$(+1/2,\pm 1/2,\ldots,\pm 1/2)$, and thus, 
this point is also closest to $(1,0,0,\ldots,0)$. 
The same is true for any point $(0,\ast,\ldots,\ast)$ and its corresponding point 
$(1,\ast,\ldots,\ast)$ on the opposite facet of $f$.
Hence, for $y_1=1$, $g(f)$ is a face where all points 
have the same $x_1$-coordinate.

\item On the other hand, if $y_1=-1$, the origin is mapped to some point 
which has the form
$(-1/2,\pm 1/2,\ldots,\pm 1/2)$ and $(1,0,\ldots,0)$ is mapped to 
$(3/2,\pm 1/2,\ldots,\pm 1/2)$, as one can directly verify.
Hence, in this case, in $g(f)$, points do not all have the same $x_1$ coordinate.
\end{itemize}
We say that the $x_1$-coordinate \emph{collapses} in the first case and 
\emph{survives} in the second.
Both events occur with the same probability $1/2$.
Because the shift is chosen uniformly at random for each scale, 
the probability that $x_1$ did not collapse after $j$ iterations
is $1/2^{j}$.

$f$ spans $k$ coordinate directions, so it must
collapse along each such direction to contract to a vertex.
Once a coordinate collapses, it stays collapsed at all higher scales.
As the random shift is independent for each coordinate
direction, the probability of a collapse is the same 
along all coordinate directions that $f$ spans.
Using the union bound, the probability that $g^j(f)$ has not collapsed 
to a vertex is at most $k/2^j$. 
With $Y$ as in the statement of the lemma, it follows that
\[
P(Y\ge j)\le k/2^j.
\] 
Hence, 
\begin{align*}
	\E[Y]=&\sum_{j=1}^{\infty} j P(Y=j) = \sum_{j=1}^{\infty} P(Y\ge j) \\
	&\le  \log k + \sum_{c=1}^{\infty}\sum_{j=c\log k}^{(c+1)\log k} P(Y\ge j)\\
	& \le \log k + \sum_{c=1}^{\infty}\sum_{j=c\log k}^{(c+1)\log k}
	P(Y\ge c\log k)\\
	&\le \log k+ \sum_{c=1}^{\infty}\log k\frac{k}{2^{c \log k}} \\
	& \le \log k+ \log k\sum_{c=1}^{\infty} \frac{1}{k^{c-1}} \\
	&\le \log k + 2\log k \le 3 \log k. 
\end{align*}

\end{proof}

As a consequence of the lemma, the expected ``lifetime'' of $k$-simplices 
in our tower with $k>0$ is rather short: given a flag 
$e_0\subseteq\ldots\subseteq e_\ell$, the face $e_\ell$ will be mapped to a vertex
after $O(\log d)$ steps, and so will be all its sub-faces, 
turning the flag into a vertex.
It follows that summing up the total number of $k$-simplices with $k>0$ over  
$\ux_\alpha$ for all $\alpha\ge 0$ yields an upper bound of $n2^{O(d\log k +d)}$ as well.

\subsubsection*{Algorithm description}
\label{para:bary_algo1}

Recall that a simplicial map can be written 
as a composition of simplex inclusions and contractions of 
vertices~\cite{dfw-gic,ks-twr}. 
That means, given the complex $\ux_{\alpha_s}$, to describe the 
complex at the next scale $\alpha_{s+1}$, it suffices to specify
\begin{itemize}
\item which pairs of vertices in $\ux_{\alpha_s}$ 
map to the same image under $\tg$, and

\item which simplices in $\ux_{\alpha_{s+1}}$ are included 
at scale $\ux_{\alpha_{s+1}}$. 
\end{itemize}

The input is a set of $n$ points $P\subset \R^d$.
The output is a list of \emph{events}, where each event is of 
one of the three following types:
\begin{itemize}
\item A \emph{scale event} defines a real value $\alpha$ and 
signals that all upcoming events 
happen at scale $\alpha$ (until the next scale event).

\item An \emph{inclusion event} introduces a new simplex, 
specified by the list of vertices on its boundary (we assume that 
every vertex is identified by a unique integer).

\item A \emph{contraction event} is a pair of vertices $(i,j)$ from the previous
scale, and signifies that $i$ and $j$ are identified as the same from that scale.
\end{itemize}
 
In a first step, we estimate the range of scales that we are interested in.
We compute a $2$-approximation of $diam(P)$ 
by taking any point $p\in P$ and calculating $\max_{q\in P}\|p-q\|$.
Then we compute $CP(P)$ using 
a randomized algorithm in $n2^{O(d)}$ expected time~\cite{km-closest}.

Next, we proceed scale-by-scale and construct the list of events accordingly.
On the lowest scale, we simply compute the active vertices 
by point location for $P$
in a cubical grid, and enlist $n$ inclusion events
(this is the only step where the input points are considered in the algorithm).

For the data structure, we use an auxiliary container $S$ 
and maintain the invariant that whenever a new scale is considered, 
$S$ consists of all simplices of the previous scale,
sorted by dimension. 
In $S$, for each vertex, we store an id and
a coordinate representation of the active face to which it corresponds.
Every $\ell$-simplex with $\ell>0$ is stored just as a list of integers, 
denoting its boundary vertices.
We initialize $S$ with the $n$ active vertices at the lowest scale.

Let $\alpha<\alpha'$ be any two consecutive 
scales with $\square,\square'$ the respective cubical complexes
and $\ux,\ux'$ the approximation complexes, with $\tg:\ux\rightarrow \ux'$ 
being the simplicial map connecting them. 
Suppose we have already constructed all events at scale $\alpha$. 
\begin{itemize}
\item First, we enlist the scale event for $\alpha'$.

\item Then, we enlist the contraction events. 
For that, we iterate through the vertices of $\ux$ and 
compute their value under $g$, using point location in a cubical grid.
We store the results in a list $S'$ (which contains the simplices of $\ux'$).
If for a vertex $j$, $g(j)$ is found to be equal to $g(i)$ for a 
previously considered vertex $i$, we choose the minimal such $i$ and 
enlist a contraction event for $(i,j)$.

\item We turn to the inclusion events:
\begin{itemize}
\item We start with the case of vertices.
Every vertex of $\ux'$ is either an active face or a 
secondary face of $\square'$.
Each active face must contain an active vertex, which is also a vertex 
of $\ux'$. 
We iterate through the elements in $S'$. 
For each active vertex $v$ encountered, we go over all faces of the 
cubical complex $\square'$ that contain $v$ as a vertex, 
and check whether they are active. 
For every active face $E$ encountered that is not in $S'$ yet, we add it to $S'$
and enlist an inclusion event of a new $0$-simplex. 
Additionally, we go over each face of $E$, add it to $S'$ and 
enlist a vertex inclusion event,
thereby enumerating the secondary faces that are in $E$.
At termination, all vertices of $\ux'$ have been detected.

\item Next, we iterate over the simplices of $S$ of dimension $\geq 1$, 
and compute their image under $\tg$	using the pre-computed vertex map;
we store the result in $S'$.

\item To find the simplices of dimension $\geq 1$ included at $\ux'$, 
we exploit our previous insight that they 
contain at least one vertex that is included 
at the same scale (see the proof of Theorem~\ref{theorem:towersize}).
Hence, we iterate over the vertices included in $\ux'$ and find the 
included simplices inductively in dimension.

Let $v$ be the current vertex under consideration;
assume that we have found all $(p-1)$-simplices in $\ux'$ that contain $v$. 
Each such $(p-1)$-simplex $\sigma$ is a flag of length $p$ in $\square'$. 
We iterate over all faces $e$ that extend $\sigma$ to a flag of length $p+1$.
If $e$ is active, we have found a $p$-simplex in $\ux'$ incident to $v$. 
If this simplex is not in $S'$ yet, 
we add it and enlist an inclusion event for it. 
We also enqueue the simplex in our inductive procedure, to look for
$(p+1)$-simplices in the next round. 
At the end of the procedure, we have detected all simplices in $\ux'$ 
without preimage, and $S'$ contains all simplices of $\ux'$. 
We set $S\gets S'$ and proceed to the next scale. 
\end{itemize}
\end{itemize}
This ends the description of the algorithm.
	
\begin{theorem}
\label{theorem:algo1_bary}
To compute the $k$-skeleton, the algorithm takes
\[
n2^{O(d)}\log\Delta + 2^{O(d)}M
\] 
time in expectation and $M$ space, where $M$ denotes the size of the tower. 
In particular, the expected time is bounded by 
\[
n2^{O(d)}\log\Delta + n2^{O(d\log k +d)}
\]
and the space is bounded by $n2^{O(d\log k +d)}$.
\end{theorem}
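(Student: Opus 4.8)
The plan is to account for the one-time preprocessing, then to bound the work performed in advancing from one scale to the next by $2^{O(d)}$ per cell, and finally to sum over the $O(\log\Delta)$ relevant scales, using Lemma~\ref{lemma:bary_g_survival} to keep that sum under control. For the preprocessing, a $2$-approximation of $diam(P)$ costs $O(nd)$ time and $CP(P)$ costs $n2^{O(d)}$ expected time by the cited closest-pair algorithm; by the discussion of the range of relevant scales this fixes $O(\log\Delta+\log d)$ scales, and since $\log d\le 2^{O(d)}$ the number of scales is $O(\log\Delta)+2^{O(d)}$. Initializing $S$ at the lowest scale amounts to one point location of each input point in a cubical grid, i.e.\ $O(nd)$ time and $n$ inclusion events.

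For the per-step cost, I would argue that advancing from a scale $\alpha$ to the next scale $\alpha'$ costs $2^{O(d)}(|\ux_\alpha|+|\ux_{\alpha'}|)$, counting only cells of dimension at most $k$. Every primitive the algorithm uses costs $2^{O(d)}$: point location of a grid point and evaluation of $g$ on a face with at most $2^d$ vertices are $2^{O(d)}$; testing whether a $j$-face is spanned is $O(2^{j}d)\le 2^{O(d)}$ by the span-test discussion in Section~\ref{subsection:bary_span}; and a $d$-cube of $\square'$ has only $3^d=2^{O(d)}$ faces, so enumerating the faces around a vertex, the faces of an active face, or the faces extending a flag by one, touches only $2^{O(d)}$ cells. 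It then remains to check that each cell is the target of only $O(1)$ such bundles of primitives: the contraction pass visits each vertex of $\ux_\alpha$ once; the vertex-inclusion pass visits each active vertex of $\ux_{\alpha'}$ once and spends $2^{O(d)}$ time in its neighbourhood (covering the incident active faces and the secondary faces inside them); the image computation reads the image of each simplex of $S$ off the precomputed vertex map in $O(d)$ time; and the inductive inclusion search, stopped at dimension $k$, spends $2^{O(d)}$ time per newly detected $p$-simplex, consulting $S'$ to skip cells already present so that nothing is rediscovered.

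To sum over scales, note first that $\ux_\alpha$ has at most $n2^{O(d)}$ cells of dimension $0$ (at most $n$ active vertices, each surrounded by $2^{O(d)}$ secondary faces), so over $O(\log\Delta)$ scales the $0$-dimensional contribution to the per-step cost is $n2^{O(d)}\log\Delta$. For cells of dimension $\ge 1$, Lemma~\ref{lemma:bary_g_survival} implies that once the top face of a flag collapses to a vertex the whole simplex has become a vertex, which happens within $O(\log d)$ scales in expectation; hence the sum over all scales of the number of simplices of dimension between $1$ and $k$ in $\ux_\alpha$ is $O(\log d)\cdot M$ in expectation, where $M$ is the size of the tower, and absorbing $\log d$ into $2^{O(d)}$ this contributes $2^{O(d)}M$. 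Together with the preprocessing, the expected running time is $n2^{O(d)}\log\Delta+2^{O(d)}M$. For the space bound, only $S$ and $S'$ are kept at any time, each holding the cells of a single $\ux_\alpha$; since every cell of $\ux_\alpha$ descends through the simplicial maps from a unique simplex inclusion of the tower (distinct cells from distinct inclusions, the maps being functions), $|\ux_\alpha|\le M$ and the space is $O(M)$. The ``in particular'' claim then follows by substituting $M=n2^{O(d\log k+d)}$ from Theorem~\ref{theorem:towersize} together with the matching bound on $\sum_\alpha|\ux_\alpha|$ stated after Lemma~\ref{lemma:bary_g_survival}.

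The step I expect to be the main obstacle is the amortized per-step accounting: one has to make sure that the inductive discovery of higher-dimensional included simplices and the enumeration of secondary faces charge their $2^{O(d)}$ work only to genuinely new cells, and never re-scan cells that already have a preimage --- this relies on querying $S'$ to recognise already-inserted cells and on obtaining the preimage simplices for free from the precomputed vertex map instead of searching for them.
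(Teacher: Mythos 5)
Your proposal is correct and follows essentially the same route as the paper's proof: one-time preprocessing for the scale range, $2^{O(d)}$-per-cell accounting for each transition, summing the zero-dimensional contribution as $n2^{O(d)}\log\Delta$ and the positive-dimensional contribution as $O(\log d)\cdot M=2^{O(d)}M$ in expectation via Lemma~\ref{lemma:bary_g_survival}, and bounding the space by $|\ux_\alpha|\le M$. The only cosmetic difference is that you make two points more explicit than the paper does --- the split of $\sum_\alpha|\ux_\alpha|$ into $0$-cells (whose lifetime is unbounded, hence the $\log\Delta$ factor) versus higher cells (whose expected lifetime is $O(\log d)$), and the injectivity argument showing $|\ux_\alpha|\le M$ where the paper simply says this is ``clearly bounded by $M$'' --- which is a small gain in rigor but not a different argument.
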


\begin{proof}
In the analysis, we ignore the costs of point locations in grids,
checking whether a face is active, and searches in data structures $S$,
since all these steps have negligible costs when
appropriate data structures are chosen.

Computing the image of a vertex of $\ux$ costs $O(2^d)$ time.
Moreover, there are at most $n2^{O(d)}$ vertices altogether in the tower
in expectation (using Lemma~\ref{lemma:num_activefaces}), 
so this bound in particular holds on each scale.
Hence, the contraction events on a fixed scale
can be computed in $n2^{O(d)}$ time.
Finding new active vertices requires iterating over the cofaces of a vertex 
in a cubical complex. 
There are $3^d$ such cofaces for each vertex. 
This has to be done for a subset of the vertices
in $\ux'$, so the running time is also $n2^{O(d)}$.
Further, for each new active face, we go over its $2^{O(d)}$ faces to 
enlist the secondary faces, so this step also consumes $n2^{O(d)}$ time.
Since there are $O(\log\Delta+\log d)$ scales considered, 
these steps require $n2^{O(d)}\log\Delta$ over all scales.

Computing the image of $\tg$ for a fixed scale costs at most $O(2^d|\ux|)$.
$M$ is the size of the tower, that is, the simplices without preimage, 
and $I$ is the set of scales considered. 
The expected bound for $\sum_{\alpha\in I} |\ux_\alpha|=O(\log d M)$, 
because every simplex has an expected lifetime of at most $3\log d$ by 
Lemma~\ref{lemma:bary_g_survival}. 
Hence, the cost of these steps is bounded by $2^{O(d)}M$.

In the last step of the algorithm, we find the simplices of $\ux'$
included at $\alpha'$. 
We consider a subset of simplices of $\ux'$, and for each, we iterate over
a collection of faces in the cubical complex of size at most $2^{O(d)}$. 
Hence, this step is also bounded by $2^{O(d)}|\ux|$ per scale, 
and hence bounded $2^{O(d)}M$ as well.

For the space complexity, the auxiliary data structure $S$ 
gets as large as $\ux$, which is clearly bounded by $M$.
For the output complexity, the number of contraction events is at most the 
number of inclusion events, because every contraction removes a vertex 
that has been included before.
The number of inclusion events is the size of the tower. 
The number of scale events as described is $O(\log\Delta+\log d)$. 
However, it is simple to get rid of this factor by only including scale events 
in the case that at least one inclusion or contraction
takes place at that scale. 
The space complexity bound follows.
\end{proof}

\subsection{Dimension reduction}
\label{subsection:bary_dimred}

When the ambient dimension $d$ is large, our approximation scheme can be 
combined with dimension reduction techniques to reduce the final complexity, 
very similar to the application in~\cite{ckr-polynomial-dcg}. 
For a set of $n$ points $P\subset \R^d$, we apply the dimension reduction
schemes of Johnson-Lindenstrauss~(JL)~\cite{jl-lemma},
Matou\v{s}ek~(MT)~\cite{mt-embedding}, 
and Bourgain's embedding~(BG)~\cite{bg-metric}.
We then compute the approximation on the lower-dimensional point set.
We only state the main results in Table~\ref{table:bary_dimred}, 
leaving out the proofs since they are very similar to those from~\cite{ckr-polynomial-dcg}.

\begin{table}[ht]
\centering
\begin{tabular}{| c | c | c | c |}
\hline
technique & approximation ratio & size & runtime	
\\ \hline
JL & $O(\log^{0.25} n)$ & 
$n^{O(\log k)}$ & $n^{O(1)}\log \Delta+n^{O(\log k)}$ 
\\ \hline
MT & $O((\log n)^{0.75} (\log\log n)^{0.25})$ &
$n^{O(1)}$ & $n^{O(1)}\log\Delta$ 
\\ \hline
BG + MT & $O((\log n)^{1.75} (\log\log n)^{0.25})$ & 
$n^{O(1)}$ & $n^{O(1)}\log\Delta$ 
\\ \hline
\end{tabular}
\caption{Comparison of dimension reduction techniques:
here the approximation ratio is for the Rips persistence module,
and the size refers to the size of the $k$-skeleton of the approximation.}
\label{table:bary_dimred}
\end{table}

\section{Approximation scheme with Cubical complexes}
\label{section:cubical_scheme}

We extend our approximation scheme to use cubical complexes
in place of simplicial complexes.
We start by detailing a few aspects of cubical complexes.

\subsection{Cubical Complexes}
\label{subsection:cubical_cpx}

We now briefly describe the concept of cubical complexes, essentially
expanding upon the contents of Subsection~\ref{subsection:cubical-basic}.
For a detailed overview of cubical homology, we refer to~\cite{kmm-cubical}.

\subsubsection{Definition}
We define cubical complexes over the grids $G_{\alpha_{s}}$.
For any fixed $\alpha_{s}$, the grids $G_{\alpha_{s}}$ defines 
a natural collection of cubes.
An \emph{elementary cube} $\gamma$ is a product of intervals
$\gamma=I_1\times I_2\times \ldots \times I_d$, where each interval is of the form
$I_j=(x_j,x_j+m_j)$, such that the vertex $(x_1,\ldots,x_m)\in G_{\alpha_{s}}$ 
and each $m_j$ is either $0$ or $\alpha_s$.
That means, an (elementary) cube is simply a face of a $d$-cube of the grid.
An interval $I_j$ is said to be \emph{degenerate} if $m_j=0$.
The dimension of $\gamma$ is the number of non-degenerate intervals that
defines it.
We define the boundary of any interval as the two degenerate intervals
that form its endpoints and 
denote this by $\partial(I_j)=(x_j,x_j) + (x_j+m_j,x_j+m_j)$.
Taking the boundary of any fixed subset of the intervals defining $\gamma$ 
consecutively gives a sum of faces of $\gamma$.
A \emph{cubical complex} of $G_{\alpha_{s}}$ is a finite collection of cubes 
of $G_{\alpha_{s}}$.

We define chain complexes for the cubical case in the same way
as in simplicial complexes.
The chain complexes are connected by boundary homomorphisms,
where the boundary of a cube is defined as:
\[
\partial(I_1\times\ldots\times I_d) = 
(\partial(I_1)\times I_2 \times \ldots \times I_d) + \ldots + 
(I_1\times \ldots \times I_{d-1}\times \partial(I_d)),
\]
where $(I_1\times \ldots \times \partial(I_j)\times \ldots \times I_d)$ 
denotes the sum
\[
\Big(I_1\times \ldots \times (x_i,x_i) \times \ldots \times I_d\Big)+ 
\Big(I_1\times \ldots \times (x_i+m_i,x_i+m_i) \times \ldots \times I_d\Big).
\]
It can be quickly verified that for each cube $\gamma$, 
$\partial\circ \partial(\gamma) =0$ since each term appears twice 
in the expression and the addition is over $\Z_2$.

\subsubsection{Cubical maps and induced homology}

Let $T_{\alpha_{s}}$ and $T_{\alpha_{t}}$ denote the cubical complexes 
defined by the grids $G_{\alpha_{s}}$ and $G_{\alpha_{t}}$, respectively, 
for $s\le t$.
We use the vertex map $g:G_{\alpha_{s}}\rightarrow G_{\alpha_{t}}$ to define 
a map between the cubical complexes.
Note that if $(a,b)$ are vertices of a cube of $T_{\alpha_{s}}$ that differ 
in one coordinate, then $(g(a),g(b))$ are vertices of a cube of 
$T_{\alpha_{t}}$ that differ in at most one coordinate.
A \emph{cubical map} is a map $f:T_{\alpha_{s}}\rightarrow T_{\alpha_{t}}$ 
defined using $g$, such that for each cube 
$\gamma=[a_1,b_1]\times\ldots\times[a_d,b_d]$ of $T_{\alpha_{s}}$,
$f(\gamma):=[g(a_1),g(b_1)]\times\ldots\times[g(a_d),g(b_d)]$ spans a cube
of $T_{\alpha_{t}}$.
The cubical map can also be restricted to sub-complexes of $T_{\alpha_{s}}$ 
and $T_{\alpha_{t}}$, provided that the image $f(\gamma)$ is well-defined.

Each cubical map also defines a corresponding continuous map 
between the underlying spaces of the respective complexes.
Let $x\in |\gamma|$ be a point in $\gamma$.
Then, the coordinates of $x$ can be uniquely written as 
$x=[\lambda_1 a_1 + (1-\lambda_1) b_1,\ldots, \lambda_d a_d + (1-\lambda_d) b_d]$
where each $\lambda_i\in [0,1]$.
The image of $x$ under the continuous extension of $f$ is the point
$[\lambda_1 g(a_1) + (1-\lambda_1) g(b_1),\ldots, \lambda_d g(a_d) + (1-\lambda_d) g(b_d)]$ in the cube $g(\gamma)$.

The cubical map $f$ gives rise to a chain map $f_\#:C_p(T_{\alpha_{s}})
\rightarrow C_p(T_{\alpha_{t}})$
between the $p$-th chain groups of the complexes, for each $p\in[0,\ldots,d]$. 
For each cube $\gamma$, $f_\#(\gamma)=f(\gamma)$ if $dim(\gamma)=dim(f(\gamma))$
and $0$ otherwise.
For any chain $c=\sum_i \gamma_i$, the chain map is defined linearly 
$f_\#(c)=\sum_i f_\#(\gamma_i)$.
It is simple to verify that $\partial\circ f_\# = f_\#\circ \partial$, so 
this gives a homomorphism between the chain groups.

Moving to the homology level, we get the respective homology groups 
$H(T_{\alpha_{s}})$ and $H(T_{\alpha_{t}})$ and the chain map from above 
induces a linear map between them.
The concept of reduced homology and augmentation maps is also
applicable to the cubical chain complexes.
For a sequence of cubical complexes connected with cubical maps, this 
generates a persistence module.

Cubical filtrations and towers are defined in a similar manner to the 
simplicial case.
A \emph{cubical filtration} is a collection of cubical complexes
$(T_\alpha)_{\alpha\in I}$ such that $T_\alpha\subseteq T_\alpha'$ 
for all $\alpha\leq\alpha'\in I$. 
A (cubical) tower is a sequence $(T_\alpha)_{\alpha\in J}$ of cubical 
complexes with $J$ being an index set together with cubical maps 
between complexes at consecutive scales.
A cubical tower can be written as a sequence of inclusions and contractions,
where an inclusion refers to the addition of a cube and a contraction
refers to collapsing a cube along a coordinate direction to either of the
endpoints of the interval.

\subsection{Description}
\label{subsection:cubical_description}

We choose the simplest possible cubical complex to define our approximation
cubical tower: for each scale $\alpha_{s}$, we define the cubical complex
$U_{\alpha_{s}}$ as the 
set of active faces and secondary faces spanned by $V_{\alpha_{s}}$.
Hence the cubical complex is closed under taking faces and is well-defined.
See Figure~\ref{figure:bary_cpx_2} for a simple example.

Recall from Section~\ref{section:simplicial_scheme} that for each $s\in\Z$, 
$U_{\alpha_{s}}$ and $U_{\alpha_{s+1}}$ are related by a cubical map 
$g_{\alpha_{s}}$, which gives rise to the cubical tower
\[
(U_{\alpha_{s}})_{s\in\Z}.
\]
We extend this to a tower $(U_{\alpha})_{\alpha\ge 0}$ by using techniques
from Appendix~\ref{subsection:appendix-strong}.
In Section~\ref{section:simplicial_scheme} we saw that the tower
$(\ux_{\alpha})_{\alpha\ge 0}$ gives an approximation to the Rips filtration.
The relation between the simplicial and cubical towers is trivial:
$\ux_{\alpha_{s}}$ is simply a triangulation of $|U_{\alpha_{s}}|$. 
Hence $\ux_{\alpha_{s}}$ and $U_{\alpha_{s}}$ have the same 
homology~\cite{munkres}.
Moreover, the simplicial map is derived from an application of the cubical map.
In particular, the continuous versions of both maps are the same.
For any $0\le \alpha \le \beta $, let
\begin{itemize}
\item $f_1:H_\ast(U_{\alpha})\rightarrow H_\ast(U_{\beta})$ denote the homomorphism
induced by the cubical map, 

\item $f_2:H_\ast(\ux_{\alpha})\rightarrow H_\ast(\ux_{\beta})$ denote the homomorphism
induced by the simplicial map, and

\item $f_0:H_\ast(|\ux_{\alpha}|=|U_{\alpha}|)\rightarrow 
H_\ast(|\ux_{\beta}|=|U_{\beta}|)$ denote 
the homomorphism induced by the common continuous map.
\end{itemize}
It is well-established that $f_1=f_0$~\cite[Chapter.~6]{kmm-cubical}
and $f_2=f_0$~\cite[Chapter.~2]{munkres}.
Therefore, we conclude that the persistence modules 
$\big(H(U_{\alpha})\big)_{\alpha\ge 0}$ 
and $\big(H(\ux_{\alpha})\big)_{\alpha\ge 0}$ are persistence-equivalent.
Combining this observation with the result of 
Theorem~\ref{theorem:bary_rips_ratio}, we get

\begin{theorem}
\label{theorem:cubical}
The scaled persistence modules 
\begin{itemize}
\item $\big(H(U_{2\alpha})\big)_{\alpha\ge 0}$ 
and the $L_\infty$-Rips  
module $\big(H(\rin_\alpha)\big)_{\alpha\ge 0}$
are $2$-approximations of each other, and 

\item $\big(H(U_{2\sqrt[4]{d}\alpha})\big)_{\alpha\ge 0}$ and the Rips 
module $\big(H(\ri_{\alpha})\big)_{\alpha\ge 0}$
$2d^{0.25}$-approximate each other.
\end{itemize}
\end{theorem}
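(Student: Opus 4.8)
The plan is to deduce Theorem~\ref{theorem:cubical} directly from the simplicial approximation results by transporting them along the persistence-equivalence between the cubical and simplicial towers that was just established. I would organize the argument in three steps.

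First I would recall and make precise the persistence-equivalence $\big(H(U_\alpha)\big)_{\alpha\ge 0}\cong\big(H(\ux_\alpha)\big)_{\alpha\ge 0}$. For each fixed scale $\alpha$ there are canonical isomorphisms $H(U_\alpha)\xrightarrow{\sim}H(|U_\alpha|)$ and $H(\ux_\alpha)\xrightarrow{\sim}H(|\ux_\alpha|)$ relating cubical, respectively simplicial, homology to the singular homology of the underlying space; composing them, and using $|U_\alpha|=|\ux_\alpha|$, yields an isomorphism $f_\alpha:H(U_\alpha)\to H(\ux_\alpha)$. The content is that the squares of Diagram~\eqref{equation:persisistence_equiv} commute, i.e.\ that $f$ is natural with respect to the structure maps; this holds because the cubical map $g_{\alpha_s}$ and the simplicial map $\tg_{\alpha_s}$ induce one and the same continuous map on $|U_{\alpha_s}|=|\ux_{\alpha_s}|$ (defined by affine interpolation through the common vertex map $g$), so both induced maps correspond to $\inc_\ast$ on singular homology. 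This is precisely the combination of the facts $f_1=f_0$ and $f_2=f_0$ cited in the text preceding the theorem.

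Second I would observe that persistence-equivalence is a strong relation: isomorphic persistence modules have identical barcodes, and, crucially here, an isomorphism of persistence modules carries a (strong) $c$-interleaving of one module with a third module to a (strong) $c$-interleaving of the other, by pre- or post-composing the interleaving maps with the $f_\alpha$ or their inverses; the enlarged diagram still commutes. Hence any $c$-approximation statement valid for $\big(H(\ux_\alpha)\big)_{\alpha\ge 0}$ transfers verbatim, and without any loss in the constant, to $\big(H(U_\alpha)\big)_{\alpha\ge 0}$ after the same rescaling of the index. Concretely, for the first bullet I would apply Theorem~\ref{theorem:bary_irips_ratio}, which states that $\big(H(\ux_{2\alpha})\big)_{\alpha\ge 0}$ and $\big(H(\rin_\alpha)\big)_{\alpha\ge 0}$ are $2$-approximations of each other, and transport along $f$ applied at scale $2\alpha$; for the second bullet I would apply Theorem~\ref{theorem:bary_rips_ratio}, which gives that $\big(H(\ux_{2\sqrt[4]{d}\alpha})\big)_{\alpha\ge 0}$ and $\big(H(\ri_\alpha)\big)_{\alpha\ge 0}$ are $2d^{0.25}$-approximations of each other, and transport along $f$ applied at scale $2\sqrt[4]{d}\alpha$.

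The main obstacle is entirely concentrated in the first step: establishing that the chosen isomorphisms $f_\alpha$ are natural, i.e.\ that the comparison between cubical and simplicial homology is compatible with the two families of maps. This is not a computation but a bookkeeping exercise across three homology theories (cubical, simplicial, singular) and two comparison natural transformations; the substantive inputs are the standard naturality of the simplicial-to-singular comparison~\cite{munkres} and of the cubical-to-singular comparison~\cite{kmm-cubical}, together with the observation that the relevant continuous maps literally coincide. Once naturality is in hand the remainder is formal: isomorphic persistence modules have the same barcode, interleavings descend through isomorphisms without change of constant, and the two bullets follow from the corresponding simplicial theorems.
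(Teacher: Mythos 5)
Your proposal matches the paper's own proof: both establish persistence-equivalence of $\big(H(U_\alpha)\big)$ and $\big(H(\ux_\alpha)\big)$ by routing the comparison through singular homology of the common underlying space, using the fact that the cubical map $g$ and the simplicial map $\tg$ induce the same continuous map there (i.e., $f_1=f_0=f_2$), and then transport the approximation guarantees from Theorems~\ref{theorem:bary_irips_ratio} and~\ref{theorem:bary_rips_ratio}. If anything, you are slightly more careful than the paper, which cites only Theorem~\ref{theorem:bary_rips_ratio} even though the first bullet really rests on Theorem~\ref{theorem:bary_irips_ratio}.
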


To compute the cubical tower, we simply re-use the algorithm
for the simplicial case, with small changes:
\begin{itemize}
\item In the simplicial case, we used a container $S$ to hold the simplices
from the previous scale.
We alter $S$ to store the cubes from the previous scale.
For each interval, we store an id and its coordinates.
Each cube is stored as the set of ids of the intervals that define it.

\item At each scale, we enumerate the image of the cubical map by
computing the image of each interval, and then use this pre-computed
map to compute the image of $(\ge 1)$-dimensional cubes.

\item For the inclusions, we find all the active and secondary faces but do 
not compute the simplices.
The inclusions in the cubical tower correspond exactly to the inclusions
of active and secondary faces in the simplicial tower, 
so this enumerates all inclusions correctly.
\end{itemize}
From Lemma~\ref{lemma:num_activefaces} at most $n3^d$ active faces 
are added to the tower.
Hence at most $n3^d3^d=n6^d$ active and secondary faces are added to the tower.
Computing the tower takes time as in Theorem~\ref{theorem:algo1_bary}
by replacing $M$ with the size bound.
We conclude that:

\begin{theorem}
\label{theorem:cubical-comp}
The cubical tower has size at most $n6^{d}$ and takes at most $n6^{d}\log \Delta$ time 
in expectation to compute, where $\Delta$ is the spread of the point set.
\end{theorem}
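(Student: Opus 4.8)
The plan is to establish the two claims—the size bound $n6^d$ and the expected running time $n6^d\log\Delta$—by directly transferring the accounting already done for the simplicial tower, since the cubical tower is obtained by discarding the barycentric subdivisions and keeping only the underlying active and secondary faces.

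First I would pin down the size bound. By Lemma~\ref{lemma:num_activefaces}, the number of active faces included over all scales of the tower (equivalently, active faces without preimage under the composed cubical maps $g$) is at most $n3^d$. The cubes of $U_{\alpha_s}$ are exactly the active faces together with their subfaces, i.e.\ the secondary faces. Each active face is a face of some $d$-cube of $\square_{\alpha_s}$, hence has at most $3^d$ faces of its own (a $d$-cube has $3^d$ faces in total, so any face of it has at most that many). Therefore every newly included active face drags in at most $3^d$ cubes into the cubical tower, and the total number of cube inclusions is at most $n3^d\cdot 3^d = n6^d$. One small point to check here: a secondary face could be a subface of several distinct active faces, but that only helps—it is counted (and included) at most once—so the bound $n6^d$ is safe. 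Also the contractions of the cubical tower do not count towards its size (size counts inclusions only, matching the convention fixed in the preliminaries), and in any case each contraction removes a cube that was included earlier, so their number is no larger.

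Next I would handle the running time. The claim is that one re-runs the algorithm of Theorem~\ref{theorem:algo1_bary} with the three stated modifications: store intervals and cubes instead of simplices in the auxiliary container $S$; compute the image of the cubical map interval-by-interval and then on higher-dimensional cubes; and for inclusions enumerate only active and secondary faces without materializing any simplices. The analysis in the proof of Theorem~\ref{theorem:algo1_bary} is essentially unchanged: computing the image of a vertex (or interval) under $g$ costs $2^{O(d)}$, there are $n2^{O(d)}$ vertices/active faces altogether in expectation (Lemma~\ref{lemma:num_activefaces}), and on each scale the per-vertex work of iterating over the $\le 3^d$ cofaces in the cubical grid, and over the $\le 2^{O(d)}$ faces of each new active face to list secondary faces, costs $n2^{O(d)}$. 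Over the $O(\log\Delta+\log d)$ relevant scales this is $n2^{O(d)}\log\Delta$. Crucially, since we never build the barycentric subdivision, there is no $\sum_\alpha|\ux_\alpha|$ term and no appeal to the expected-lifetime Lemma~\ref{lemma:bary_g_survival}; instead the term $2^{O(d)}M$ from Theorem~\ref{theorem:algo1_bary} is replaced by $2^{O(d)}\cdot(\text{size of the cubical tower})$, and by the first part the size is at most $n6^d$. Absorbing the $2^{O(d)}$ factors into the $6^d$ base (and noting $\log d = O(\log\Delta)$ in the regime of interest, or simply writing $n6^d\log\Delta$ as the asymptotic bound) gives the stated $n6^d\log\Delta$ expected time.

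The main obstacle is the size-counting step: one must argue carefully that the $3^d$-per-active-face multiplier is correct and not an over- or under-count. The subtle direction is the possibility that a secondary face is shared among many active faces with different "birth scales" under the cubical maps, or that under the contraction maps $g$ an active face survives several scales while its subfaces change status; I would resolve this exactly as in Lemma~\ref{lemma:num_activefaces}, by charging each included cube to a representative active face (the minimal active face having it as a subface, broken by the global order $\succ$), noting that representatives of images under $g$ are among the representatives of preimages, so the number of distinct representative-active-faces over the whole tower is bounded by $n3^d$, each contributing at most $3^d$ cubes. Everything else is bookkeeping that mirrors the simplicial proofs verbatim.
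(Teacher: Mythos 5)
Your proof follows exactly the same route as the paper: invoke Lemma~\ref{lemma:num_activefaces} for the $n3^d$ bound on included active faces, multiply by the number of subfaces of a $d$-cube to account for secondary faces, and then replace $M$ in Theorem~\ref{theorem:algo1_bary} by the cubical size bound while dropping the barycentric bookkeeping. Your elaboration of the charging argument for secondary faces and your observation that the $\sum_\alpha|\ux_\alpha|$ term (and hence Lemma~\ref{lemma:bary_g_survival}) is no longer needed are both sound refinements of what the paper states only tersely.

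One thing worth flagging, however: in the final size computation you write $n3^d\cdot 3^d = n6^d$, and the paper does the same. But $3^d\cdot 3^d = 9^d$, not $6^d$; to obtain $6^d$ one would need a multiplier of $2^d$, not $3^d$, for the number of subfaces per active face, and a $k$-cube has $3^k$ faces (not $2^k$). So either the constant in the theorem statement should read $n9^d$, or a tighter count of secondary faces per active face is needed; you have inherited the paper's slip rather than caught it. This does not affect the overall structure of the argument or the asymptotic $n2^{O(d)}$ flavor of the bound, but since the theorem is stated without big-$O$, the constant $6^d$ as written is not justified by the argument (yours or the paper's).
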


\section{Discussion}
\label{section:conclusion}

\subsection{Practicality}
\label{subsec:prac}

We now touch upon the practical aspects of our constructions.
An implementation of our approximation scheme would be a tool that
computes the (approximate) persistence barcode for any input data set.
For any scheme to be useful in practice, it should be able to compute
sufficiently close approximations using a reasonable amount of resources.

Our cubical tower consists of cubical complexes connected via cubical maps.
To our knowledge, there are no algorithms to compute barcodes
in this setting where the cubical maps are more than just trivial inclusions.
As such, although our cubical scheme has exponentially lower theoretical
guarantees compared to the simplicial tower, we can not hope to test it
in practice unless the appropriate primitives are available.
It could be an interesting research direction to develop this primitive
and in particular investigate whether the techniques used in 
computing persistence barcodes for a simplicial tower allow a generalization
to the cubical case.

It makes more sense to inspect the simplicial tower.
We saw in Theorem~\ref{theorem:towersize} that the size of the tower is 
$n6^{d-1}(2k+4)(k+3)! \left\{\begin{array}{c}d\\k+2\end{array}\right\}$.
Unfortunately, this bound is already too large so that the storage
requirement of the Algorithm (Theorem~\ref{theorem:cubical-comp}) 
explodes exponentially.
Let us assume a conservative bound of 1 Byte of memory requirement per simplex.
For a point set in $d=8$ dimensions and $k=4$, the complexity
bound is already at least $4000$ Terabytes, before factoring in $n$.
For a point set in $d=10$ dimensions and $k=5$, this explodes 
to $10^{20}$ Terabytes.
While these are upper bounds, in practice the complexity will still need to be many
orders of magnitude smaller to be feasile, which is unlikely.
Even with conservative estimates our storage requirement is impractical.

Therefore we are not very hopeful that implementing the scheme in its
current state will provide any useful insight for high dimensional approximations.
Making it implementation-worthy demands more optimizations and tools at the algorithmic level.
This is worth another Algorithmic engineering project in its own right.
We plan to pursue this line of research in the future.
Since our focus in this paper was geared towards theoretical aspects
of approximations, we exclude experimental results in the current work.
We hope that a more careful implementation-focussed approach 
may prove more practical.

On the other hand, the upper bound for the cubical case is simply $n6^{d}$.
Even for $d=10$, the storage requirement would be less than $100$ Megabytes 
before factoring in $n$.
This is far more attractive than the simplicial case.
As such, it may make more sense to invest time and effort in developing tools
to compute barcodes in the cubical setup.

\subsection{Summary}
\label{subsec:outlook}

We presented an approximation scheme for the Rips filtration, 
with improved approximation ratio, size and computational complexity 
than previous approaches for the case of high-dimensional point clouds.
In particular, we are able to achieve a marked reduction in the size
of the approximation by using cubical complexes in place of simplicial
complexes.
This is in contrast to all other previous approaches that used simplicial
complexes as approximating structures.

An important technique that we used in our scheme is the application of 
acyclic carriers to prove interleaving results. 
An alternative would to be explicitly construct chain maps between
the Rips and the approximation towers; unfortunately, this make
the interleaving analysis significantly more complex.
While the proof of the interleaving in
Section~\ref{subsection:bary_interleaving} is still technically challenging,
it greatly simplifies by the usage of acyclic carriers.
There is also no benefit in knowing the interleaving maps 
because they are only required for the analysis of the interleaving, 
and not for the actual computation of the approximation tower.
We believe that this technique is of general interest for the construction
of approximations of cell complexes.

Our simplicial tower is connected by simplicial maps; there are 
(implemented) algorithms to compute the barcode of such 
towers~\cite{dfw-gic,ks-twr}.
It is also quite easy to adapt our tower construction to a streaming 
setting~\cite{ks-twr}, where the output list of events is passed to an 
output stream instead of being stored in memory.

\appendix

\section{Strong Interleaving for Barycentric scheme}
\label{section:appendix-missing}
\label{subsection:appendix-strong}

Recall that we build the approximation tower over the set of scales
$I:=\{\alpha_s = 2^s \mid s\in \Z\}$.
The tower $(\ux_{\alpha})_{\alpha\in I}$ connected with the simplicial
map $\tg$ can be extended to the set of scales $\{\alpha\ge 0\}$ with
simple modifications:
\begin{itemize}
\item for $\alpha\in I$, we define $\ux_\alpha$ in the usual manner.
The map $\tg$ stays the same as before for complexes at such scales.

\item for all $\alpha \in [\alpha_s,\alpha_{s+1}) $, we set 
$\ux_\alpha=\ux_{\alpha_s}$, for any $\alpha_s\in I$.
That means, the complex stays the same in the interval between any
two scales of $I$, so we define $\tg$ as the identity within this interval.
\end{itemize}
These give rise to the tower $(\ux_\alpha)_{\alpha\ge 0}$,
that is connected with the simplicial map $\tg$.
This modification helps in improving the interleaving with the 
Rips persistence module.

First, we extend the acyclic carriers $C_1$ and $C_2$ from before 
to the new case:
\begin{itemize}
\item $C_1^\alpha:\rin_{\alpha}\rightarrow \ux_{4\alpha},\alpha>0$:
we define $C_1$ as before, simply changing the scales in the definition.
It is straightforward to see that $C_1$ is still a well-defined acyclic carrier.

\item $C_2^\alpha:\ux_{\alpha} \rightarrow \rin_{\alpha},\alpha\ge 0$: this stays
the same as before. 
It is simple to check that $C_2$ is still a well-defined acyclic carrier.
\end{itemize}
These give rise to augmentation-preserving chain maps between 
the chain complexes:
\[
c_1^\alpha: \ch_\ast(\rin_{\alpha}) \rightarrow \ch_\ast(\ux_{4\alpha}) 
\qquad \text{and} \qquad 
c_2^\alpha: \ch_\ast(\ux_{\alpha}) \rightarrow \ch_\ast(\rin_{\alpha}),
\]
using the acyclic carrier theorem as before (Theorem~\ref{theorem:acyclic_carrier}).

\begin{lemma}
\label{lemma:bary_strong_1}
The diagram 
\begin{equation}
\label{equation:bary_strong_1}
\xymatrix{
	\ch_\ast(\rin_{\alpha}) \ar[r]^{inc} \ar[rd]^{c_1} 
	& \ch_\ast(\rin_{\alpha'}) \ar[rd]^{c_1}
	\\
	& \ch_\ast(\ux_{4\alpha}) \ar[r]^{\tg}
	& \ch_\ast(\ux_{4\alpha'}) 
}
\end{equation}
commutes on the homology level, for all $0\le \alpha\le \alpha'$.
\end{lemma}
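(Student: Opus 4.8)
The plan is to apply the acyclic carrier theorem (Theorem~\ref{theorem:acyclic_carrier}), exactly in the manner of Lemmas~\ref{lemma:bary_inf_intlv_low} and~\ref{lemma:bary_inf_intlv_up}. The two ways around Diagram~\eqref{equation:bary_strong_1} are augmentation-preserving chain maps $\ch_\ast(\rin_\alpha)\rightarrow\ch_\ast(\ux_{4\alpha'})$: the maps $c_1^\alpha,c_1^{\alpha'}$ are augmentation-preserving by construction, and $inc$, $\tg$ preserve the augmentation since they are the chain maps induced by a simplex inclusion and a simplicial map. Hence it suffices to exhibit a single acyclic carrier $D\colon\rin_\alpha\rightarrow\ux_{4\alpha'}$ that carries both composites $\tg\circ c_1^\alpha$ and $c_1^{\alpha'}\circ inc$; the second part of Theorem~\ref{theorem:acyclic_carrier} then yields $(\tg\circ c_1^\alpha)^\ast=(c_1^{\alpha'}\circ inc)^\ast$, which is exactly the claim.

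For the carrier I would take $D:=\tg\circ C_1^\alpha$, that is, $D(\sigma):=\tg(C_1^\alpha(\sigma))$ for every simplex $\sigma=(p_0,\dots,p_k)\in\rin_\alpha$. Let $\alpha_t$ and $\alpha_u$ be the grid scales underlying $\ux_{4\alpha}$ and $\ux_{4\alpha'}$, and put $U:=\{a_{\alpha_t}(p_0),\dots,a_{\alpha_t}(p_k)\}$. Since $\dmn(\sigma)\le 2\alpha<\alpha_t$ (this inequality is precisely why the carrier $C_1$ is indexed by $\ux_{4\alpha}$ and not, say, $\ux_{2\alpha}$), Lemma~\ref{lemma:bary_iripscell} shows $U$ lies in a single face of $\square_{\alpha_t}$; let $f$ be the smallest such face, which is active since it is spanned by $U\subseteq V_{\alpha_t}$ (by minimality $U$ lies in no facet of $f$). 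Then $C_1^\alpha(\sigma)$ equals the barycentric subdivision $sd(f)$, so $D(\sigma)=\tg(sd(f))=sd(g^{(j)}(f))$, which by Lemmas~\ref{lemma:bary_activeimage} and~\ref{lemma:gcell} is the barycentric subdivision of a single active face of $\square_{\alpha_u}$ and hence acyclic. Monotonicity of $D$ under sub-simplices follows from that of $C_1^\alpha$ and from $\tg$ mapping subcomplexes to subcomplexes, so $D$ is an acyclic carrier.

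It remains to check that $D$ carries both composites. For $\tg\circ c_1^\alpha$ this is immediate: $c_1^\alpha(\sigma)$ is carried by $C_1^\alpha(\sigma)$, hence $\tg(c_1^\alpha(\sigma))$ is carried by $\tg(C_1^\alpha(\sigma))=D(\sigma)$. For $c_1^{\alpha'}\circ inc$, observe that $c_1^{\alpha'}(inc(\sigma))=c_1^{\alpha'}(\sigma)$ is carried by $C_1^{\alpha'}(\sigma)$, the barycentric span of $U':=\{a_{\alpha_u}(p_0),\dots,a_{\alpha_u}(p_k)\}$. The nesting of Voronoi cells across consecutive scales (Lemmas~\ref{lemma:vorcontain} and~\ref{lemma:bary_gcompose}) gives $g^{(j)}(a_{\alpha_t}(p_i))=a_{\alpha_u}(p_i)$ for every $i$, so $U'=g^{(j)}(U)$ is contained in the vertex set of the face $g^{(j)}(f)$; consequently the smallest face of $\square_{\alpha_u}$ containing $U'$ is a sub-face of $g^{(j)}(f)$, and $C_1^{\alpha'}(\sigma)\subseteq sd(g^{(j)}(f))=D(\sigma)$. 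Thus both composites are carried by $D$, and Theorem~\ref{theorem:acyclic_carrier} completes the argument.

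The main obstacle is purely organizational: keeping the scale bookkeeping of the extended tower straight --- which grid underlies $\ux_{4\alpha}$ versus $\ux_{4\alpha'}$ (the complexes are constant along dyadic intervals), why the diameter bound $2\alpha$ really stays below $\alpha_t$, and how the identity $g^{(j)}\circ a_{\alpha_t}=a_{\alpha_u}$ is assembled from the single-step statement of Lemma~\ref{lemma:vorcontain}. Once these routine points are settled, the proof is a direct transcription of the one for Lemma~\ref{lemma:bary_inf_intlv_low}.
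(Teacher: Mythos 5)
Your proof is correct and takes essentially the same approach as the paper: apply the acyclic carrier theorem to a single acyclic carrier $\rin_\alpha\to\ux_{4\alpha'}$ that carries both composites. The paper's one-line proof takes $C_1^{\alpha'}\circ\inc$ as the carrier and leaves the verification as ``simple''; you instead take $\tg\circ C_1^\alpha$ and carry out the verification in detail. The two choices in fact produce the same subcomplex $sd\bigl(g^{(j)}(f)\bigr)$ on each simplex, as your own computation of $C_1^{\alpha'}(\sigma)\subseteq D(\sigma)$ (which is actually an equality, by the spanning argument) makes visible, so this is a superficial difference in parametrization rather than a distinct route.
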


\begin{proof}
Consider the acyclic carrier 
$C_1\circ inc:\rin_{\alpha} \rightarrow \ux_{4\alpha'} $.
It is simple to verify that this carrier carries both
$c_1\circ inc$ and $\tg \circ c_1$, so the induced diagram on the 
homology groups commutes, from Theorem~\ref{theorem:acyclic_carrier}.
\end{proof}

\begin{lemma}
\label{lemma:bary_strong_2}
The diagram 
\begin{equation}
\label{equation:bary_strong_2}
\xymatrix{
	&\ch_\ast(\rin_{\alpha}) \ar[r]^{inc} 
	& \ch_\ast(\rin_{\alpha'}) 
	\\
	\ch_\ast(\ux_{\alpha}) \ar[r]^{\tg} \ar[ru]^{c_2}
	& \ch_\ast(\ux_{\alpha'}) \ar[ru]^{c_2}
}
\end{equation}
commutes on the homology level, for all $0\le \alpha\le \alpha'$.
\end{lemma}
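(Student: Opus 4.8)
The plan is to follow the template of Lemma~\ref{lemma:bary_inf_intlv_up} and Lemma~\ref{lemma:bary_strong_1}: construct a single acyclic carrier $D:\ux_{\alpha}\rightarrow\rin_{\alpha'}$ that carries both of the augmentation-preserving chain maps $inc\circ c_2$ and $c_2\circ\tg$ (from $\ch_\ast(\ux_{\alpha})$ to $\ch_\ast(\rin_{\alpha'})$), and then invoke the uniqueness part of Theorem~\ref{theorem:acyclic_carrier} to conclude $inc^\ast\circ c_2^\ast=c_2^\ast\circ\tg^\ast$. First I would reduce to dyadic scales: since in the tower of this appendix $\ux_{\alpha}=\ux_{\alpha_s}$ and $\ux_{\alpha'}=\ux_{\alpha_t}$, with $\alpha_s\le\alpha$ and $\alpha_t\le\alpha'$ the largest elements of $I$ not exceeding $\alpha$ and $\alpha'$, and the carrier $C_2^\alpha$ is just $C_2^{\alpha_s}$ composed with the inclusion $\rin_{\alpha_s}\hookrightarrow\rin_{\alpha}$, the general diagram is obtained from the diagram at the dyadic scales $\alpha_s\le\alpha_t$ by adjoining a trivially commuting square of Rips inclusions. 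So it suffices to treat $\alpha=\alpha_s\le\alpha_t=\alpha'$, where $\tg$ is the $(t-s)$-fold composite induced by $g^{(t-s)}$.

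Next I would define $D$. For a flag $\sigma\in\ux_{\alpha_s}$, let $E\subseteq\square_{\alpha_s}$ be the smallest active face containing $\sigma$ (with ties broken as in the definition of $C_2$). Iterating Lemma~\ref{lemma:bary_activeimage}, $g^{(t-s)}(E)$ is an active face of $\square_{\alpha_t}$; set $D(\sigma)$ to be the simplex on $Q'(\sigma):=\{\,p\in P\mid a_{\alpha_t}(p)\text{ is a vertex of }g^{(t-s)}(E)\,\}$. To see that $D$ is an acyclic carrier: $g^{(t-s)}(E)$ is a face of a cube of side $\alpha_t$, hence has $\lin$-diameter at most $\alpha_t$, and each point of $Q'(\sigma)$ lies within $\alpha_t/2$ of a vertex of it, so by the triangle inequality $\dmn(Q'(\sigma))\le 2\alpha_t\le 2\alpha'$ and $D(\sigma)\in\rin_{\alpha'}$; moreover $Q'(\sigma)$ is non-empty because the active face $g^{(t-s)}(E)$ has an active vertex, and a simplex is acyclic; and if $\tau\subseteq\sigma$ then the minimal active face $E_\tau$ containing $\tau$ satisfies $E_\tau\subseteq E$, hence $g^{(t-s)}(E_\tau)\subseteq g^{(t-s)}(E)$ and $D(\tau)\subseteq D(\sigma)$.

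Then I would verify that $D$ carries both chain maps; the key input is that $a_{\alpha_t}(p)=g^{(t-s)}(a_{\alpha_s}(p))$ for every $p\in P$, which follows by iterating Lemma~\ref{lemma:vorcontain}. For $inc\circ c_2$: the chain $c_2(\sigma)$ is carried by $C_2(\sigma)$, the simplex on $Q_E:=\{p\mid a_{\alpha_s}(p)\in V(E)\}$; for $p\in Q_E$ the point $a_{\alpha_t}(p)=g^{(t-s)}(a_{\alpha_s}(p))$ is a vertex of $g^{(t-s)}(E)$, so $Q_E\subseteq Q'(\sigma)$ and $C_2(\sigma)\subseteq D(\sigma)$. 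For $c_2\circ\tg$: either $\tg(\sigma)=0$ and there is nothing to prove, or $\tg(\sigma)$ is a flag all of whose cube-faces are images under $g^{(t-s)}$ of faces of $E$, hence are contained in $g^{(t-s)}(E)$; consequently the smallest active face $E'$ containing $\tg(\sigma)$ satisfies $E'\subseteq g^{(t-s)}(E)$, so the point set $Q_{E'}$ defining $C_2(\tg(\sigma))$ is contained in $Q'(\sigma)$ and $c_2(\tg(\sigma))$ is carried by $D(\sigma)$. Since $inc\circ c_2$ and $c_2\circ\tg$ are augmentation-preserving and both carried by the acyclic carrier $D$, Theorem~\ref{theorem:acyclic_carrier} yields the desired identity on homology.

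I expect the only real subtlety, rather than an obstacle, to be the step $E'\subseteq g^{(t-s)}(E)$ together with the realization that the naive choice of carrier (the carrier $C_2\circ\tg$, i.e.\ the minimal active face of $\tg(\sigma)$) does \emph{not} work: it fails to carry $inc\circ c_2$, because the point set $Q_E$ carrying $c_2(\sigma)$ need not be contained in $Q_{E'}$. This is precisely why $D$ must be built from the whole image face $g^{(t-s)}(E)$. The remaining bookkeeping (the reduction to dyadic scales and the diameter estimate) is routine.
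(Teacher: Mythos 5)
Your proposal is correct and follows essentially the same route as the paper's proof: both define the acyclic carrier $D$ by taking the minimal active face $E$ of $\square_{\alpha}$ containing the flag $\sigma$ and setting $D(\sigma)$ to be the simplex on the points of $P$ whose nearest grid point at scale $\alpha'$ is a vertex of the image $g^{(t-s)}(E)$, then verify it carries both $inc\circ c_2$ and $c_2\circ \tg$. You have merely spelled out in more detail the reduction to dyadic scales, the diameter estimate, and the containment $E'\subseteq g^{(t-s)}(E)$ (and your parting remark that the naive carrier $C_2\circ\tg$ fails to carry $inc\circ c_2$ is accurate), but the substance of the argument is the one in the paper.
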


\begin{proof}
We construct an acyclic carrier $D: \ux_\alpha\rightarrow \rin_{\alpha'}$ 
which carries $inc\circ c_2$ and $c_2\circ \tg$, thereby proving
the claim (Theorem~\ref{theorem:acyclic_carrier}).

Consider any simplex $\sigma\in \ux_{\alpha}$ and let $E\in\square_{\alpha}$ 
be the minimal active face of containing $\sigma$.
We set $D(\sigma)$ as the simplex on the set of input points of $P$, 
which lie in the Voronoi regions of the vertices of $g(E)$.
By the triangle inequality, $D(\sigma)$ is a simplex of $\rin_{\alpha'}$,
so that $D$ is a well-defined acyclic carrier.
It is straightforward to verify that $D$ carries both $c_2\circ \tg$
and $inc\circ c_2$.
\end{proof}

\begin{lemma}
\label{lemma:bary_strong_3}
The diagram
\begin{equation}
\label{equation:bary_strong_3}
\xymatrix{
	&\ch_\ast(\rin_{\alpha}) \ar[r]^{inc} 
	& \ch_\ast(\rin_{\alpha'}) \ar[rd]^{c_1}
	\\
	\ch_\ast(\ux_{\alpha}) \ar[rrr]^{\tg} \ar[ru]^{c_2}
	& & & \ch_\ast(\ux_{4\alpha'}) 
}
\end{equation}
commutes on the homology level, for all $0\le \alpha\le \alpha'$.
\end{lemma}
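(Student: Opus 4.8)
The plan is to prove this the same way as Lemmas~\ref{lemma:bary_inf_intlv_low}, \ref{lemma:bary_strong_1}, and~\ref{lemma:bary_strong_2}: by a single application of the acyclic carrier theorem (Theorem~\ref{theorem:acyclic_carrier}). I will construct one acyclic carrier $D:\ux_\alpha\rightarrow\ux_{4\alpha'}$ that carries both the chain map $\tg$ and the composite $c_1\circ inc\circ c_2$. Since each of these is an augmentation-preserving chain map $\ch_\ast(\ux_\alpha)\rightarrow\ch_\ast(\ux_{4\alpha'})$ (a composite of simplicial-map-induced and carrier-induced augmentation-preserving chain maps is again one), the theorem forces the two induced linear maps on homology to agree, which is exactly the claimed commutativity.

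To build $D$: for a flag $\sigma\in\ux_\alpha$, let $E\in\square_\alpha$ be the minimal active face containing $\sigma$ (ties broken by the global order $\succ$, as for $C_2$), and let $g(E)$ denote the image of $E$ under the composition of cubical maps from scale $\alpha$ up to scale $4\alpha'$. Iterating Lemma~\ref{lemma:gcell} and Lemma~\ref{lemma:bary_activeimage}, $g(E)$ is an active face of $\square_{4\alpha'}$, so I set $D(\sigma):=sd(g(E))$, the full barycentric subdivision of $g(E)$; this is a nonempty acyclic subcomplex of $\ux_{4\alpha'}$ since it triangulates a cube. Because $\tau\subseteq\sigma$ implies $E_\tau\subseteq E_\sigma$ (the monotonicity already used for $C_2$), we get $D(\tau)\subseteq D(\sigma)$, so $D$ is an acyclic carrier. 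In fact $D=C_1\circ inc\circ C_2$: $C_2(\sigma)$ is the simplex on $Q(E):=\{p\in P\mid a_\alpha(p)\text{ is a vertex of }E\}$, which lies in $\rin_\alpha\subseteq\rin_{\alpha'}$ by the triangle inequality, and applying $C_1$ to it produces the barycentric span of $\{a_{4\alpha'}(p)\mid p\in Q(E)\}$; by Lemma~\ref{lemma:vorcontain} iterated (cf.\ Lemma~\ref{lemma:bary_gcompose}) this set is precisely the vertex set of $g(E)$, whose barycentric span is $sd(g(E))$.

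It remains to check that $D$ carries both maps. For $\tg$: writing $\sigma=(f_0\subseteq\cdots\subseteq f_k)$ with all $f_i\subseteq E$, functoriality gives $g(f_i)\subseteq g(E)$, so $\tg(\sigma)$ is a simplex of $sd(g(E))=D(\sigma)$ (or $0$). For $c_1\circ inc\circ c_2$: $c_2(\sigma)$ is carried by $C_2(\sigma)$, hence supported on simplices with vertices in $Q(E)$; $inc$ leaves this chain unchanged inside $\ch_\ast(\rin_{\alpha'})$; and for each such simplex $\tau$, $c_1(\tau)$ is carried by $C_1(\tau)$, a subcomplex of the barycentric span of $\{a_{4\alpha'}(p)\mid p\in\tau\}$, which is contained in the barycentric span of the vertices of $g(E)$, i.e.\ in $sd(g(E))=D(\sigma)$. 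Thus both chains are carried by $D$, finishing the argument.

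The main obstacle is purely bookkeeping with scales: one must confirm that $4\alpha'$ indexes a legitimate complex in the extended tower of Appendix~\ref{subsection:appendix-strong} and that $C_1^{\alpha'}$ indeed lands in $\ux_{4\alpha'}$, and one must invoke the iterated, multi-scale forms of Lemma~\ref{lemma:gcell}, Lemma~\ref{lemma:bary_activeimage}, and Lemma~\ref{lemma:vorcontain} rather than their one-step statements. This is exactly why it is cleaner to define $D$ directly as $sd(g(E))$ than to try to compose $C_1$, $inc$, and $C_2$ as abstract acyclic carriers, since a composition of acyclic carriers need not be acyclic in general.
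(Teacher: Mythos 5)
Your proof is correct and follows essentially the same route the paper takes: the paper's proof of this lemma simply points back to the proof of Lemma~\ref{lemma:bary_inf_intlv_low}, which constructs exactly the carrier $D(\sigma)=sd(g(E))$ for $E$ the minimal active face containing $\sigma$, shows it equals the composite carrier, and verifies it carries both $\tg$ and the composition. You have merely unfolded that reference, handled the insertion of $inc$, and spelled out the multi-scale bookkeeping explicitly.
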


\begin{proof}
The diagram is essentially the same as the lower triangle of 
Diagram~\ref{diagram:bary_inf_intlv}, with a change in the scales.
As a result, the proof of Lemma~\ref{lemma:bary_inf_intlv_low} also
applies for our claim directly.
\end{proof}

\begin{lemma}
\label{lemma:bary_strong_4}
The diagram 
\begin{equation}
\label{equation:bary_strong_4}
\xymatrix{
	\ch_\ast(\rin_{\alpha}) \ar[rrr]^{inc} \ar[rd]^{c_1} 
	& & &\ch_\ast(\rin_{4\alpha'}) 
	\\
	& \ch_\ast(\ux_{4\alpha}) \ar[r]^{\tg}
	& \ch_\ast(\ux_{4\alpha'}) \ar[ru]^{c_2}
}
\end{equation}
commutes on the homology level, for all $0\le \alpha\le \alpha'$.
\end{lemma}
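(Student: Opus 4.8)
The plan is to follow the template of Lemma~\ref{lemma:bary_inf_intlv_up}: I will exhibit an acyclic carrier $D:\rin_{\alpha}\rightarrow\rin_{4\alpha'}$ that carries both of the augmentation-preserving chain maps $inc$ and $c_2\circ\tg\circ c_1$, and then invoke the acyclic carrier theorem (Theorem~\ref{theorem:acyclic_carrier}) to conclude that $inc^\ast=c_2^\ast\circ\tg^\ast\circ c_1^\ast$, which is exactly the assertion that Diagram~\eqref{equation:bary_strong_4} commutes on homology.

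To build $D$, fix a simplex $\sigma=(p_0,\ldots,p_k)\in\rin_{\alpha}$. Since $\dmn(\sigma)\le 2\alpha\le 4\alpha$, Lemma~\ref{lemma:bary_iripscell} places $U:=\{a_{4\alpha}(p_0),\ldots,a_{4\alpha}(p_k)\}$ in a common face of $\square_{4\alpha}$; let $f_\sigma$ be the minimal such face. Being the smallest face containing the active vertices $U\subseteq V_{4\alpha}$, the face $f_\sigma$ is spanned by $U$, hence active, so by Lemma~\ref{lemma:bary_activeimage} (applied along the scales between $4\alpha$ and $4\alpha'$) its image $g(f_\sigma)$ is an active face of $\square_{4\alpha'}$. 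I define $D(\sigma)$ to be the simplex on the set of all points of $P$ whose closest grid point in $G_{4\alpha'}$ is a vertex of $g(f_\sigma)$. There are three things to check. First, $D$ is a well-defined acyclic carrier: each $D(\sigma)$ is a single, hence acyclic, simplex; it is nonempty because $a_{4\alpha'}(p_i)=g(a_{4\alpha}(p_i))$ is a vertex of $g(f_\sigma)$ by Lemma~\ref{lemma:vorcontain}, so that $p_i\in D(\sigma)$; its $L_\infty$-diameter is at most $4\alpha'+2\alpha'+2\alpha'=8\alpha'$ by the triangle inequality, so $D(\sigma)\in\rin_{4\alpha'}$; and for $\tau\subseteq\sigma$ the $a_{4\alpha}$-images of $\tau$'s vertices form a subset of $U$, whence $f_\tau\subseteq f_\sigma$, $g(f_\tau)\subseteq g(f_\sigma)$, and $D(\tau)\subseteq D(\sigma)$. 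Second, $D$ carries $inc$, since $inc(\sigma)=\sigma$ and every $p_i$ lies in $D(\sigma)$. Third, $D$ carries $c_2\circ\tg\circ c_1$: the chain $c_1(\sigma)$ is carried by $C_1^{4\alpha}(\sigma)$, the barycentric span of $U$, which is contained in $sd(f_\sigma)$; the simplicial map $\tg$ sends every flag inside $sd(f_\sigma)$ to a flag inside $sd(g(f_\sigma))$, so $\tg(c_1(\sigma))$ is carried by $sd(g(f_\sigma))$; and for every simplex $\rho$ of $sd(g(f_\sigma))$ the minimal active face of $\square_{4\alpha'}$ containing $\rho$ is a sub-face of the active face $g(f_\sigma)$, so the vertex set of $C_2^{4\alpha'}(\rho)$ is contained in that of $D(\sigma)$; hence $c_2(\tg(c_1(\sigma)))$ is carried by $D(\sigma)$.

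The step I expect to be the real obstacle is the last claim of the previous paragraph: that the minimal active face (in the tie-broken sense of the definition of $C_2$) containing a simplex of $sd(g(f_\sigma))$ is a sub-face of $g(f_\sigma)$, equivalently that its active vertices lie among those of $g(f_\sigma)$. Establishing this is the cubical-combinatorial heart of the proof, and it should combine the minimality in the choice of $f_\sigma$, the fact that $g$ carries active faces to active faces together with the facet correspondence of Lemma~\ref{lemma:gcell}, and the compatibility of the global order $\succ$ with passing to sub-faces. This is precisely the point argued in the proof of Lemma~\ref{lemma:bary_inf_intlv_up}; the only genuinely new element here is the scale bookkeeping, namely that $f_\sigma$ lives at scale $4\alpha$ and must be transported to scale $4\alpha'$ by $g$ before $C_2$ is applied. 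Everything else reduces to the triangle inequality and the formal machinery of acyclic carriers recorded in Theorem~\ref{theorem:acyclic_carrier}.
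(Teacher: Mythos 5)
Your proof is correct and is in substance the paper's own argument made explicit: the paper observes that replacing $c_1$ by $\tg\circ c_1$ turns Diagram~\eqref{equation:bary_strong_4} into the upper triangle of Diagram~\ref{diagram:bary_inf_intlv} with shifted scales, and then simply cites the proof of Lemma~\ref{lemma:bary_inf_intlv_up}. Your unfolding of that reduction --- building the carrier $D$ from the minimal spanned face $f_\sigma\in\square_{4\alpha}$ transported by $g$ to scale $4\alpha'$, and checking that it carries both $inc$ and $c_2\circ\tg\circ c_1$ --- is exactly what the paper's citation amounts to, and the residual subtlety you flag about the tie-broken minimal active face in $C_2$ is inherited unchanged from the paper's proof of Lemma~\ref{lemma:bary_inf_intlv_up} rather than being a gap you introduced.
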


\begin{proof}
The diagram can be re-interpreted as:
\begin{equation}
\xymatrix{
	\ch_\ast(\rin_{\alpha}) \ar[rr]^{inc} \ar[rd]^{\tg\circ c_1} 
	& & \ch_\ast(\rin_{4\alpha'}) 
	\\
	& \ch_\ast(\ux_{4\alpha'}) \ar[ru]^{c_2}
}
\end{equation}
The modified diagram is essentially the same as the upper triangle of 
Diagram~\ref{diagram:bary_inf_intlv}, with a change in the scales
and a replacement of $c_1$ with $\tg\circ c_1$, that is equivalent to
the chain map at the scale $\alpha'$.
Hence, the proof of Lemma~\ref{lemma:bary_inf_intlv_up} 
also applies for our claim directly.
\end{proof}

Using Lemmas~\ref{lemma:bary_strong_1}, \ref{lemma:bary_strong_2},
\ref{lemma:bary_strong_3}, \ref{lemma:bary_strong_4}, and the scale
balancing technique for strongly interleaved persistence modules, 
it follows that
\begin{lemma}
\label{lemma:bary_strong_full}
The persistence modules $\big(H(\ux_{2\alpha})\big)_{\alpha\ge 0}$ and 
$\big(H(\rin_\alpha)\big)_{\alpha\ge 0}$ are strongly $2$-interleaved.
\end{lemma}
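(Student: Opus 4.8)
The plan is to assemble Lemmas~\ref{lemma:bary_strong_1}--\ref{lemma:bary_strong_4} into a single strong interleaving between the \emph{unscaled} modules $\big(H(\ux_{\alpha})\big)_{\alpha\ge 0}$ and $\big(H(\rin_\alpha)\big)_{\alpha\ge 0}$ --- one in which the map toward $\ux$ multiplies the scale by $4$ while the map toward $\rin$ leaves the scale unchanged --- and then to apply the scale balancing trick for strongly interleaved persistence modules to turn this asymmetric interleaving into a symmetric $2$-interleaving of the appropriately rescaled module.

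First I would pass to homology. The augmentation-preserving chain maps $c_1^\alpha: \ch_\ast(\rin_\alpha)\to\ch_\ast(\ux_{4\alpha})$ and $c_2^\alpha: \ch_\ast(\ux_\alpha)\to\ch_\ast(\rin_\alpha)$ induce, on homology, linear maps $c_1^\ast: H(\rin_\alpha)\to H(\ux_{4\alpha})$ and $c_2^\ast: H(\ux_\alpha)\to H(\rin_\alpha)$, which by the second part of Theorem~\ref{theorem:acyclic_carrier} depend only on the carriers $C_1,C_2$ and not on the chain maps chosen to realise them. Setting $\V=\big(H(\ux_\alpha)\big)_{\alpha\ge 0}$ and $\W=\big(H(\rin_\alpha)\big)_{\alpha\ge 0}$, the map $c_2^\ast$ plays the role of $\phi: V_\alpha\to W_\alpha$ and $c_1^\ast$ that of $\psi: W_\alpha\to V_{4\alpha}$, i.e.\ we are in the situation of the scale balancing paragraph with $\eps_1=1$ and $\eps_2=4$.

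Next I would verify that Lemmas~\ref{lemma:bary_strong_1}--\ref{lemma:bary_strong_4} are precisely the four commuting diagrams required by Diagram~\eqref{diag:balance_strong_orig} in this case: Lemma~\ref{lemma:bary_strong_1} is the naturality of $c_1^\ast$ with respect to the tower maps $\tg^\ast$ and the Rips inclusions; Lemma~\ref{lemma:bary_strong_2} is the naturality of $c_2^\ast$; Lemma~\ref{lemma:bary_strong_3} says that $c_1^\ast$, post-composed with a Rips inclusion and pre-composed with $c_2^\ast$, equals the internal map of $\V$; and Lemma~\ref{lemma:bary_strong_4} says that $c_2^\ast$, composed appropriately with $c_1^\ast$ and a $\ux$-tower map, equals the internal map of $\W$. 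Taken together, these are exactly the data of a strong interleaving of $\V$ and $\W$ with parameters $\eps_1=1$, $\eps_2=4$. I would then invoke the scale balancing technique for strongly interleaved modules, passing from Diagram~\eqref{diag:balance_strong_orig} to Diagram~\eqref{diag:balance_strong_scaled} with $d=\sqrt{\eps_1\eps_2}=2$ and $c=\sqrt{\eps_1/\eps_2}=\tfrac12$. Reindexing $\V$ to $\V'$ via $V'_{c\alpha}:=V_\alpha$ --- equivalently $V'_\alpha=H(\ux_{2\alpha})$ --- makes Diagram~\eqref{diag:balance_strong_scaled} commute, which is exactly the assertion that $\big(H(\ux_{2\alpha})\big)_{\alpha\ge 0}$ and $\big(H(\rin_\alpha)\big)_{\alpha\ge 0}$ are strongly $2$-interleaved.

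I expect the one genuinely delicate step to be the bookkeeping in the previous paragraph: matching the explicit scale $\ux_{4\alpha}$ appearing in the four lemmas against the generic indices $\alpha\eps_1,\alpha\eps_2,\alpha\eps_1\eps_2$ of Diagram~\eqref{diag:balance_strong_orig}, and confirming that the reindexing direction yields $\ux_{2\alpha}$ rather than $\ux_{\alpha/2}$. The only other point worth double-checking is that the composite carriers underlying Lemmas~\ref{lemma:bary_strong_3} and \ref{lemma:bary_strong_4} really are acyclic carriers; but this was already established in the proofs of Lemmas~\ref{lemma:bary_inf_intlv_low} and \ref{lemma:bary_inf_intlv_up}, to which those lemmas defer.
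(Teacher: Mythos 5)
Your proposal is correct and follows exactly the paper's approach: the paper establishes a strong $(1,4)$-interleaving (in the sense of the scale-balancing paragraph with $\eps_1=1$, $\eps_2=4$) via Lemmas~\ref{lemma:bary_strong_1}--\ref{lemma:bary_strong_4}, then applies the scale-balancing technique with $d=\sqrt{\eps_1\eps_2}=2$ and reindexes $V'_\alpha=H(\ux_{2\alpha})$. You have merely spelled out the bookkeeping that the paper compresses into a single sentence, and the matching of the four lemmas to the four panels of Diagram~\eqref{diag:balance_strong_orig} and the computation of $c=1/2$, $d=2$ are all correct.
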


\end{document}